\declaretheorem[name=Lemma]{lemma}
\declaretheorem[name=Definition]{define}
\declaretheorem[name=Example]{example}
\renewcommand\thmcontinues[1]{continued}
\DeclareMathOperator*{\argmax}{argmax}
\newcommand*{\tran}{^{\mkern-1.5mu\mathsf{T}}}
\newcommand{\converge}{\xrightarrow{}}
\def\namedlabel#1#2{\begingroup
    #2%
    \def\@currentlabel{#2}%
    \phantomsection\label{#1}\endgroup
}
\newcommand{\boundeddet}[1]{O(#1)}
\newcommand{\fasterthandet}[1]{o(#1)}
\newcommand{\prob}{\mathrm{P}}
\newcommand{\s}[1]{\mathcal{#1}}
\renewcommand{\d}[1]{\mathbb{#1}} 
\newcommand{\sd}{\,\mathrm{d}}
\newcommand{\E}{\mathrm{E}}
\newcommand{\norm}[1]{\left\Vert #1 \right\Vert}
\newcommand{\abs}[1]{\left\vert#1\right\vert}
\title{Nonparametric Assessment of Variable Selection \\and Ranking Algorithms}
\author{Zhou Tang \\ zhoutang@umass.edu \and Ted Westling \\ twestling@umass.edu}
\date{Department of Mathematics and Statistics \\
University of Massachusetts Amherst \\
\vspace{1em}
\today}
\begin{document}
\maketitle
\begin{abstract}
Selecting from or ranking a set of candidates variables in terms of their capacity for predicting an outcome of interest is an important task in many scientific fields. A variety of methods for variable selection and ranking have been proposed in the literature. In practice, it can be challenging to know which method is most appropriate for a given dataset. In this article, we propose methods of comparing variable selection and ranking algorithms. We first introduce measures of the quality of variable selection and ranking algorithms. We then define estimators of our proposed measures, and establish asymptotic results for our estimators in the regime where the dimension of the covariates is fixed as the sample size grows. We use our results to conduct large-sample inference for our measures, and we propose a computationally efficient partial bootstrap procedure to potentially improve finite-sample inference. We assess the properties of our proposed methods using numerical studies, and we illustrate our methods with an analysis of data for predicting wine quality from its physicochemical properties. 
\end{abstract}

\doublespacing
\section{Introduction}
\subsection{Background and literature review}
The past few decades have witnessed explosive growth in the amount of data available for analysis. In many settings, researchers may wish to identify a subset of the available variables that are highly predictive of an outcome of interest or to rank the variables in terms of their predictive ability. For example, cell function is often regulated by only a small subset of genes, and identifying which genes control a specific cell function is an important research area in biology \citep{leclerc2008survival}. Similarly, the risk of many medical events only depends on a small subset of the collected covariates, and determining which covariates are risk factors is important for improving scientific understanding of an event of interest \citep{fan2002variable}. Finally, in chemical and materials discovery, feature ranking techniques have emerged as an important tool for evaluating new candidate materials \citep{janet2017resolving}.
 
Due to the importance of variable selection and ranking in an array of scientific fields, a variety of variable selection methods have been proposed. Some of the most well-known methods include best subset selection \citep{hocking1967selection}, least absolute shrinkage and selection operator (LASSO) \citep{tibshirani1996regression}, elastic net \citep{zou2005regularization}, random forest \citep{breiman2001random}, and Bayesian methods \citep{mitchell1988bayesian, george1997approaches, park2008bayesian}. For a broad review of variable selection methods, we refer the reader to \cite{heinze2018variable} and the references therein.  

With a plethora of variable selection and ranking methods to choose from, it is not always clear which of these methods is best for a given dataset. Furthermore, applying different methods to the same data often results in different selected sets or variable ranks. It is therefore important to have a way of choosing the most appropriate selection or ranking method. Different methods have theoretical guarantees that rely on different assumptions about the true data-generating mechanism; e.g., assumptions about the true regression function such as sparsity, linearity, or additivity, or assumptions about the marginal distribution of the predictors such as approximate independence or multivariate normality. If the relevant properties of the true distribution are known, it may be possible to narrow the set of viable methods to those that work well under the given conditions. However, since these properties are typically not known in practice,   it is of interest to compare the performance of variable selection or ranking procedures using the data at hand.

Several authors have studied the problem of comparing the performance of variable selection procedures. \cite{heinze2018variable} suggested assessing the stability and sensitivity of candidate algorithms using quantities such as inclusion frequencies and root mean squared difference over bootstrap samples. Other authors have employed cross-validation \citep{stone1974cross, lachenbruch1968estimation, cover1969learning} to compare variable selection procedures. By comparing the cross-validated area under the ROC curve (AUC)  and the number of variables selected from a number of methods, \cite{sanchez2018comparison} found that classic regression-based variable selection methods are more suitable for smaller sample sizes, while tree-based methods perform better with more data.  \citet{murtaugh2009performance} and \cite{refaeilzadeh2007comparison} suggested using cross-validated risk, such as mean squared error, as a way to compare variable selection algorithms.

It is also important to quantify uncertainty when comparing variable selection and ranking procedures. For example, while selecting more variables typically reduces the cross-validated risk, a confidence interval may reveal that there is substantial uncertainty in the risk reduction. In this case, the method that selects a more parsimonious model may be preferred. Alternatively, if two different methods select different subsets of the same size, testing the null hypothesis that the two sets have the same risk or obtaining a confidence interval for the difference or ratio of the risks allows the researcher to rigorously assess whether and to what extent one subset is more predictive than the other. Finally, when comparing variable ranking procedures across a range of subset sizes, uniform confidence bands for the risks or risk differences over the range allow the researcher to assess whether one method dominates another. To the best of our knowledge, no method yet exists for providing valid inference for comparisons of variable ranking procedures.

In this article, we propose a framework for empirically comparing variable selection and ranking procedures. Our framework is based on the nonparametric methods of assessing variable importance proposed in \cite{williamson2022general}. However, \cite{williamson2022general} considered assessing the importance of fixed sets of variables, whereas here, the variable sets are random depending on the data because the selection and ranking procedures use the data in potentially complicated ways. In Section~\ref{sec:methods}, we review the framework of~\cite{williamson2022general}, define our proposed measures of the quality of a variable selection or ranking procedure, and discuss their interpretation. In Section~\ref{sec:theory}, we propose estimators of our proposed measures and provide conditions under which our estimators are asymptotically linear when the dimension of the covariates is fixed as the sample size grows. In Section~\ref{sec:inference}, we use our asymptotic results to derive large-sample confidence regions for our parameters of interest, and we also propose a computationally efficient modified bootstrap procedure to potentially improve finite-sample inference. In Section~\ref{sec:sim}, we present a simulation study assessing the finite-sample properties of our methods, and in Section~\ref{sec:analysis}, we use the proposed methods to compare variable selection and ranking methods for predicting wine quality from its physicochemical properties. In Section~\ref{sec:conclusion}, we provide a brief discussion. The proofs of all theorems are provided in Supplementary Material.

\section{Parameters of interest and their interpretation}\label{sec:methods}

\subsection{Statistical setting}

We suppose that $X = (X_1, \dotsc, X_p) \in \s{X} \subseteq \d{R}^p$ is a $p$-dimensional covariate vector, and $Y \in \s{Y} \subseteq \d{R}$ is a scalar outcome. We suppose that the observed data consist of $n$ independent and identically distributed vectors $\{(X_{1i}, \dotsc, X_{pi}, Y_i): i = 1, \dotsc, n\}$ drawn from an unknown distribution $P_0$ on the sample space $\s{X} \times \s{Y}$. We assume that $P_0$ is known to lie in a model $\s{M}$, which is typically a nonparametric model. With some abuse of notation, we also use $P_0$ as the true marginal distribution of $(X_{1i}, \dotsc, X_{pi})$.  Throughout, we focus on the setting where the number of covariates $p$ is fixed with sample size. The use of subscript $0$ refers to evaluation at or under $P_0$; for example, we write $E_0$ to denote expectation under $P_0$. We define $\d{P}_n$ as the empirical distribution of $\{(X_{1i}, \dotsc, X_{pi}, Y_i): i = 1, \dotsc, n\}$. For any measure $P$ and $P$-integrable function $f$, we set $P f := \int f \sd P$. For any $S \subseteq \{1, \dotsc, p\}$ and $X \in \s{X}$, we denote $X_{-S}$ as the elements of $X$ whose indices \textit{do not} fall in $S$, and we let $\s{X}_{-S}$ be the sample space of $X_{-S}$.

\subsection{Measures of variable importance}\label{sec:vimp}
As discussed in the introduction, \cite{williamson2022general} proposed a unified approach to defining algorithm-agnostic, population-level measures of predictiveness of a subset of covariates. We will use their framework as part of our method of quantifying the quality of a variable selection or ranking algorithm, so we briefly describe the key elements of their approach.

The approach of \cite{williamson2022general} requires a user-defined \emph{predictiveness metric} $V : \s{F} \times \s{M} \to \d{R}$, where $\s{F}$ is a class of functions from $\s{X}$ to $\d{R}$ endowed with a norm $\norm{\cdot}_{\s{F}}$. For example, $\s{F}$ may consist of all $f : \s{X} \to \d{R}$ such that $\int f(x)^2 \sd P_0(x) < \infty$ and $\norm{f}_{\s{F}} := [\int f(x)^2 \sd P_0(x) ]^{1/2}$. Then, $V(f, P)$ is assumed to provide a measure of the predictiveness of a candidate prediction function $f \in \s{F}$ when generating data from $P \in \s{M}$, where higher values are assumed to correspond to better predictiveness. We assume that $V(f,P) \in [0,1]$ for all $(f, P) \in \s{F} \times \s{M}$. The \emph{population maximizer} $f_0 \in \argmax_{f \in \s{F}} V(f, P_0)$, is the best possible prediction function in $\s{F}$ relative to $V$ under sampling from $P_0$, and the \emph{oracle predictiveness} $V(f_0, P_0)$ measures the best possible capacity of the entire covariate set $X$ for predicting $Y$ under sampling from $P_0$. Given  $S \subset \{1, \dotsc, p\}$, \cite{williamson2022general} analogously define the \emph{residual oracle predictiveness} as  $V(f_{0, -S}, P_0)$, where $f_{0, -S} \in \argmax_{f \in \s{F}_{-S}} V(f, P_0)$ and $\s{F}_{-S}$ is the subset of $\s{F}$ consisting of functions $f \in \s{F}$ that \emph{do not} depend on the covariates with indices in $S$. For example, if we have three covariates $X_1, X_2, X_3$ and $S = \{2,3\}$, then $\s{F}_{-S}$ consists of functions in $\s{F}$ that only depend on $X_1$. Thus, $V(f_{0, -S}, P_0)$ quantifies the remaining prediction potential after excluding the covariates with indices in $S$. The \emph{population variable importance}, defined as  $\psi_{0, S}:= V(f_{0}, P_0) - V(f_{0, -S}, P_0)$, measures the amount of oracle predictiveness lost by excluding covariates with indices in $S$. 

In this paper, we make a simplifying assumption about the form of the predictiveness metric $V$. We assume there exist  $\zeta: \s{F} \times \s{M} \mapsto \d{R}$ and $\eta : \s{M} \to \d{R}$ and a function $U : \mathrm{range}(\zeta) \times \mathrm{range}(\eta) \to \d{R}$ such that $P \mapsto \zeta(f, P)$ is linear and 
\begin{equation}\label{eq:Vform}
    V(f, P) = U\left(\zeta(f, P), \eta(P) \right).
\end{equation}
Hence, we assume that $V$ only depends on $f$ and $P$ together through a function that is linear in $P$. Such $V$ are referred to as \emph{standardized $V$-measures of degree one} in \cite{williamson2022general}. Additional conditions on $\zeta$, $\eta$, and $U$ will be provided in Section~\ref{sec:efficiency}. Among the four examples of predictiveness metrics $V$ considered in \cite{williamson2022general}, only one, the area under the ROC curve, does not have this form. Furthermore, we have found that this form greatly simplifies the technical conditions used for the theoretical results provided in Section~\ref{sec:theory}. Additional discussion of the simplifying nature of this assumption is provided following Theorem~\ref{thm:asym linearity}. Hence,  we have decided to sacrifice some generality in making this assumption about the form of $V$ for the sake of clarity and simplicity.

We now review three examples of predictiveness measures $V$ studied in \cite{williamson2022general}.
\begin{example}[$R$-squared]\label{example:rsquared}
    The predictiveness metric $V$ is given by $V(f, P) := 1 - E_P[Y-f(X)]^2 / \sigma^2_P$, and where $\sigma^2_P := E_P [Y - E_P (Y)]^2$ is the marginal variance of $Y$ under $P$. This measure quantifies the proportion of variance in $Y$ explained by $f(X)$. In this case, we have $\zeta(f, P) = E_P[Y-f(X)]^2$, $\eta(P) = E_P [Y - E_P (Y)]^2$ and $U(v, w) = 1-v/w$.
\end{example}
\begin{example}[Deviance]\label{example:deviance}
    Suppose $Y$ is binary, and define $V(f, P) := 1 - E_P[\nu(Y, f(X))] / \nu(\pi_P, \pi_P)$ for $\nu(u,v) := u \log v + (1-u) \log (1-v)$, and where $\pi_P := P(Y=1)$. This measure assesses the goodness of fit of the model fitted by $X$. In this case, we have $\zeta(f, P) = E_P[\nu(Y, f(X))]$, $\eta(P) = \nu(\pi_P, \pi_P)$ and $U(v, w) = 1-v/w$.
\end{example}
\begin{example}[Classification accuracy]\label{example:classification}
    Suppose again that $Y$ is binary, and define $V(f, P) := P(Y = f(X))$. This measure quantifies how often the prediction $f(X)$ coincides with $Y$. In this case, we have $\zeta(f, P) = P(Y = f(X))$, $\eta(P) = 1$ and $U(v, w) = v$.
\end{example}

In Examples~\ref{example:rsquared}--\ref{example:deviance}, the maximizer of $f \mapsto V(f, P)$ over all $f : \s{X} \to \d{R}$ in $\s{F}$ is the conditional mean function $\mu_P : x \mapsto E_P( Y \mid X = x)$, and in Example~\ref{example:classification}, it is the Bayes classifier $x \mapsto I\{ \mu_P(x) > 0.5\}$. Hence, if $\mu_0 \in \s{F}$, then $f_0 = \mu_0$ in the Examples~\ref{example:rsquared}--\ref{example:deviance}, and $f_0 = I\{ \mu_0 > 0.5\}$ in Example~\ref{example:classification}. Similarly, the maximizer of $f \mapsto V(f,P)$ over  $f \in \s{F}_{-S}$ is $\mu_{P, -S} : x \mapsto E_P(Y \mid X_{-S} = x_{-S})$ in Examples~\ref{example:rsquared}--\ref{example:deviance} and $x \mapsto I\{ \mu_{P, -S}(x) > 0.5\}$ in Example~\ref{example:classification}.

A central feature of this variable importance framework is that it is \emph{algorithm-agnostic}, meaning that the definition of the population variable importance is not dependent on the use of a particular algorithm for estimating $f_0$ or $f_{0, -S}$. Valid inference for the variable importance parameter using the observed data requires estimating $f_0$ and $f_{0, -S}$, which does involve choosing an estimation algorithm, and \cite{williamson2022general} provided high-level conditions on the properties of these estimators that yield asymptotically valid inference for the variable importance. However, the choice of an algorithm for purposes of estimation of the variable importance is separate from the definition of the parameter itself. We adopt this same approach when defining our measures of variable selection and ranking procedures.

\subsection{Measures of variable selection algorithms}
\label{sec:variable selection algo}

\cite{williamson2022general} focused on assessing the importance of a fixed set of covariates $S$ defined \emph{a priori} by the researcher. Here, our focus is on assessing and comparing the performance of automatic variable selection algorithms---that is, algorithms that use the observed data to select a subset of the covariates. The variables selected by such an algorithm are a random subset of the covariates, which we denote $S_n \subseteq \{1, \dotsc, p\}$. Here, the subscript $n$ indicates that this random subset depends on the $n$ observed data points in some possibly complicated way. In the spirit of \cite{williamson2022general}, to produce an interpretable measure of the quality of a variable selection procedure, we first define a population parameter of interest for a given variable selection procedure, and we then tackle statistical inference for this population parameter. Like \cite{williamson2022general}, our measures will be algorithm-agnostic in the sense that they will not be tied to the model or algorithm used by a given selection or ranking procedure. 

In order to define a population parameter of interest, the simplest approach is to assume that the random subset $S_n$  converges in probability to a fixed $S_0 \subseteq \{1, \dotsc, p\}$. We will relax this condition later. We can then consider $S_n$ as an estimator of $S_0$. We then define our population-level parameter of interest as the variable importance $\psi_{0, S_0} = V(f_{0}, P_0) - V(f_{0, -S_0}, P_0)$ of the limiting subset $S_0$.  While the population parameter follows readily from the work of \cite{williamson2022general}, we note that asymptotic results regarding statistical inference require extra work due to the randomness of $S_n$.

As discussed in the introduction, one of our goals is to compare the quality of various variable selection algorithms for a given data set. Given two random subsets $S_{n}$ and $S_{n}'$ produced by two different variable selection algorithms, we can consider the population variable importances $\psi_{0, S_{0}}$ and $\psi_{0, S_{0}'}$ as our parameters of interest, where $S_{0}$ and $S_{0}'$ are the limiting subsets to which $S_{n}$ and $S_{n}'$ are assumed to converge. However, this natural approach suffers an important drawback. Variable importance metrics are nested: $S_0 \subset S_0'$ implies that $\psi_{0, S_0} \leq \psi_{0, S_0'}$. Hence, simply comparing  $\psi_{0, S_{0}}, \psi_{0, S_{0}'}$ is not sufficient to compare the quality of the selection algorithms (even if these population quantities were known exactly), because an algorithm that tends to select more variables will tend to produce higher variable importance. As an extreme example, the trivial algorithm that always selects all the variables will always have the maximal possible variable importance.

To resolve this drawback, we propose simply adding a second piece of information: the number of variables selected by the algorithm. Thus, our bivariate parameter of interest is $(|S_0|, \psi_{0,S_0})$. Furthermore, different algorithms can then be graphically compared by plotting the bivariate parameter  in the coordinate plane $[1,p] \times [0,1]$. This plot conveys how predictive the subset selected by each algorithm is against the number of covariates selected. If an algorithm achieves high variable importance with few selected variables, the point corresponding to the true parameter vector of the algorithm would be in the upper left region of the plot. For two limiting subsets $S_0$ and $S_0'$, if the point $(|S_0|, \psi_{0, S_0})$ is to the lower right of the point $(|S_0'|, \psi_{0, S_0'})$ on this plot, meaning that $|S_{0}| \geq |S_{0}'|$ and $\psi_{0, S_{0}} \leq \psi_{0, S_{0}'}$, and at least one of these inequalities is strict, then an algorithm with limiting subset $S_{0}'$ dominates an algorithm with limiting subset $S_{0}$ because its selected subset is smaller, yet has higher variable importance. On the other hand, if $|S_{0}| > |S_{0}'|$ and $\psi_{0, S_{0}} > \psi_{0, S_{0}'}$, then the subset $S_0$ is more important, but also larger, than the subset $S_{0}'$. In this case, whether the additional variables are worth the gain in importance is up to the user. One simple way to combine the two pieces of information to produce a single metric is by dividing the variable importance of the selected set by the size of the selected set; i.e.\ $\psi_{0, S_{0}} / |S_{0}|$. This measure can be interpreted as the average variable importance per selected variable, and graphically, can also be portrayed  in the suggested diagram as the slope of the line connecting $(0,0)$ and $(|S_{0}|, \psi_{0, S_{0}})$. We  call this parameter the \textit{predictiveness per selected variable} (PPSV) for short. An illustration of this diagram will be given in Section~\ref{sec:illust}.

\subsection{Measures of variable ranking algorithms}\label{sec: ranking}

We now extend the population parameters proposed in Section~\ref{sec:variable selection algo} for comparing variable selection algorithms to compare \emph{variable ranking algorithms}; i.e., algorithms that rank the $p$ variables in terms of their potential for predicting the outcome. A variable ranking algorithm is a random ranking of the covariates. Specifically, we define a variable ranking algorithm $R_n$ based on the $n$ data points as a random permutation of  $\{1, \dotsc, p\}$. For each $j \in \{1, \dotsc, p\}$, we denote $[j]:= \{1, \dotsc, j\}$ and $R_{n,[j]}$ as the first $j$ elements in $R_n$, which are the indices of the $j$ \emph{most predictive} covariates according to the algorithm. For example, if $p = 3$, and $R_n = (3,2,1)$, then $R_n$ ranks $X_3$ as the most important for predicting $Y$, $X_2$ as the second most important, and $X_1$ as the least important, and $R_{n,[2]}$ would be $(3,2)$. As with variable selection algorithms, we are most interested in situations where $R_n$ depends on the data in a possibly complicated way. For instance, $R_n$ may be the random variable ranking resulting  from running a penalized regression algorithm such as the least absolute shrinkage and selection operator (LASSO) \citep{tibshirani1996regression}, elastic net \citep{zou2005regularization}, or the smoothly clipped absolute deviation (SCAD) \citep{fan2001variable} algorithms on the data.

In order to define a population parameter of interest, the simplest approach is again to assume that the random ranking $R_n$ converges in probability to a fixed rank $R_0$; i.e., $P_0 (R_{n} = R_{0}) \converge 1$.  This too will be relaxed later. We then consider $R_n$ as an estimator of $R_0$. Intuitively, given a predictiveness metric $V$, the quality of a variable ranking algorithm is higher if the variables that it tends to rank first have higher variable importance. Hence, for any variable rank $R_0$, we define the population \emph{variable ranking operator characteristic} (VROC) as
\begin{equation}
    \label{vim sequence}
    \left(\psi_{0, R_{0, [1]}}, \dotsc, \psi_{0, R_{0, [p]}} \right) = \left(V(f_{0}, P_0) -V(f_{0, -R_{0, [1]}}, P_0), \dotsc, V(f_{0}, P_0) - V(f_{0, -R_{0, [p]}}, P_0) \right),
\end{equation}
so that $\psi_{0, R_{0, [j]}} = V(f_{0}, P_0) - V(f_{0, -R_{0, [j]}}, P_0)$ is the population variable importance of the variables indexed by $R_{0, [j]}$.  We note that $\psi_{0, R_{0, [1]}} \leq \psi_{0, R_{0, [2]}} \leq \cdots \leq \psi_{0, R_{0, [p]}}$ because adding variables to the set used for prediction increases the size of the subset of $\s{F}$ over which the optimization occurs.

We call the curve formed by plotting $\left(\psi_{0, R_{0, [1]}}, \dotsc, \psi_{0, R_{0, [p]}} \right)$ on the vertical axis against $(1,\dotsc, p)$ on the horizontal axis the \emph{VROC curve}. Examples of population VROC curves will be provided in Section~\ref{sec:illust}. As with the ROC curve in the context of prediction of a binary outcome \citep{woodward1953probability}, the closer the VROC curve is to the upper left corner of the rectangle $[1,p] \times [0, 1]$, the better the performance of the ranking algorithm. This is because the ideal ranking algorithm ranks the variables with the largest impact on the predictiveness metric first. Unlike the ROC curve, which necessarily ends at the point $(1, 1)$, the endpoint of the VROC curve is typically not $(p,1)$ because the oracle predictiveness of all variables is usually not 1. However, the endpoint of the VROC curve is the same for all possible variable rankings because the endpoint represents the variable importance of all of $p$ covariates. In addition, unlike the ROC curve, there is not necessarily an optimal variable ranking in terms of the VROC curve in the sense that there may not exist any ranking such that the corresponding VROC curve is no smaller than that of any other ranking at all points. An exception is the case where the true conditional mean function $\mu$ is additive in the covariates and $V$ is the $R$-squared predictiveness metric. In this case, the VROC curve is optimal if and only if it is concave. This will be demonstrated more in the examples below. 

A one-number measure of the overall population performance of a ranking algorithm can be obtained as $\phi(\psi_{0, R_{0, [1]}}, \dotsc, \psi_{0, R_{0, [p]}})$ for any order-preserving summary $\phi: \d{R}^p \to \d{R}$. Here, by order-preserving, we mean that $a_j \leq b_j$ for each $j \in \{1, \dotsc, p\}$ implies that $\phi(a_1, \dotsc, a_p) \leq \phi(b_1, \dotsc, b_p)$. In particular, we will consider 
\[
    \phi(\psi_{0, R_{0, [1]}}, \dotsc, \psi_{0, R_{0, [p]}}) := \sum_{j=2}^{p}\frac{\psi_{0, R_{0, [j]}} + \psi_{0, R_{0, [j-1]}}}{2},
\]
which measures the area under the linear interpolation of the VROC curve. We call this the \emph{the area under the VROC curve} (AUVROC) for short.

\subsection{Illustrative example}\label{sec:illust}

Here, we illustrate the proposed measures of variable selection and ranking algorithms in a toy example. We suppose there are $p = 10$ covariates drawn from an independent uniform distributions on $[-1, 1]$. The outcome $Y$ is generated according to $Y = 0.4X_1 + \sqrt{X_2+1} + 2X_3^2 + \varepsilon$, where $\varepsilon$ follows a standard normal distribution independent of the covariates. We use the $R$-squared predictiveness metric $V$ defined in Section~\ref{sec:vimp}. We consider two variable selection algorithms: the variable subset returned by marginal regression (MR), and the subset selected by multivariate adaptive regression splines (MARS) \citep{friedman1991multivariate}. For MR, we regress the outcome on each covariate separately using univariate linear regression and select the covariates whose resulting absolute standardized coefficients exceed a threshold. In this example, we use $0.1$ and $0.2$ as two thresholds for MR, and the corresponding selection algorithms are denoted MR(0.1) and MR(0.2). MARS uses regression splines to automatically model nonlinearities and interactions between variables.

\begin{figure}[!t]
    \centering
    \includegraphics[width = 0.49\linewidth]{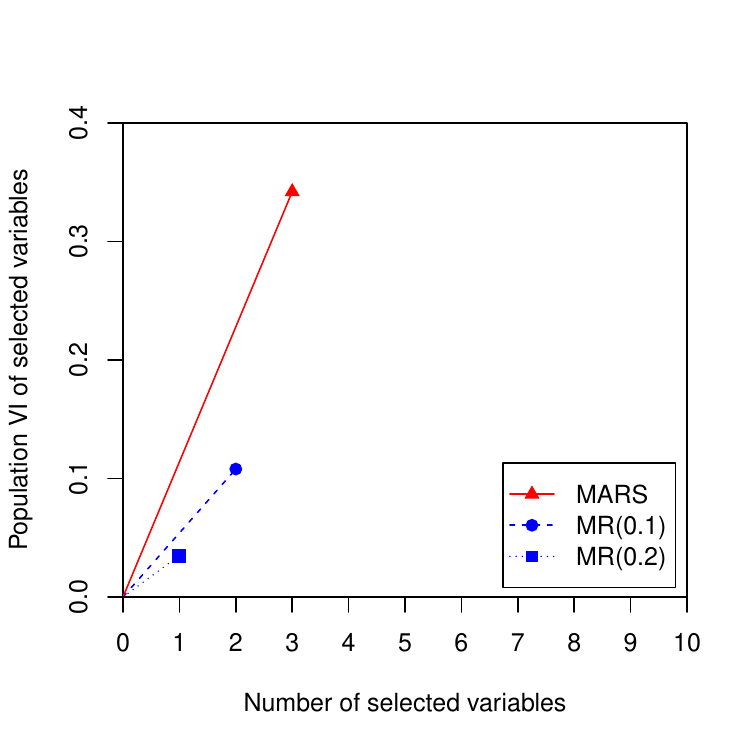}
    \includegraphics[width = 0.49\linewidth]{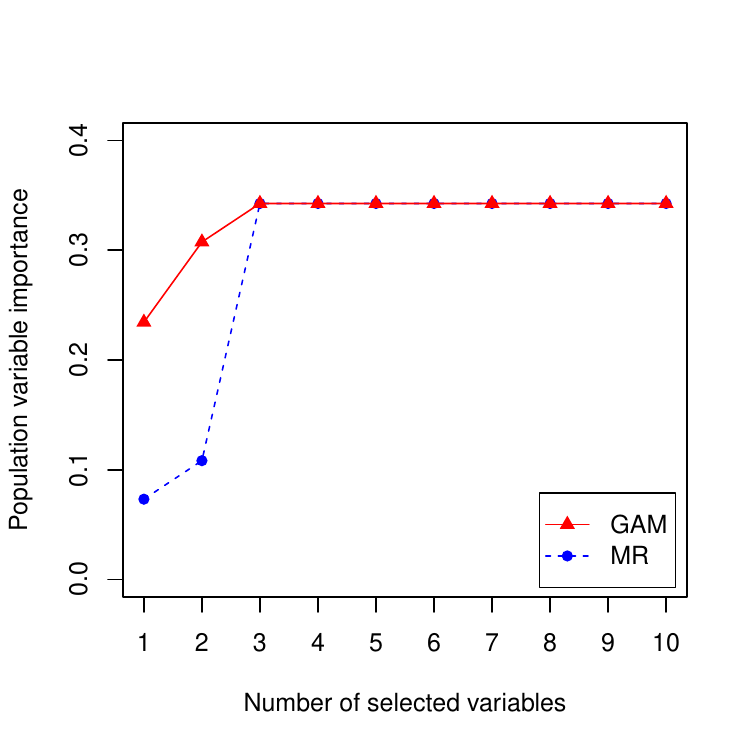}
    \caption{Illustration of the variable selection and ranking measures using the example described in Section~\ref{sec:illust}. Left: Variable importances of selected variables corresponding to the MARS, MR(0.1), and MR(0.2) selection algorithms. Right: VROC curves corresponding to the MR and GAM ranking algorithms.}
    \label{fig:illustration}
\end{figure}

By simulation, we find the limiting selected variables for MR(0.1), MR(0.2), and MARS are $S_{0, \text{MR}(0.1)} = \{1, 2\}$, $S_{0, \text{MR}(0.2)} = \{2\}$ and $S_{0, \text{MARS}} = \{1, 2, 3\}$. The corresponding population bivariate parameters of interest are $(|S_{0, \text{MR}(0.1)}|, \psi_{0, S_{0, \text{MR}(0.1)}}) = (2, 0.11)$, $(|S_{0, \text{MR}(0.2)}|, \psi_{0, S_{0, \text{MR}(0.2)}}) = (1, 0.035)$ and $(|S_{0, \text{MARS}}|, \psi_{0, S_{0, \text{MARS}}}) = (3, 0.34)$ respectively. Figure~\ref{fig:illustration} (left panel) shows the three resulting bivariate parameters of interest in the rectangle $[1,p] \times [0,1]$. Although the limiting subset selected by MARS has higher variable importance, MARS doesn't dominate MR(0.1) or MR(0.2) because these selected fewer variables than MARS. However, the PPSV corresponding to MARS, MR(0.1), and MR(0.2) are $0.11$ and $0.054$ and $0.035$, respectively, so the PPSV of MARS is larger than either of the MR methods.  This can be determined using the left panel of  Figure~\ref{fig:illustration} by noting that the slope of the line corresponding to MARS is larger than that of MR(0.1) and MR(0.2).

To illustrate our proposed measures of variable ranking algorithms, we compare two ranking procedures: the rank obtained by sorting the absolute standardized coefficients from marginal regression (MR), and the rank obtained by sorting the $p$-values from smallest to largest from a generalized additive model (GAM) \citep{hastie1990generalized} of $Y$ on the covariates (using the default settings in the \texttt{mgcv} package in \texttt{R}). By simulation, we find that the limiting rankings of MR and GAM from most important to least important in predicting the outcome differ only on the first three variables. MR returns  $X_2, X_1, X_3$ as the order of the first three variables, and GAM returns $X_3, X_2, X_1$ as the order of the first three important variables. Correspondingly, the population variable importances are $(0.073, 0.11, 0.34, \dotsc, 0.34)$ and $(0.23, 0.31, 0.34, \dotsc, 0.34)$ respectively. Figure~\ref{fig:illustration} (right panel) shows the two resulting VROC curves for MR and GAM with values of AUVROCs equal to 2.7 and 3.0, respectively. In this case, the population ranking of GAM is better than that of MR because the first and first two variables selected by GAM have higher population $R$-squared values than the corresponding variables selected by MR. This is because the true data-generating process follows a GAM, but not a linear model, and MR fails to recognize the importance of the covariates with non-linear effects. We also note that since the true model is a GAM, the optimal VROC curve is concave, as discussed above. An example where the optimal VROC curve is not concave will be provided in Section~\ref{sec:sim}.

\section{Estimation and asymptotic results}\label{sec:theory}

\subsection{Efficiency theory}\label{sec:efficiency}

In this section, we introduce estimators of our parameters of interest and provide theoretical results guaranteeing convergence in distribution of our proposed estimators to mean-zero normal distributions. We use these results to construct asymptotically valid confidence sets for our parameters of interest based on our estimators.

We begin with conditions under which  the parameter mappings $P \mapsto \psi_{P, S_P}$ and $P \mapsto \psi_{P, R_{P,[j]}}$ are pathwise differentiable relative to a nonparametric model for each $j \in \{1, \dotsc, p\}$,  where $R_P$ is the limiting variable rank and $S_P$ is the limiting selected subset under $P \in \s{M}$. Pathwise differentiability of a parameter is important for several reasons. First, it provides a generalized Cr{\'{a}}mer-Rao lower bound for estimating the parameter in a nonparametric or semi-parametric model. Second, under additional regularity conditions, it guarantees that it is possible to construct an \emph{asymptotically linear} estimator of the target parameter that achieves the lower bound \citep{pfanzagl1982contributions, klaassen1987consistent, pfanzagl1990estimation, van1991differentiable}. Asymptotic linearity of an estimator implies $n^{-1/2}$-rate consistency and asymptotic normality, which facilities large-sample statistical inference. Asymptotic linearity further enables joint asymptotic results, which facilities joint large-sample inference. For a review of pathwise differentiability, asymptotic linearity, and semiparametric efficiency theory we refer the reader to \cite{le1990asymptotics, van2000asymptotic} and \cite{kennedy2016semiparametric}.

We recall that we assume $V(f, P) = U( \zeta(f, P), \eta(P))$. We introduce the following conditions for pathwise differentiability, which are specific to a subset $S \subseteq \{1, \dotsc, p\}$. We define $L_2^0(P_0)$ as $\{ h \in L_2(P_0) : P_0 h = 0\}$.
\begin{enumerate}[label=\textbf{(A\arabic*)},leftmargin=2cm]    
    \item[\namedlabel{cond: local continuity}{(A1)}] There exists a function $\dot\zeta : \s{F}  \to L_2(P_0)$  such that $\zeta(f, P) = P [ \dot\zeta(f) ]$ and such that $f \mapsto \dot{\zeta}(f)$ is continuous at $f_{0, -S}$. Also, there exist $C, \delta \in (0, \infty)$ such that for all $f \in \s{F}_{-S}$ with $\norm{f - f_{0, -S}}_{\s{F}} < \delta$,
    \[
        \abs{P_0 \left[ \dot\zeta(f) - \dot\zeta(f_{0,-S}) \right] } \leq C\norm{f - f_{0, -S}}_{\s{F}}^2.
    \]
    \item[\namedlabel{cond: nuisance}{(A2)}] The map $P \mapsto \eta(P)$ is pathwise differentiable at $P_0$ relative to a nonparametric model with nonparametric efficient influence function at $P_0$ given by $\phi_0 \in L_2^0(P_0)$.
    \item[\namedlabel{cond: derivative}{(A3)}] $(\zeta, \eta) \mapsto U(\zeta, \eta)$ is differentiable at $(\zeta, \eta) = (\zeta( f_{0,-S}, P_0), \eta_0)$ with $\dot{U}_{\zeta}(\zeta, \eta) := \frac{\partial U}{\partial\zeta}(\zeta, \eta)$ and $\dot{U}_{\eta}(\zeta, \eta)  := \frac{\partial U}{\partial\eta}(\zeta, \eta)$.
\end{enumerate}

Under these conditions, we have the following result.
\begin{restatable}{thm}{thmpathdiff}\label{thm:pathwise differentiable}
    If conditions~\ref{cond: local continuity}-\ref{cond: derivative} hold for $S=S_0$,  then $P \mapsto V(f_{P, -S_P}, P)$ is pathwise differentiable at $P_0$ relative to a nonparametric model with nonparametric efficient influence function
    \begin{align*}
         \dot{V}_0(f_{0, -S_0}) = \dot{U}_{\zeta}(\zeta(f_{0,-S_0}, P_0), \eta_0) \left[\dot{\zeta}(f_{0, -S_0}) - P_0\dot{\zeta}(f_{0, -S_0})\right] + \dot{U}_{\eta}(\zeta(f_{0,-S_0}, P_0), \eta_0) \phi_0.
    \end{align*}
    If conditions \ref{cond: local continuity}-\ref{cond: derivative} hold  for each $S = R_{0, [j]}$ where $j \in \{1, \dotsc, p\}$, $P \mapsto V(f_{P, -R_{P, [j]}}, P)$ is pathwise differentiable at $P_0$ relative to a nonparametric model with nonparametric efficient influence function 
    \begin{align*}
         \dot{V}_0(f_{0, -R_{0, [j]}}) = \dot{U}_{\zeta}(\zeta(f_{0,-R_{0, [j]}}, P_0), \eta_0) \left[\dot{\zeta}(f_{0, -R_{0, [j]}}) - P_0\dot{\zeta}(f_{0, -R_{0, [j]}})\right] + \dot{U}_{\eta}(\zeta(f_{0,-R_{0, [j]}}, P_0), \eta_0) \phi_0.
    \end{align*}
\end{restatable}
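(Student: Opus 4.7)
The plan is to compute the Gateaux derivative of $P \mapsto V(f_{P, -S_P}, P)$ along regular one-dimensional parametric submodels through $P_0$ and then identify its Riesz representer as the nonparametric efficient influence function. Fix such a submodel $\{P_t : t \in [0, \delta)\} \subset \mathcal{M}$ with $P_{t=0} = P_0$ and score $h \in L_2^0(P_0)$. Because $S_P$ takes values in the finite power set of $\{1, \dotsc, p\}$, the standing assumption that $S_P$ is well-defined as a probability limit in a neighborhood of $P_0$ implies that $S_{P_t} = S_0$ for all $t$ sufficiently small. The problem therefore reduces to differentiating $t \mapsto V(f_{P_t, -S_0}, P_t)$ with $S_0$ held fixed.

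Using the factorization $V(f, P) = U(\zeta(f, P), \eta(P))$ from~\eqref{eq:Vform} together with the chain rule supplied by~\ref{cond: derivative},
\[
    \left.\frac{d}{dt}\right|_{t=0} V(f_{P_t, -S_0}, P_t) = \dot{U}_\zeta \cdot \left.\frac{d}{dt}\right|_{t=0} \zeta(f_{P_t, -S_0}, P_t) + \dot{U}_\eta \cdot \left.\frac{d}{dt}\right|_{t=0} \eta(P_t),
\]
with partials evaluated at $(\zeta(f_{0, -S_0}, P_0), \eta_0)$. The $\eta$ piece immediately yields $\dot{U}_\eta \cdot P_0[\phi_0 h]$ by~\ref{cond: nuisance}. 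For the $\zeta$ piece, I would exploit the linearity $\zeta(f, P) = P[\dot\zeta(f)]$ to write
\begin{align*}
    \zeta(f_{P_t, -S_0}, P_t) - \zeta(f_{0, -S_0}, P_0) &= (P_t - P_0)[\dot\zeta(f_{0, -S_0})] \\
    &\quad + P_0[\dot\zeta(f_{P_t, -S_0}) - \dot\zeta(f_{0, -S_0})] \\
    &\quad + (P_t - P_0)[\dot\zeta(f_{P_t, -S_0}) - \dot\zeta(f_{0, -S_0})].
\end{align*}
The first term contributes $t \cdot P_0[\dot\zeta(f_{0, -S_0}) h] + o(t)$ by the score definition, and the third term is $o(t)$ by the continuity of $\dot\zeta$ at $f_{0, -S_0}$ in~\ref{cond: local continuity} combined with a Cauchy--Schwarz bound against the submodel score. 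The key step is that the second term is $o(t)$: the quadratic bound in~\ref{cond: local continuity} controls it by $C\|f_{P_t, -S_0} - f_{0, -S_0}\|_{\mathcal{F}}^2$, which vanishes faster than $t$ under the $o(t^{1/2})$ convergence rate of $f_{P_t, -S_0}$ to $f_{0, -S_0}$ that is standard for smooth submodels. Importantly, this quadratic bound is itself a manifestation of the optimality of $f_{0, -S_0}$ as maximizer of $V(\cdot, P_0)$ over $\mathcal{F}_{-S_0}$.

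Centering using $P_0 h = 0$ and collecting terms gives
\[
    \left.\frac{d}{dt}\right|_{t=0} V(f_{P_t, -S_0}, P_t) = P_0\Big[\big\{\dot{U}_\zeta [\dot\zeta(f_{0, -S_0}) - P_0\dot\zeta(f_{0, -S_0})] + \dot{U}_\eta \phi_0\big\} h\Big],
\]
which identifies the efficient influence function as claimed. The ranking case follows by applying the same argument verbatim with $S_0$ replaced by $R_{0, [j]}$ for each $j \in \{1, \dotsc, p\}$, using the hypothesis that~\ref{cond: local continuity}--\ref{cond: derivative} hold at each $R_{0, [j]}$.

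The main obstacle will be rigorously justifying that $S_{P_t} = S_0$ along each submodel for $t$ small---a local stability statement about the population selection algorithm at $P_0$ that is the essential new ingredient introduced by the randomness of $S_n$. A secondary step is verifying the convergence rate of $f_{P_t, -S_0}$ to $f_{0, -S_0}$, which is standard for smooth $V$ with uniquely identified maximizers but must be confirmed for the specific examples of interest. Both subtleties are sidestepped in the fixed-subset treatment of \cite{williamson2022general} and are precisely what the simplifying assumption~\eqref{eq:Vform} is intended to help manage.
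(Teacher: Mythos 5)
Your decomposition of $t\mapsto\zeta(f_{P_t,-S_0},P_t)$ into the three terms (score term, $P_0$-bias term, cross term) is essentially the same as the paper's decomposition in its Equation~\eqref{eq: pathwise decom}, and the intended roles of conditions~\ref{cond: local continuity}--\ref{cond: derivative} match. However, there are two genuine gaps relative to how the paper actually closes the argument.

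First, you treat the statements ``$S_{P_t}=S_0$ for $t$ small'' and ``$\|f_{P_t,-S_0}-f_{0,-S_0}\|_{\mathcal F}^2=o(t)$'' as facts to be \emph{proved}, and in your last paragraph you correctly flag them as obstacles you cannot yet resolve. The paper does not try to prove either of these. Instead, it imposes them as \emph{restrictions on the class of DQM paths} over which differentiability is established: it requires $\|f_{\varepsilon,-S_0}-f_{0,-S_0}\|_{\mathcal F}^2=o(\varepsilon)$, that $\dot\zeta(f_{\varepsilon,-S_\varepsilon})=\dot\zeta(f_{\varepsilon,-S_0})$ for all small $\varepsilon$ (which is weaker than $S_{P_\varepsilon}=S_0$---it permits the limiting set to wobble among sets with identical predictiveness functions, consistent with the asymptotic-stability viewpoint in Section~\ref{sec:stable ranking algorithms}), and two boundedness/continuity conditions in $L_2(P_\varepsilon)$. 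Pathwise differentiability relative to such a restricted-but-dense family of submodels is the standard semiparametric device, and it is precisely what dissolves the ``main obstacle'' and ``secondary step'' you worry about. As written, your argument has no valid way to close without this move, since $S_{P_t}=S_0$ is not in general implied by $S_P$ being a well-defined probability limit (the limit can change abruptly as $P$ varies continuously).

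Second, your handling of the leading score term---``contributes $t\cdot P_0[\dot\zeta(f_{0,-S_0})h]+o(t)$ by the score definition''---is too cavalier. Because $\dot\zeta(f_{0,-S_0})$ need not be bounded, one cannot simply differentiate under the integral; the paper proves a dedicated lemma (\Cref{lemma:fixed_fun}) to justify $\left.\frac{\partial}{\partial\varepsilon}P_\varepsilon h\right|_{\varepsilon=0}=P_0(h\dot\ell_0)$ under a second-moment bound along the path, and uses Hellinger-distance bounds (\Cref{lemma:dqm_hellinger}, \Cref{lemma:mean_bound}) to control the cross term rather than a bare Cauchy--Schwarz ``against the submodel score.'' Your cross-term argument also implicitly needs the $o(1)$ condition on $\|\dot\zeta(f_{\varepsilon,-S_0})-\dot\zeta(f_{0,-S_0})\|_{L_2(P_\varepsilon)}$, which you never state. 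These are not mere polish: without them the difference-quotient limit is not rigorously established for DQM paths.
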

The proof of Theorem~\ref{thm:pathwise differentiable} and  all other theorems are provided in Supplementary Material. Theorem~\ref{thm:pathwise differentiable} provides the means to  construct asymptotically linear estimators of our measures of variable selection and ranking algorithms, as we will see below. Condition~\ref{cond: local continuity} requires linearity of $P \mapsto \zeta(f, P)$. As mentioned earlier, this holds in many examples, and simplifies many technical details. Condition~\ref{cond: local continuity} also requires that the local behavior of $f \mapsto V(f, P_0)$ in a neighborhood of $f = f_{0, -S}$ is controlled by the quadratic norm of $f - f_{0, -S}$. This can be expected to hold because, as discussed in \cite{williamson2022general} and elsewhere, $f_{0,-S}$ is defined an optimizer of $f \mapsto V(f, P_0)$. Condition~\ref{cond: nuisance} requires that the parameter $\eta(P)$ is pathwise differentiable so that we can construct an asymptotically linear estimator of $\eta_0$. Condition~\ref{cond: derivative} requires that the function $U$ is differentiable so that the delta method can be used to obtain the influence function of $V(f_{P, -S_P}, P)$. These conditions are related to but slightly different from conditions~(A1)--(A4) of \cite{williamson2022general}. In particular, our conditions~\ref{cond: local continuity} and~\ref{cond: derivative} imply condition~(A2) of \cite{williamson2022general}. 

We now show that conditions~\ref{cond: local continuity}--\ref{cond: derivative} hold and provide $\dot{\zeta}$ and $\phi_0$ for the three examples introduced in Section~\ref{sec:vimp}. The proof of these results is also provided in Supplementary Material.
\begin{example}[continues=example:rsquared]
    For the $R$-squared predictiveness metric:  condition~\ref{cond: local continuity} holds with $\dot{\zeta}(f)(x, y) = [y-f(x)]^2$ and $\|\cdot\|_{\s{F}} = \| \cdot\|_{L_{p}(P_0)}$ if $\int |y - f(x)|^q \sd P_0(x,y) < \infty$ for all $f \in \s{F}_{-S}$ in a neighborhood of $f_{0,-S}$, where $p \in [2, \infty)$ and $q \in (2,\infty]$ are such that $2/p + 2/q = 1$; condition~\ref{cond: nuisance} holds with $\phi_0(x, y) = (y - E_0 [Y])^2 - \sigma_0^2$; and condition~\ref{cond: derivative} holds if $\sigma_0^2 > 0$.
\end{example}
\begin{example}[continues=example:deviance]
    For the deviance predictiveness metric:  condition~\ref{cond: local continuity} holds with $\dot{\zeta}(f)(x, y) = \nu(y, f(x))$ and $\|\cdot\|_{\s{F}} = \| \cdot\|_{L_2(P_0)}$ if there exist $\alpha \in (0, 1/2)$ and $\delta >0$ such that $\|f - f_{0, -S}\|_{L_2(P_0)} < \delta$ implies that $P_0(\alpha \leq f(X) \leq 1-\alpha) = 1$; condition~\ref{cond: nuisance} holds with $\phi_0(x, y) = [I\{y = 1\} - \pi_0] \log\left(\frac{\pi_0}{1-\pi_0}\right)$; and  condition~\ref{cond: derivative} holds if $\pi_0 \in (0,1)$.
\end{example}
\begin{example}[continues=example:classification]
    For the classification accuracy predictiveness metric: condition~\ref{cond: local continuity} holds with $\dot{\zeta}(f)(x, y) = I\{y = f(x)\}$ and $\|\cdot\|_{\s{F}} = \| \cdot\|_{\infty}$ if $\int |\mu_0(x) - 1/2|^{-1}  \sd P_0(x) < \infty$; condition~\ref{cond: nuisance} holds with $\phi_0 = 0$; and condition~\ref{cond: derivative} holds.
\end{example}

\subsection{Estimating the proposed measures}\label{sec: estimation}

We now propose estimators of our population parameters of interest. We recall that both of our parameters of interest involve $\psi_{0,S} := V(f_0, P_0) - V(f_{0, -S}, P_0) = U( \zeta( f_{0}, P_0), \eta_0) - U( \zeta( f_{0,-S}, P_0), \eta_0)$ for sets $S \subseteq \{1, \dotsc, p\}$. To estimate $V(f_0, P_0)$ and  $V(f_{0, -S}, P_0)$, we will consider the plug-in estimators $V_{n} := U( \zeta( f_{n}, \d{P}_n), \eta_n)$ and $V_{n,-S} := U( \zeta( f_{n,-S}, \d{P}_n), \eta_n)$, where $f_{n}$ and $f_{n, -S}$ are estimators of $f_0$ and  $f_{0, -S}$, respectively, and $\eta_n$ is an estimator of $\eta_0$. As in \cite{williamson2022general}, $f_{n, -S}$ may be based on data-adaptive nonparametric or semiparametric regression methods. We then define an estimator of $\psi_{0, S}$ as $\psi_{n, S} := V_{n}- V_{n,-S}$.  For our proposed measure $(|S_0|, \psi_{0,S_0})$ of a variable selection algorithm whose selected subset $S_n$ is converging to $S_0$, we then propose the estimator $(|S_n|, \psi_{n,S_n})$. Our estimator differs from that of \cite{williamson2022general} because the selected subset $S_n$ is random.

For our proposed VROC measure $(\psi_{0, R_{0, [1]}}, \dotsc, \psi_{0, R_{0, [p]}})$ of a variable ranking algorithm whose ranks $R_{n}$ (which we recall is a random permutation of $\{1, \dotsc, p\}$) are converging to $R_{0}$, we propose the VROC estimator $(\psi_{n, R_{n, [1]}}, \dotsc, \psi_{n,R_{n, [p]}})$. For the AUVROC $\phi(\psi_{0, R_{0, [1]}}, \dotsc, \psi_{0,R_{0, [p]}})$, we propose the analogous  AUVROC estimator 
\[
    \phi(\psi_{n, R_{n, [1]}}, \dotsc, \psi_{n, R_{n, [p]}}) := \sum_{j=2}^{p}\frac{\psi_{n, R_{n, [j]}} + \psi_{n, R_{n, [j-1]}}}{2}.
\]

\subsection{Asymptotically stable selection and ranking algorithms}\label{sec:stable ranking algorithms}

When defining our measure of variable selection and ranking algorithms, we assumed that the algorithm  converges in probability to a fixed limit. This is often too strong to hold in practice, so we now relax this assumption. One major reason that a ranking or selection algorithm may not converge to a fixed limit is the presence of null variables---that is, variables that don't change the value of predictiveness function. For example, suppose there are $p = 3$ independent variables, but the outcome only depends on the first variable. In this case, there is no true signal upon which to form a ranking of the last two variables, so these variables will be ranked on noise alone. As a result, the ranking $R_n$ may not converge to a single fixed ranking as $n$ increases, as small changes in the noise may change the ranking of the null variables. However, as long as the ranking algorithm asymptotically ranks the first variable first, it is reasonable to expect that $R_n$ will asymptotically be contained in the \emph{set} of rankings $\mathcal{R}_0$ with the first variable ranked first, i.e.,  $\s{R}_0 = \{(1,2,3), (1,3, 2)\}$. Furthermore, the true oracle predictiveness function sequence $(f_{0, -R_{0, [1]}}, f_{0, -R_{0, [2]}}, f_{0, -R_{0, [3]}})$ is the same for each rank $R \in \mathcal{R}_0$ because both $X_2$ and $X_3$ are null variables. Similar situations can also happen for variable selection algorithms. For instance, if an algorithm is designed to select at least two variables, but as above there is only one true non-null variable, the algorithm may not converge to a fixed $S_0$ because it may add null variables to the selected set at random.  However, it may still be reasonable to expect that the algorithm is asymptotically contained in the set $\mathcal{S}_0$ with the first variable always selected, i.e.,  $\s{S}_0 = \{\{1, 2\}, \{1, 3\}\}$. Furthermore, the true oracle predictiveness function is the same for each $S \in \s{S}_0$. This leads us to the relaxed notion of \emph{asymptotically stable} selection and ranking algorithms. 

\begin{define}[Asymptotically stable selection and ranking algorithms]
    A variable selection algorithm $S_n$ is asymptotically stable with limiting selection set $\s{S}_0$ if $P_0(S_n \in \s{S}_0) \to 1$ and $f_{0, -S} = f_{0, -S'} $ for all $S, S' \in \s{S}_0$. A variable ranking algorithm $R_n$ is asymptotically stable with limiting rank set $\s{R}_0$ if $P_0(R_n \in \s{R}_0) \to 1$ and $f_{0, -R_{[j]}} = f_{0, -R_{[j]}'}$ for all $R, R' \in \s{R}_0$ and $j \in \{1, \dotsc, p\}$. 
\end{define}

A selection algorithm is asymptotically stable if it is asymptotically contained in a set of selections with common predictiveness function under $P_0$, and a ranking algorithm is asymptotically stable if it is asymptotically contained in a set of rankings with a common sequence of predictiveness functions under $P_0$. These definitions permit in particular that a selection algorithm includes random, non-converging sets of null variables in the selected set, and that a ranking algorithm ranks null variables in an arbitrary manner. Under this relaxed condition, our parameters of interest are still well-defined, and we will still be able to establish asymptotic results for our estimators. 

It is important to note that there are situations where selection or ranking algorithms may not be asymptotically stable. For example, suppose as above that there are $p = 3$ independent covariates, and that the true response model is $Y = X_1 + X_2^2 + \varepsilon$, for independent noise $\varepsilon$. If $X_2$ has a distribution symmetric around 0 and we rank the variables using MR defined in Section~\ref{sec:illust}, $X_2$ and $X_3$ will both be regarded as null by the model, and hence once again the ranking may not converge to a fixed rank, but instead only be asymptotically contained in the set $\mathcal{R}_0$ defined above. However, unlike the previous example, the predictiveness function sequences of the two ranks in $\mathcal{R}_0$ are no longer the same because $X_2$ is not truly null. Similar examples can be constructed for selection algorithms. Situations like this may not be covered by our theoretical results.

\subsection{Asymptotic linearity}\label{sec: asym linear}

We now establish conditions under which the estimators $V_{n, -S_n}$ and $V_{n, -R_{n, [j]}}$ are asymptotically linear and nonparametric efficient. We first introduce additional conditions, which are again specific to a subset $S \subseteq \{1, \dotsc, p\}$.

\begin{enumerate}[label=\textbf{(B\arabic*)},leftmargin=2cm]
    \item[\namedlabel{cond: convergence rate}{(B1)}] It holds that $\| f_{n, -S} - f_{0, -S} \|_{\s{F}} = o_{\prob_0}(n^{-1/4})$.
    \item[\namedlabel{cond: limited complexity}{(B2)}] There exists a $P_0$-Donsker class $\s{G}$ such that $P_0 (\dot{\zeta}(f_{n, -S}) \in \s{G} ) \to 1$.
    \item[\namedlabel{cond: eta asym linear}{(B3)}] It holds that $\eta_n = \eta_0 + \d{P}_n \phi_0 + o_{\prob_0}(n^{-1/2})$.
\end{enumerate}

\begin{restatable}{thm}{thmasymlinear}\label{thm:asym linearity}
    If the selection algorithm $S_n$ is asymptotically stable with limiting selection set $\s{S}_0$ and conditions~\ref{cond: local continuity}--\ref{cond: derivative} and \ref{cond: convergence rate}--\ref{cond: eta asym linear} hold for all $S \in \s{S}_0$, then for all $S \in \s{S}_0$,
    \[
        V_{n, -S_n}  = V(f_{0, -S}, P_0) + \d{P}_n \dot{V}_0(f_{0, -S}) + o_{\prob_0}(n^{-1/2})
    \]
    with $\dot{V}_0(f_{0, -S})$ given in Theorem~\ref{thm:pathwise differentiable}. If the ranking algorithm $R_n$ is asymptotically stable with limiting ranking set $\s{R}_0$ and conditions~\ref{cond: local continuity}--\ref{cond: derivative} and \ref{cond: convergence rate}--\ref{cond: eta asym linear} hold for all $S=R_{[j]}$ such that $ R \in \s{R}_0$ and $j \in \{1, \dotsc, p\}$, then for all $R \in \s{R}_0$,
    \[
        V_{n, -R_{n, [j]}}  = V(f_{0, -R_{[j]}}, P_0) + \d{P}_n \dot{V}_0(f_{0, -R_{[j]}}) + o_{\prob_0}(n^{-1/2})
    \]
    with $\dot{V}_0(f_{0, -R_{[j]}})$ given in Theorem~\ref{thm:pathwise differentiable}. 
\end{restatable}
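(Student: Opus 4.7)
The plan is to first establish the asymptotic linearity result for a \emph{fixed} subset $S$ under conditions \ref{cond: local continuity}--\ref{cond: derivative} and \ref{cond: convergence rate}--\ref{cond: eta asym linear}, and then leverage the finiteness of $\s{S}_0$ and $\s{R}_0$ (both are subsets of a finite power set of $\{1,\dotsc,p\}$) together with the asymptotic stability definition to pass from random subsets $S_n$ and ranks $R_n$ to fixed members of the limiting sets.

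For the fixed-$S$ case, I would start from the decomposition $V_{n,-S} - V(f_{0,-S}, P_0) = U(\zeta(f_{n,-S}, \d{P}_n), \eta_n) - U(\zeta(f_{0,-S}, P_0), \eta_0)$ and apply a first-order Taylor expansion in $(\zeta, \eta)$ using condition~\ref{cond: derivative} to obtain
\[
    V_{n,-S} - V(f_{0,-S}, P_0) = \dot{U}_{\zeta,0}\bigl[\zeta(f_{n,-S}, \d{P}_n) - \zeta(f_{0,-S}, P_0)\bigr] + \dot{U}_{\eta,0}[\eta_n - \eta_0] + r_n,
\]
where $\dot{U}_{\zeta,0}$ and $\dot{U}_{\eta,0}$ are the partials evaluated at $(\zeta(f_{0,-S},P_0),\eta_0)$ and $r_n$ is a higher-order remainder controlled by $\|f_{n,-S} - f_{0,-S}\|_{\s{F}}^2 + (\eta_n - \eta_0)^2 = o_{\prob_0}(n^{-1/2})$ by \ref{cond: convergence rate} and \ref{cond: eta asym linear}. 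For the $\zeta$-increment I use linearity of $P \mapsto \zeta(f,P)$ from \ref{cond: local continuity} to write it as the sum of three standard pieces: the empirical process term $(\d{P}_n - P_0)\dot{\zeta}(f_{0,-S})$, which is the leading linear term; the drift term $(\d{P}_n - P_0)[\dot{\zeta}(f_{n,-S}) - \dot{\zeta}(f_{0,-S})]$, which is $o_{\prob_0}(n^{-1/2})$ by the Donsker condition \ref{cond: limited complexity} combined with the continuity of $\dot{\zeta}$ at $f_{0,-S}$ from \ref{cond: local continuity} and the consistency implied by \ref{cond: convergence rate}; and the bias term $P_0[\dot{\zeta}(f_{n,-S}) - \dot{\zeta}(f_{0,-S})]$, which is $O_{\prob_0}(\|f_{n,-S}-f_{0,-S}\|_{\s{F}}^2) = o_{\prob_0}(n^{-1/2})$ by the quadratic bound in~\ref{cond: local continuity} and the rate in~\ref{cond: convergence rate}. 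Substituting \ref{cond: eta asym linear} for $\eta_n - \eta_0$ and collecting terms yields the fixed-$S$ conclusion with influence function $\dot{V}_0(f_{0,-S})$ as in Theorem~\ref{thm:pathwise differentiable}.

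To upgrade to random $S_n$, I would exploit that $\s{S}_0$ is finite and $f_{0,-S}$ is the same for every $S \in \s{S}_0$ by the definition of asymptotic stability, so $\dot{V}_0(f_{0,-S})$ and $V(f_{0,-S}, P_0)$ coincide across $S \in \s{S}_0$; pick any representative $S^\star \in \s{S}_0$. Defining $A_n := n^{1/2}\bigl[V_{n,-S_n} - V(f_{0,-S^\star}, P_0) - \d{P}_n \dot{V}_0(f_{0,-S^\star})\bigr]$, for any $\epsilon > 0$ I bound
\[
    P_0(|A_n| > \epsilon) \leq P_0(S_n \notin \s{S}_0) + \sum_{S \in \s{S}_0} P_0\bigl(n^{1/2}|V_{n,-S} - V(f_{0,-S}, P_0) - \d{P}_n \dot{V}_0(f_{0,-S})| > \epsilon\bigr).
\]
The first term vanishes by asymptotic stability and each summand vanishes by the fixed-$S$ result. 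The ranking case is identical: since $\s{R}_0$ is finite and asymptotic stability for rankings guarantees $f_{0, -R_{[j]}}$ agrees across $R \in \s{R}_0$ for each $j$, the same union-bound argument applied to each coordinate $j \in \{1,\dotsc,p\}$ gives the desired linearity of $V_{n, -R_{n,[j]}}$.

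The main obstacle is handling the drift term $(\d{P}_n - P_0)[\dot{\zeta}(f_{n,-S}) - \dot{\zeta}(f_{0,-S})]$, since the Donsker assumption~\ref{cond: limited complexity} must be combined with norm-consistency of $f_{n,-S}$ and continuity of $\dot{\zeta}$ to invoke the standard empirical process lemma (e.g., Lemma~19.24 of \citet{van2000asymptotic}) that an asymptotically $L_2(P_0)$-vanishing sequence inside a Donsker class yields an $o_{\prob_0}(n^{-1/2})$ remainder; this step is where the simplifying linearity assumption on $\zeta$ really pays off, as without it one would face additional second-order terms. The union-bound step over random subsets, in contrast, is essentially free thanks to the finiteness of the power set.
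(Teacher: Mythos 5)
Your proof is correct and uses essentially the same ingredients as the paper's: the three-term decomposition of the $\zeta$-increment (empirical process, drift, and bias terms), the Donsker condition together with Lemma~19.24 of \cite{van2000asymptotic} to kill the drift term, the quadratic bound from \ref{cond: local continuity} combined with the rate in \ref{cond: convergence rate} for the bias term, and the delta method via \ref{cond: nuisance}--\ref{cond: derivative} and \ref{cond: eta asym linear}. The organizational difference is that you establish the fixed-$S$ result first and then union bound over the finite $\s{S}_0$, whereas the paper instead first derives $\|f_{n,-S_n} - f_{0,-S}\|_{\s{F}} = o_{\prob_0}(n^{-1/4})$ and then runs the decomposition with $S_n$ plugged in directly; your formulation is a clean way of handling the randomness of $S_n$ and, since a finite union of Donsker classes is Donsker, nothing is lost. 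One small imprecision worth fixing: condition~\ref{cond: derivative} provides only first-order differentiability of $U$, so the Taylor remainder $r_n$ is $o\left(|\zeta(f_{n,-S},\d{P}_n)-\zeta(f_{0,-S},P_0)| + |\eta_n - \eta_0|\right)$, not $O\left(\|f_{n,-S}-f_{0,-S}\|_{\s{F}}^2 + (\eta_n-\eta_0)^2\right)$ as you state; the conclusion $r_n = o_{\prob_0}(n^{-1/2})$ still holds because each increment is $O_{\prob_0}(n^{-1/2})$, but the quadratic bound you wrote would require second-order differentiability that is not assumed.
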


Theorem~\ref{thm:asym linearity} provides conditions under which the plug-in estimators  $V_{n, -S_n}$ and $V_{n, -R_{n, [j]}}$ are asymptotically linear with influence functions equal to the nonparametric efficient influence functions established in Theorem~\ref{thm:pathwise differentiable}. Theorem~\ref{thm:asym linearity} will be used to facilitate statistical inference for our parameters of interest in Section~\ref{sec:inference}.

Theorem~\ref{thm:asym linearity} differs from the results of \cite{williamson2022general} in that the set of covariates are random subsets of the $p$ covariates. However, other than the limiting selection set $\s{S}_0$ and limiting ranking set $\s{R}_0$, the variable selection and ranking algorithms do not play a role in the influence functions, and hence the limit distributions, of the estimators. This is because both the selected subset and variable ranks are discrete parameters, and the asymptotic stability assumption ensures that the true predictiveness of the estimated subsets equals a fixed limit asymptotically. However, in finite samples, the behavior of the selection or ranking algorithm can contribute to the sampling distribution of our estimators, which is not captured in the first-order asymptotic results of Theorem~\ref{thm:asym linearity}. This can result in under-coverage of confidence intervals based on Theorem~\ref{thm:asym linearity}. We propose an alternative bootstrap inference procedure in Section~\ref{sec:bootstrap} to address this issue.

The estimators considered in Theorem~\ref{thm:asym linearity} are based on the plug-in principle. Usually, plug-in estimators based on data-adaptive nuisance estimators inherit non-negligible asymptotic bias from the nuisance estimator, which hinders valid statistical inference for the parameter of interest. However, Theorem~\ref{thm:asym linearity} demonstrates that plug-in estimators do not suffer from this problem in this case.  As discussed in \cite{williamson2022general}, this is because $f_{0, -S}$ is a maximizer of $f \mapsto V(f, P_0)$ over $\s{F}_{-S}$, so that we may expect $\left.\frac{\partial}{\partial \varepsilon}  V(f_{\varepsilon, -S}, P_0) \right|_{\varepsilon = 0} = 0$ for a sufficiently smooth path $P_\varepsilon$ through $P_0$ at $\varepsilon = 0$. As a result, we can expect that $V(f, P_0) - V(f_{0,-S}, P_0)$ is bounded locally by $\|f - f_{0, -S}\|_{\s{F}}^2$, as required by condition~\ref{cond: local continuity}. Hence, the plug-in bias $V(f_{n, -S}, P_0) - V(f_{0, -S}, P_0)$ is controlled by the behavior of $\|f_{n, -S} - f_{0, -S}\|_{\s{F}}^2$, so that if $\|f_{n, -S} - f_{0, -S}\|_{\s{F}} = o_{\prob_0}(n^{-1/4})$, as required by condition~\ref{cond: convergence rate}, then the plug-in bias is $o_{\prob_0}(n^{-1/2})$. 

Since the rate $n^{-1/4}$ is slower than $n^{-1/2}$, condition~\ref{cond: convergence rate} can in principle be satisfied by nonparametric and semiparametric estimators, and does not require the use of parametric estimators for $f_{n}$. However, the rate $n^{-1/4}$ is not achievable without some smoothness or structural assumptions about $f_{0}$, and the strength of these assumptions needs to increase with $p$ due to the curse of dimensionality \citep{buhlmann2011statistics}. For example,  the minimax optimal rate of convergence of an estimator of $f_0$ in a model where $f_0$ is assumed to be $m$ times differentiable is  $n^{-m/(2m+p)}$ \citep{stone1982optimal}. Hence, to achieve a rate faster than $n^{-1/4}$, one would need $m > p / 2$. Similarly, if the covariates lie on a $d$-dimensional manifold in $\d{R}^p$ for $d < p$, then the rate $n^{-1/4}$ can be achieved if $f_0$ belongs to a Sobolev class with smoothness $\alpha$ for $\alpha > d/2$  \citep{bickel2007local}. If $f_0$ is assumed to be additive and differentiable, the rate $n^{-1/4}$ can be achieved for any $p$ \citep{stone1985additive}. As a final example, if $f_0$ is known to be a sparse function that depends only on $d \leq \min\{n, p\}$ variables, and $f_0$ belongs to a H{\"o}lder $\alpha$-smooth class, then the rate $n^{-1/4}$ can be achieved if $\alpha > d/2$ and $n^{-1/2} d \log(p/d) \to 0$ \citep{yun2015adaptive}.  However, whether the true function possesses these or other properties, and hence which regression estimator is best suited to the data, is typically unknown in practice. One approach to dealing with this uncertainty is to select between or combine several candidate estimators using cross-validation. For example, SuperLearner \citep{van2007super} is a generalization of the stacking algorithm that combines multiple candidate regression estimators, and achieves at least the best rate of convergence of the candidate estimators \citep{van2011targeted}.

Condition~\ref{cond: limited complexity} requires that the estimated derivative function $\dot{V}_0(f_{n, -S})$ falls in a $P_0$-Donsker class with probability tending to one, which is used to show that the empirical process term satisfies $(\d{P}_n - P_0)(\dot{V}_0(f_{n, -S}) - \dot{V}_0(f_{0, -S})) = o_{\prob_0}(n^{-1/2})$.  Satisfying this condition typically requires  restricting the complexity of the function class $\s{F}$. A main way this is accomplished is by using bracketing or uniform entropy \citep{van1996weak}. However, condition~\ref{cond: limited complexity} can fail when $f_{n, -S}$ is based on highly data-adaptive algorithms, as discussed in \cite{zheng2011cross} and \cite{chernozhukov2018double}. Another approach to controlling the empirical process term is cross-fitting, which we discuss in Section~\ref{sec:cv} below.  Condition~\ref{cond: eta asym linear} requires that the estimator $\eta_n$ of $\eta_0$ is asymptotically linear with the influence function $\phi_0$.

\subsection{Estimators based on cross-fitting}\label{sec:cv}

In this section, we define modified estimators based on cross-fitting and provide alternative asymptotic linearity results for these estimators. As discussed above, cross-fitting removes the need for the Donsker condition~\ref{cond: limited complexity}, which is particularly useful when using data-adaptive function estimators. 

We begin by obtaining the variable selection $S_n$ and rank $R_n$ on the whole data set. Then, we randomly (independently of the data) partition the data into $K \geq 2$ folds with roughly equal sizes. Here, ``roughly equal sizes" means that $\lim_{n \to \infty} \max_{1\leq k \leq K} \abs{\frac{n}{K n_k} - 1} = 0$, where $n_k$ is the size of the $k$-th fold. For simplicity, we assume that $K$ is fixed. For each $k \in \{1, \dotsc, K\}$, we construct estimators $f_{n, k}$, $f_{n, k, -S}$, $f_{n, k, -R_{[j]}}$, and $\eta_{n,k}$ of $f_{0}$, $f_{0, -S}$, $f_{0, -R_{[j]}}$, and $\eta_{0}$, respectively, based on the \emph{training set} for fold $k$---i.e., the data excluding fold $k$. For each $k$, we then define $V_{n, k} := U(\zeta(f_{n,k}, \d{P}_{n,k}), \eta_{n,k})$, $V_{n, k, -S_n} := U(\zeta(f_{n,k, -S_n}, \d{P}_{n,k}), \eta_{n,k})$,  and $V_{n, k, -R_{n,[j]}} := U(\zeta(f_{n,k, -R_{n,[j]}}, \d{P}_{n,k}), \eta_{n,k})$, where $\d{P}_{n,k}$ is the empirical distribution of the $k$th fold. Finally, we construct the $K$-fold cross-fitting estimators of $V(f_0, P_0)$, $V(f_{0,-S}, P_0)$ and $V(f_{0,-R_{[j]}}, P_0)$ as $\frac{1}{K}\sum_{k=1}^{K} V_{n, k}$, $\frac{1}{K}\sum_{k=1}^{K} V_{n, k, -S_n}$, and $\frac{1}{K}\sum_{k=1}^{K} V_{f_{n, k, -R_{n, [j]}}}$,  respectively.

We introduce the following conditions, which as above are specific to a subset $S \subseteq \{1, \dotsc, p\}$.
\begin{enumerate}[label=\textbf{(C\arabic*)},leftmargin=2cm]    
    \item[\namedlabel{cond: cv convergence rate}{(C1)}] It holds that $\| f_{n, k, -S} - f_{0, -S} \|_{\s{F}} = o_{\prob_0}(n^{-1/4})$ for each $k \in \{1, \dotsc, K\}$.
    \item[\namedlabel{cond: cv weak consistency}{(C2)}] It holds that $E_0 \|\dot{\zeta}(f_{n, k, -S}) - \dot{\zeta}(f_{0, -S})\|_{L_2(P_0)} = o(1)$ for each $k \in \{1, \dotsc, K\}$.
    \item[\namedlabel{cond: cv eta asym linear}{(C3)}] It holds that $\eta_{n, k} = \eta_0 + \d{P}_{n, k} \phi_0 + o_{\prob_0}({n_k}^{-1/2})$.
\end{enumerate}

\begin{restatable}{thm}{thmcvclassical}\label{thm: cv asym linear}
    If the selection algorithm $S_n$ is asymptotically stable with limiting selection set $\s{S}_0$ and conditions~\ref{cond: local continuity}--\ref{cond: derivative} and \ref{cond: cv convergence rate}--\ref{cond: cv eta asym linear} hold for $S \in \s{S}_0$,  then
    \[
        \frac{1}{K}\sum_{k=1}^{K} V_{n, k, -S_n}  = V(f_{0, -S}, P_0) + \d{P}_n \dot{V}_0(f_{0, -S}) + o_{\prob_0}(n^{-1/2}).
    \]
    If the ranking algorithm $R_n$ is asymptotically stable with limiting ranking set $\s{R}_0$ and conditions~\ref{cond: local continuity}--\ref{cond: derivative} and \ref{cond: cv convergence rate}--\ref{cond: cv eta asym linear} hold for all $S=R_{[j]}$ where $ R \in \s{R}_0$ and $j \in \{1, \dotsc, p\}$, then
    \begin{equation*}
        \frac{1}{K}\sum_{k=1}^{K} V_{n, k, -R_{n, [j]}}  = V(f_{0, -R_{[j]}}, P_0) + \d{P}_n \dot{V}_0(f_{0, -R_{[j]}}) + o_{\prob_0}(n^{-1/2}).
    \end{equation*}  
\end{restatable}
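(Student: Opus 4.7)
The plan is to establish the per-fold expansion
\[
V_{n,k,-S_n} = V(f_{0,-S}, P_0) + \d{P}_{n,k}\dot V_0(f_{0,-S}) + o_{\prob_0}(n^{-1/2})
\]
for each $k \in \{1,\dotsc,K\}$ and then average over $k$. Because $p$ is fixed, the limiting set $\s{S}_0$ is a finite subset of the power set of $\{1,\dotsc,p\}$, and asymptotic stability of $S_n$ lets me restrict attention to the event $\{S_n \in \s{S}_0\}$, which has probability tending to one. On $\{S_n = S\}$ for a fixed $S \in \s{S}_0$ we have $f_{n,k,-S_n} = f_{n,k,-S}$ (which depends on the data only through the training set $\s{D}_{-k}$) and $f_{0,-S_n} = f_{0,-S}$, so it suffices to prove the per-fold expansion for each fixed $S \in \s{S}_0$; a finite-union argument then transfers everything to $S_n$.

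For a fixed $S \in \s{S}_0$, using $V(f,P) = U(\zeta(f,P),\eta(P))$ with $P \mapsto \zeta(f,P)$ linear and first-order Taylor expanding $U$ about $(\zeta(f_{0,-S}, P_0), \eta_0)$ via~\ref{cond: derivative}, I obtain
\begin{align*}
V_{n,k,-S} - V(f_{0,-S}, P_0) = \dot{U}_\zeta\bigl[\d{P}_{n,k}\dot\zeta(f_{n,k,-S}) - P_0\dot\zeta(f_{0,-S})\bigr] + \dot{U}_\eta\bigl[\eta_{n,k} - \eta_0\bigr] + R_{n,k},
\end{align*}
where $\dot U_\zeta, \dot U_\eta$ are evaluated at $(\zeta(f_{0,-S}, P_0), \eta_0)$ and $R_{n,k}$ is a second-order remainder. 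Condition~\ref{cond: cv eta asym linear} rewrites the $\eta$-term as $\dot U_\eta \d{P}_{n,k}\phi_0 + o_{\prob_0}(n^{-1/2})$. For the $\zeta$-term I would split
\begin{align*}
\d{P}_{n,k}\dot\zeta(f_{n,k,-S}) - P_0\dot\zeta(f_{0,-S}) &= (\d{P}_{n,k} - P_0)\dot\zeta(f_{0,-S}) \\
&\quad + (\d{P}_{n,k} - P_0)\bigl[\dot\zeta(f_{n,k,-S}) - \dot\zeta(f_{0,-S})\bigr] \\
&\quad + P_0\bigl[\dot\zeta(f_{n,k,-S}) - \dot\zeta(f_{0,-S})\bigr].
\end{align*}
The third ``drift'' term is $o_{\prob_0}(n^{-1/2})$ by the quadratic bound in~\ref{cond: local continuity} combined with the $n^{-1/4}$-rate in~\ref{cond: cv convergence rate}, and once the middle empirical process term is controlled the quadratic remainder $R_{n,k}$ is also $o_{\prob_0}(n^{-1/2})$.

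The main obstacle is this empirical process term, and it is where cross-fitting replaces the Donsker hypothesis~\ref{cond: limited complexity}. Conditionally on the training set $\s{D}_{-k}$, the function $g_n := \dot\zeta(f_{n,k,-S}) - \dot\zeta(f_{0,-S})$ is deterministic while the observations in fold $k$ remain i.i.d.\ from $P_0$, so
\[
E_0\Bigl[\bigl\{\sqrt{n}(\d{P}_{n,k}-P_0)g_n\bigr\}^2 \,\Big|\, \s{D}_{-k}\Bigr] \leq \frac{n}{n_k}\norm{g_n}_{L_2(P_0)}^2.
\]
Since $n/n_k \to K$, and since~\ref{cond: cv weak consistency} together with Markov's inequality implies $\norm{g_n}_{L_2(P_0)} \to 0$ in $\prob_0$-probability, a conditional-then-unconditional Chebyshev bound (splitting the probability over $\{\norm{g_n}_{L_2(P_0)}^2 \leq \delta\}$ and its complement and then letting $\delta \downarrow 0$) yields $(\d{P}_{n,k}-P_0) g_n = o_{\prob_0}(n^{-1/2})$. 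This is the key technical step that makes the Donsker-free argument go through under the weaker condition~\ref{cond: cv weak consistency}.

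Combining the surviving pieces gives the per-fold expansion, so it remains to average over $k$ and convert $\frac{1}{K}\sum_{k=1}^K \d{P}_{n,k}\dot V_0(f_{0,-S})$ into $\d{P}_n \dot V_0(f_{0,-S})$. Writing
\[
\frac{1}{K}\sum_{k=1}^K \d{P}_{n,k} h = \d{P}_n h + \frac{1}{n}\sum_{k=1}^K \Bigl(\tfrac{n}{K n_k} - 1\Bigr)\sum_{i \in \s{D}_k} h(Z_i),
\]
and using $\max_k |n/(K n_k) - 1| = o(1)$ together with the CLT bound $\sum_{i \in \s{D}_k}\dot V_0(f_{0,-S})(Z_i) = O_{\prob_0}(\sqrt{n})$ (since $\dot V_0(f_{0,-S})$ is mean-zero and square-integrable under $P_0$), shows the discrepancy is $o_{\prob_0}(n^{-1/2})$. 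For the ranking statement, I would apply the selection argument with $S = R_{[j]}$ for each $j \in \{1,\dotsc,p\}$ and each $R \in \s{R}_0$, again exploiting that $\s{R}_0$ is finite (since $p$ is fixed) so that asymptotic stability of $R_n$ reduces everything to a finite union of events of probability tending to one.
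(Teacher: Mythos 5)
Your proof is correct and follows essentially the same route as the paper: reduce to a fixed $S \in \s{S}_0$ via asymptotic stability and a finite-union argument, split $\zeta(f_{n,k,-S},\d{P}_{n,k}) - \zeta(f_{0,-S},P_0)$ into a drift term controlled by~\ref{cond: local continuity} and~\ref{cond: cv convergence rate}, an empirical-process term controlled by conditioning on the training set, and the target linearization, then pass through $U$ by the delta method and average over folds. The one genuine technical difference is in the empirical-process step: the paper invokes a conditional $L_1$ maximal inequality, $\E_0\abs{(\d{P}_{n,k}-P_0)h \mid \s{D}_{-k}} \leq c\, n_k^{-1/2}\norm{h}_{L_2(P_0)}$, borrowed from Williamson et al.\ (stated as \Cref{lemma: cv tool 2}), which combined directly with~\ref{cond: cv weak consistency} gives $\E_0\abs{\d{G}_{n,k}g_n}\to 0$; whereas you compute the conditional second moment $n_k^{-1}\norm{g_n}^2_{L_2(P_0)}$ and run a split Chebyshev argument after first converting~\ref{cond: cv weak consistency} into in-probability convergence of $\norm{g_n}_{L_2(P_0)}$. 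Both yield the needed $o_{\prob_0}(n_k^{-1/2})$, and your $L_2$-variance route is self-contained (no auxiliary lemma) at the mild cost of the extra $\delta$-splitting step. Your fold-averaging algebra via the identity $\frac{1}{K}\sum_k\d{P}_{n,k}h - \d{P}_n h = \frac{1}{n}\sum_k(\tfrac{n}{Kn_k}-1)\sum_{i\in\s{D}_k}h(Z_i)$ and the CLT bound on the mean-zero partial sums is, if anything, slightly more explicit than the paper's.

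Two small points to make the write-up airtight. First, when reducing the empirical-process bound to a fixed $S$, you should be explicit that the indicator $I(S_n = S)$ is not measurable with respect to the training set $\s{D}_{-k}$ (since $S_n$ uses the full sample); the finite-union bound $P_0(\cdot, S_n\in\s{S}_0) \le \sum_{S'\in\s{S}_0}P_0(\abs{\d{G}_{n,k}[\dot\zeta(f_{n,k,-S'})-\dot\zeta(f_{0,-S})]}\geq\varepsilon)$ is what replaces the random $S_n$ by a deterministic index before conditioning, not a direct restriction to the event $\{S_n=S\}$. Second, the conditional variance identity uses that $\E_0[(\d{P}_{n,k}-P_0)g_n\mid\s{D}_{-k}]=0$, which holds precisely because fold $k$ is independent of $\s{D}_{-k}$ and i.i.d.\ from $P_0$; worth stating once so the Chebyshev step is clearly a variance, not just a second-moment, bound.
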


Theorem~\ref{thm: cv asym linear} provides conditions under which the cross-fitting estimators are asymptotically linear with influence functions equal to the nonparametric efficient influence functions established in Theorem~\ref{thm:pathwise differentiable}. Notably, asymptotic linearity of the cross-fitting estimators does not require the Donsker condition~\ref{cond: limited complexity}. Condition~\ref{cond: cv convergence rate} requires that the maximal error of the nuisance estimators converges faster than $n^{-1/4}$, which is analogous to condition~\ref{cond: convergence rate} used to establish Theorem~\ref{thm:asym linearity}. Condition~\ref{cond: cv weak consistency} requires convergence in mean of the estimated influence function, which is used to control the empirical process term of the cross-fitting estimator. To control the empirical process term of the non-cross-fitting estimators in Theorem~\ref{thm:asym linearity}, we only needed that $\|\dot{\zeta}(f_{n, -S}) - \dot{\zeta}(f_{0, -S})\|_{L_2(P_0)} = o_{\prob_0}(1)$, which follows from conditions~\ref{cond: local continuity} and~\ref{cond: convergence rate}. Hence, we did not require an analogue of condition~\ref{cond: cv weak consistency} in the conditions for Theorem~\ref{thm:asym linearity}. However, condition~\ref{cond: cv weak consistency} would follow from conditions~\ref{cond: local continuity} and~\ref{cond: cv convergence rate} if $\sup_{f \in \s{F}}|\dot{\zeta}(f)|$ is uniformly bounded. In Section~\ref{sec:sim}, we will provide simulations comparing the properties of the estimator with and without cross-fitting. Condition~\ref{cond: cv eta asym linear} requires that the estimator $\eta_{n, k}$ of $\eta_0$ is asymptotically linear with influence function $\phi_0$ for each fold.

\section{Large-sample statistical inference}\label{sec:inference}\label{sec:bootstrap}

Theorems~\ref{thm:asym linearity} and~\ref{thm: cv asym linear} can be used to construct asymptotically valid confidence intervals for $\psi_{0, S}$, where $S \in \s{S}_0$, as long as $\psi_{0,S} > 0$. If $\sigma_n^2$ is a consistent estimator of  $\sigma_0^2 :=  P_0\left[\dot{V}_0(f_0) - \dot{V}_0(f_{0, -S})\right]^2$, then a Wald confidence interval for $\psi_{0, S}$ is given by $\psi_{n, S_n} \pm z_{1-\alpha/2}n^{-1/2}\sigma_n$, where $z_{p}$ is the lower $p$th quantile of the standard normal distribution. Theorems~\ref{thm:asym linearity} and~\ref{thm: cv asym linear} can also be used to construct asymptotically valid confidence intervals for each $\psi_{0, R_{[j]}}$, as long as $\psi_{0,R_{[j]}} > 0$, as well as uniformly valid confidence sets for  $(\psi_{0, R_{[1]}}, \dotsc, \psi_{0, R_{[p]}})$, where $R \in \s{R}_0$, as long as $\psi_{0, R_{[1]}} > 0$. The asymptotic variance can be estimated using so-called \emph{influence function-based} estimators. Since these methods are well-known, we omit the details here, but provide them in Supplementary Material.

The bootstrap is an alternative method of constructing confidence intervals  \citep{efron1982jackknife, efron1994introduction}. In some settings, bootstrap confidence intervals have been shown to have higher-order accuracy and better finite-sample coverage than Wald intervals \citep{diciccio1988review, hall1988theoretical, hall1992bootstrap}. In our setting, the bootstrap may be able to address at least two sources of potential finite-sample bias in the large-sample confidence intervals defined above. First, even if the selection or ranking algorithm is asymptotically stable, it may possess variability in finite samples that is not captured by the influence function-based variance estimators. Second, while the precise behavior of the prediction estimators does not play a role in the asymptotic distribution of our estimators as long as the prediction estimators satisfy the rate and complexity conditions, they may contribute to the finite-sample variability of our estimators. Accounting for these two sources of additional variability could improve the properties of our confidence intervals.

To implement a standard empirical bootstrap, we would generate $n$ IID samples from the empirical distribution $\d{P}_n$ and use the bootstrap data to construct a bootstrap estimator in the exact same manner as the estimator was constructed using the original data. 
However, this standard approach has two shortcomings for our estimators. First, to avoid model misspecification, we advocate for using nonparametric or semiparametric regression estimators or even an ensemble of many such estimators to estimate the prediction functions $f_{0, -S}$ and $f_{0, -R_{[j]}}$. Such estimation procedures may be computationally intensive, and repeating this computationally intensive procedure for every bootstrap sample may be infeasible since the number of bootstrap samples $B$ is typically in the hundreds or thousands. Second, the bootstrap can fail if the estimator is sensitive to replicated observations \citep{bickel1997resampling}, which may be the case for our estimators. Many data-adaptive regression estimators, including ensemble estimators such as SuperLearner \citep{van2007super}, involve cross-validation as part of the procedure. When there are replicated observations in the data,  the same observation can appear in the training and test sets, which breaks the independence of training and test sets and leads to overfitting \citep[Chapter~28]{van2018targeted}. 

To address these two problems with the standard empirical bootstrap, we propose a modified \textit{partial bootstrap} procedure. Specifically, when constructing our estimators using the bootstrap data, we use the prediction function estimator based on the \emph{original data} rather than constructing new prediction function estimators using the bootstrap data. This greatly reduces the computational burden of the bootstrap, and in some cases makes the bootstrap feasible when it would not have been. In addition, issues with replicated observations due to the prediction function estimator are also resolved. We note that by fixing the prediction estimator, our proposed partial bootstrap does not account for variability in this estimator, which may result in worse finite-sample performance than a procedure that does account for this variability. However, our partial bootstrap does account for variability in the selection or ranking algorithm. We note that since we are proposing to bootstrap the variable selection or ranking algorithm, we are implicitly assuming this algorithm is not sensitive to replicated observations.

Here are the details of our partial bootstrap procedure for the $K$-fold estimator based on cross-fitting. As in Section~\ref{sec:cv}, we randomly partition the indices $\{1, \dotsc, n\}$ into $K$ folds $I_1, \dotsc, I_{K}$ of roughly equal sizes, and for each $k$ we construct the prediction function estimator $f_{n, k, -S}$ using the data excluding the indices in $I_k$. For each $b \in \{1, \dotsc, B\}$, we construct the $b$th empirical bootstrap estimator as follows. As usual for the empirical bootstrap, we first draw $n$ indices $(\alpha_{b,1}, \dotsc, \alpha_{b,n})$  IID from a uniform distribution on $\{1, \dotsc, n\}$, and we define $\left\{\left(X_{\alpha_{b,i}}, Y_{\alpha_{b,i}}\right): i = 1, \dotsc, n\right\}$ as the bootstrap data (which typically contains replicates of the original observations). We then estimate the variable selection $S_{n}^{(b)}$ and variable rank $R_{n}^{(b)}$ using the bootstrap data. Next, for each $k \in \{1, \dotsc, K\}$, we define $\d{P}_{n,k}^{(b)}$ as the empirical distribution of the bootstrap data whose indices fall in $I_k$. We then define $\psi_{n, k, S}^{(b)} := V(f_{n,k}, \d{P}_{n,k}^{(b)}) -V(f_{n, k, -S}, \d{P}_{n,k}^{(b)})$ and $\psi_{n,S}^{(b)} := \sum_{k=1}^{K}  w_{n,k}^{(b)} \psi_{n, k, S}^{(b)}$, where $w_{n,k}^{(b)} := \frac{1}{n} \sum_{i=1}^n I(\alpha_{b,i} \in I_k)$. Finally, we use the bootstrap estimates $\left(\psi_{n,S}^{(1)}, \dotsc, \psi_{n,S}^{(B)}\right)$ to construct bootstrap confidence intervals. 

\section{Numerical studies}\label{sec:sim}

We conducted a simulation study to validate the large-sample results of Section~\ref{sec:theory} and to assess the finite-sample performance of the proposed methods. We considered the following data-generating process $P_0$. We first  generated  $(X_1, X_2, X_3, X_4, X_5)$ from a multivariate normal distribution with $E_0[X_j] = 0$ and $\mathrm{Var}_0(X_j)= 1$ for all $j$ and $\mathrm{Cor}_0(X_j, X_k) = 0.4$ for each $j \neq k$. Given $(X_1, \dotsc, X_5)$, we then generated $Y$ from a normal distribution with mean 
\[
    \mu_0(x_1, \dotsc, x_5) := (x_1+0.5)(x_2+1)  + \sqrt{\max(x_2 + 5, 0)} + 5\sqrt{(x_3-0.2)^2 + 1}
\]
and variance $1 + \abs{x_4} + \abs{x_5}$. Hence, our data-generating process involved an interaction term, non-linear terms, null variables, correlated variables, and heteroskedasticity. 

We considered three ranking algorithms. First, we considered the ranking obtained from the coefficient path from a LASSO regression \citep{tibshirani1996regression} using the default settings in the \texttt{glmnet} package in \texttt{R}. Second, we considered the ranking obtained by ordering the $p$-values from smallest to largest from a generalized additive model \citep{hastie1990generalized} using the default settings in the \texttt{mgcv} package in \texttt{R}. Finally, we considered the ranking obtained by ordering the estimated variable importances from smallest to largest from a multivariate adaptive regression splines regression \citep{friedman1991multivariate} using the default settings in the \texttt{earth} package in \texttt{R}. Throughout this section, we abbreviate ``generalized additive model" as ``GAM" and ``multivariate adaptive regression splines" as ``MARS". By simulating a large number of samples and inspecting the rankings for each procedure,  we determined that MARS and GAM were converging to the limiting rank set $\s{R}_0 = \{(3,1,2,4,5), (3,1,2,5,4)\}$ with true $R$-squared predictiveness sequence $(0.39, 0.55, 0.66, 0.70, 0.75)$ and AUVROC equal to $2.48$, while LASSO was converging to the limiting rank set $\s{R}_0 = \{(1,2,3,4,5), (1,2,3,5,4)\}$ with true $R$-squared predictiveness sequence $(0.14, 0.23, 0.66, 0.70, 0.75)$ and AUVROC 2.03.

We considered three selection algorithms. First, we considered the subset obtained from a LASSO regression using penalty parameter selected by ten-fold cross-validation and default settings in the \texttt{glmnet} package in \texttt{R}. Second, we considered the selection obtained by including the variables that have $p$-value smaller than 0.05 from a GAM using the default settings in the \texttt{mgcv} package in \texttt{R}. Finally, we considered the selection obtained from a MARS regression using the default settings in the \texttt{earth} package in \texttt{R}. By simulating a large number of samples and inspecting the selection for each procedure,  we determined that LASSO and MARS were converging to the limiting selection set $\s{S}_0 = \{\{1,2,3\}\}$ with true $R$-squared predictiveness measure $0.66$ and PPSV $0.22$, and GAM was converging to the limiting selection set $\s{S}_0 = \{\{1,2,3,4,5\}\}$  with true $R$-squared predictiveness measure $0.75$ and PPSV $0.15$. 

For each sample size $n$ equal to 500, 1000, 2000, 3000, 4000, 5000, 10000 and 20000, we simulated 500 samples of $n$ IID observations from the above data-generating mechanism. We considered $V$ equal to the $R$-squared predictiveness metric defined in Section~\ref{sec:vimp}. We estimated our proposed measures $\psi_{0,S_0} / |S_0|$ and AUVROC using both the non-cross-fitting and cross-fitting estimators with $K = 5$ folds. To estimate the regression functions, we used SuperLearner \citep{van2007super} with five-fold cross-validation and a library consisting of \texttt{xgboost} \citep{chen2016xgboost, xgboostpackage}, \texttt{gam} \citep{hastie1990generalized, gampackage}, and \texttt{earth} \citep{friedman1991multivariate, earthpackage}. We constructed 95\% Wald intervals using the cross-fitting influence function-based variance estimator as in Section~\ref{sec:inference}. We also used the partial bootstrap procedure defined in \Cref{sec:bootstrap} to construct bootstrap confidence intervals. We considered three types of bootstrap confidence intervals: the percentile method, percentile $t$-method, and Efron's percentile method (in the terminology of \citealp{van2000asymptotic}).

Figure~\ref{fig:estimate} displays the true VROC curves for the three ranking algorithms along with cross-fitting estimators and pointwise Wald and uniform 95\% confidence sets for a single simulation with $n=1000$. GAM and MARS both produced the ranking $(3, 1, 2, 4, 5)$, and LASSO produced the ranking $(1, 2, 3, 4, 5)$. The results for GAM and MARS are the same in Figure~\ref{fig:estimate} because they produced the same ranking. The estimated AUVROC for GAM and MARS was $2.55$ (95\% CI: 2.40--2.69), and the estimated AUVROC for LASSO was $2.07$ (1.96--2.18). Using the variable selection algorithms described above, GAM selected $\{X_1, X_2, X_3\}$, LASSO selected $\{X_1,X_2,X_3,X_5\}$ and MARS selected $\{X_1,X_2,X_3,X_4,X_5\}$. In this case, the estimated PPSV for GAM, LASSO and MARS were 0.23 (0.21--0.24), 0.17 (0.16--0.18), and 0.15 (0.14--0.16) respectively.

\begin{figure}[!htbp]
    \centering
    \includegraphics[scale=0.7]{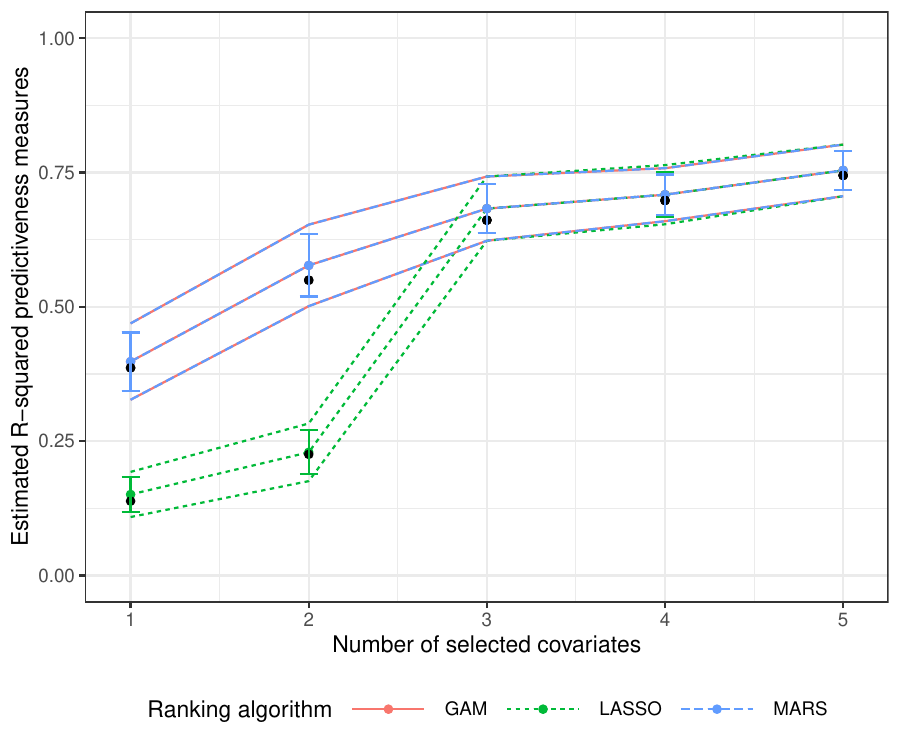}
    \caption{
    True VROC (black points) for the three different variable ranking algorithms considered along with estimated VROC curves and 95\% pointwise intervals displayed as error bars and 95\% uniform confidence regions displayed as lines for a single simulation with $n=1000$.}
    \label{fig:estimate}
\end{figure}

We now turn to the results of the simulation study. Figure~\ref{fig:asymptotic behavior of estimators} displays the properties of the AUVROC (top row) and PPSV (bottom row) estimators and corresponding Wald confidence intervals. The left column displays $n^{1/2}$ times the bias of the estimators. In all cases, the bias appears to tend to zero faster than $n^{-1/2}$ for large enough sample sizes, but there is considerable heterogeneity in the finite-sample bias of the estimators. The absolute bias of the estimators based on cross-fitting is generally smaller than the bias of the corresponding estimators without cross-fitting. The bias of the AUVROC estimators without cross-fitting for the GAM and MARS ranking algorithms in particular is substantial. This demonstrates the importance of cross-fitting in reducing bias when data-adaptive nuisance estimators, such as extreme gradient boosting, are used to estimate the predictiveness function. For the PPSV, the absolute bias appears to decrease slower than $n^{-1/2}$ for $n$ less than roughly 5000. As we discuss more below, this is because the sampling distribution of the PPSV estimators is multi-modal at these sample sizes.

The middle column of Figure~\ref{fig:asymptotic behavior of estimators} displays $n^{1/2}$ times the standard deviation of the estimators. The standard deviations appear to stabilize at the $n^{-1/2}$ rate for all estimators except that of the PPSV with the GAM algorithm. This is because the subset selected by the GAM algorithm is still not stable at sample size $20000$, and the standard deviation decreases as the selection algorithm stabilizes. The standard deviation of the cross-fitting estimators is larger than that of the non-cross-fitting estimators for sample sizes less than 5000 because the cross-fitting estimators use a smaller training set when estimating the predictiveness function.

The right column of Figure~\ref{fig:asymptotic behavior of estimators} displays the empirical coverage rate of  95\% Wald confidence intervals for the estimators. The excess bias for the non-cross-fitting estimator translates to worse confidence interval coverage for both the AUVROC and PPSV. The confidence intervals for the AUVROC based on the cross-fitting estimators are valid in large sample sizes and have generally good performance for $n \geq 2000$. However, for the PPSV, even the confidence intervals based on cross-fitting have poor coverage. This is because the Wald confidence intervals do not take the variability of the selected set $S_n$ into account. The denominator of the PPSV estimator is $|S_n|$, which is integer-valued and hence varies substantially unless the selection algorithm $S_n$ is very stable. This results in a multi-modal sampling distribution of the PPSV estimator. The Wald interval uses a normal approximation to the sampling distribution, but the normal distribution is a bad approximation to the true multi-modal sampling distribution. The multi-modality of the sampling distribution is shown in additional figures in Supplementary Material. By sample size 20000, the selected set $S_n$ is stable enough for the MARS and LASSO algorithms that the Wald confidence intervals based on cross-fitting have close to nominal coverage. However, the set selected by GAM is not stable even at this sample size, which results in poor coverage. As we will discuss below, our partial bootstrap procedure can in some cases improve the coverage of the PPSV.

\begin{figure}[t!]
    \centering
    \includegraphics[width=\linewidth]{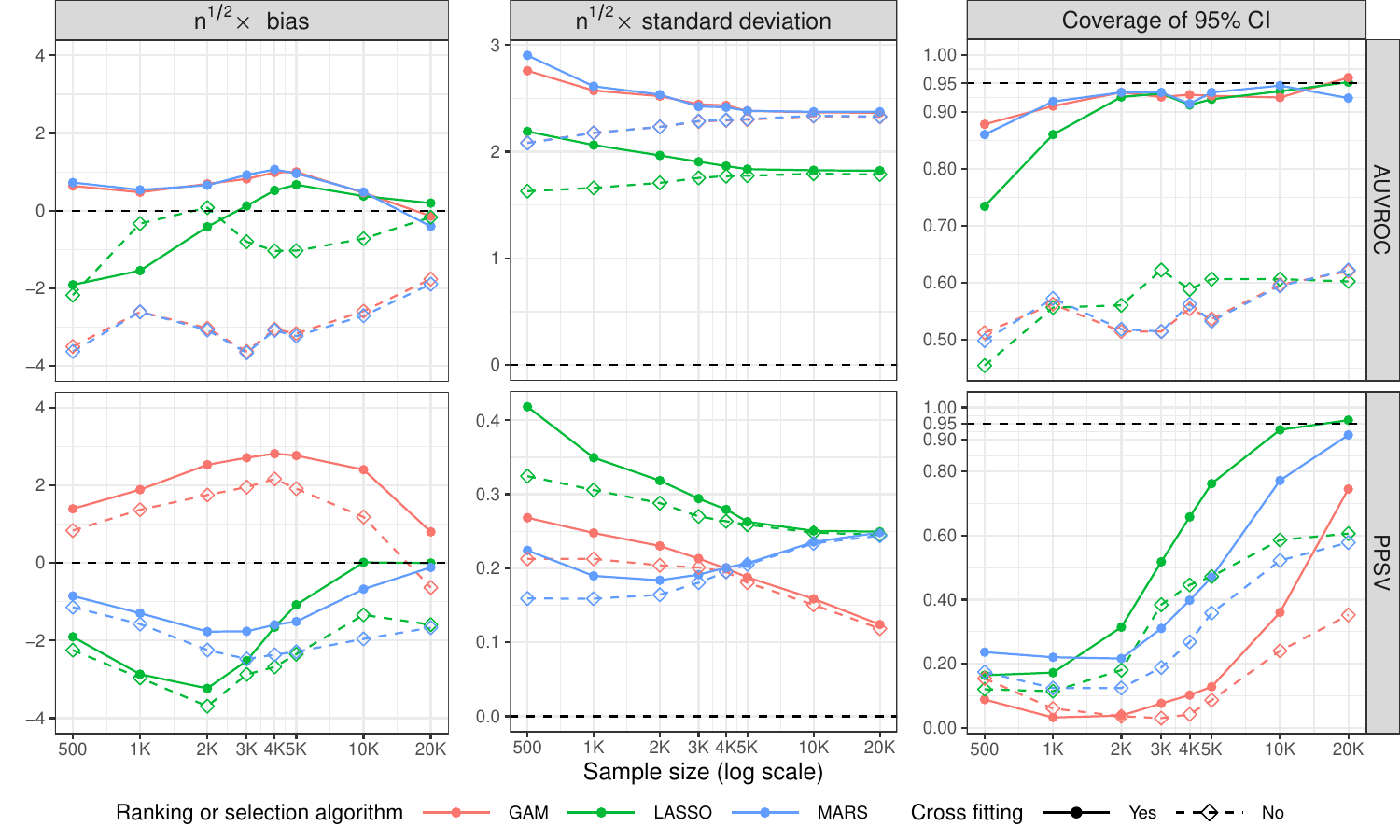}
    \caption{Properties of the AUVROC (top row) and PPSV (bottom row) estimators with and without cross-fitting for the three variable ranking and selection algorithms.}
    \label{fig:asymptotic behavior of estimators}
\end{figure}

Figure~\ref{fig:asymptotic behavior of bootstrap estimates} displays the empirical coverage rate of 95\% partial bootstrap confidence intervals based on the cross-fitting estimators for the three algorithms.  All bootstrap confidence intervals have close to 95\% coverage for sample size 20000. The coverage of all three types of bootstrap intervals for the AUVROC (top row) have comparable or better coverage than the Wald interval for small and moderate samples, and have coverage greater than 90\% for all cases when $n \geq 1000$.  The partial bootstrap yields better coverage than the Wald intervals for the AUVROC because the bootstrap procedure incorporates the variability of the ranking algorithm. For the PPSV (bottom row), the confidence intervals again have better coverage than the Wald intervals. This is again because the bootstrap procedure captures the variability of the  selection algorithm. However, in this case, the percentile and percentile-t methods still have far from nominal coverage for $n < 10000$. Efron's percentile method, which uses the quantiles of the sampling distribution of the estimator directly to construct confidence intervals, has the best coverage by a wide margin. For the GAM algorithm, Efron's percentile method has close to nominal coverage for most sample sizes, though it still requires larger sample sizes for the LASSO and MARS algorithms. We hypothesize that Efron's percentile method performs better than the other methods in this case because it uses the sampling distribution of the uncentered bootstrap estimator $\psi_{n, S_n}^{(j)} / |S_n^{(j)}|$ given the data to approximate the sampling distribution of  $\psi_{n, S_n} / |S_n|$, while the percentile and percentile-$t$ use the distribution of the centered bootstrap estimator $\psi_{n, S_n}^{(j)} / |S_n^{(j)}| - \psi_{n, S_n} / |S_n|$ given the data to approximate the sampling distribution of $\psi_{n,S_n} / |S_n| - \psi_{0, S} / |S|$. The uncentered bootstrap distribution may be a better approximation to the multi-modality of the true sampling distribution of $\psi_{n, S_n} / |S_n|$ than the distribution of the centered bootstrap estimator is to the true sampling distribution of $\psi_{n, S_n} / |S_n| - \psi_{0, S} / |S|$. 

\begin{figure}[t!]
    \centering
    \includegraphics[width=\linewidth]{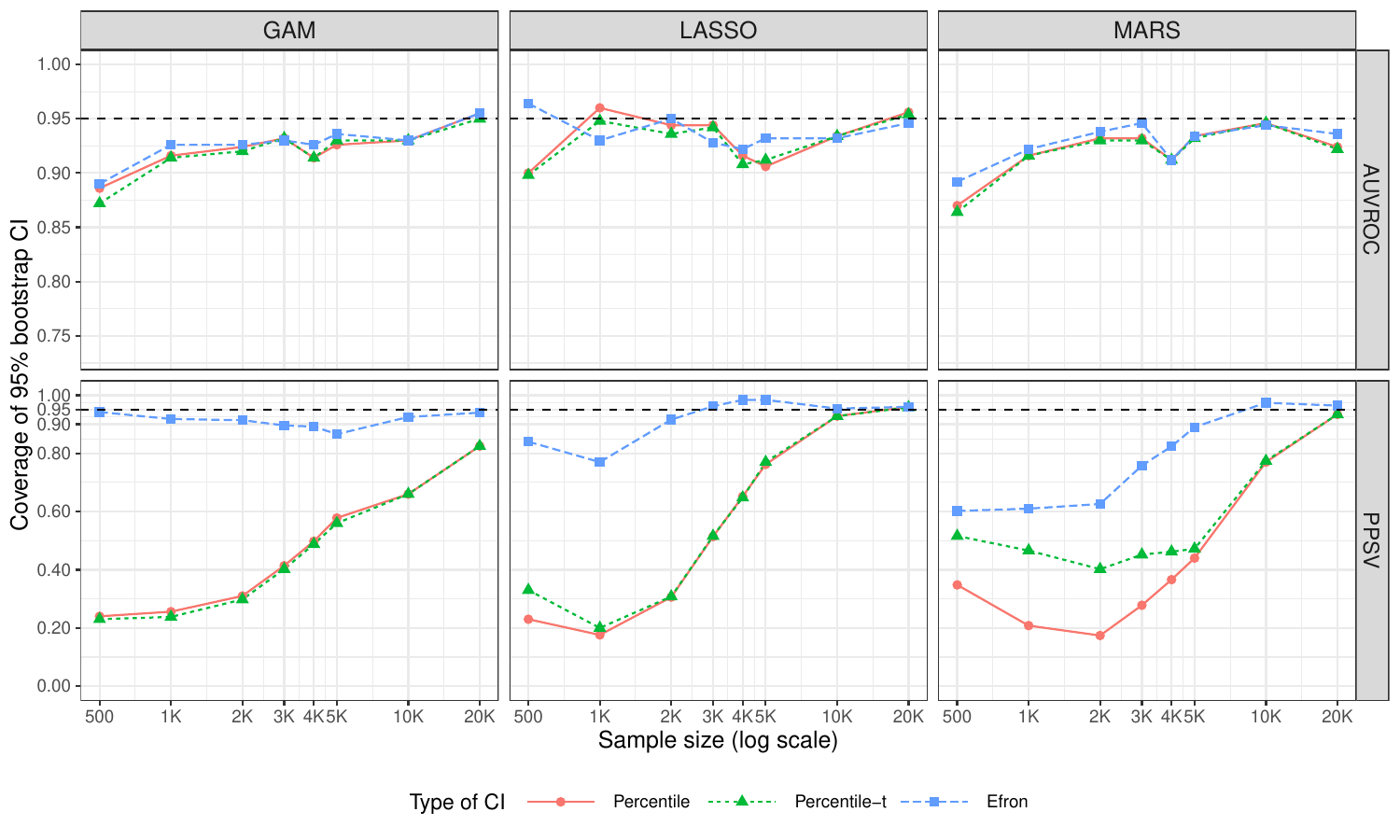}
    \caption{Empirical coverage of 95\% confidence intervals using the partial bootstrap for the AUVROC (top row) and PPSV (bottom row) with cross-fitting for the three ranking and selection algorithms. Note as \Cref{fig:asymptotic behavior of estimators}.}
    \label{fig:asymptotic behavior of bootstrap estimates}
\end{figure}

\section{Comparison of variable selection and ranking algorithms for predicting wine quality}\label{sec:analysis}

In this section, we use the methods developed in this article to compare variable selection and ranking algorithms for predicting the quality of wine. We use the data described in \cite{paulo2009modeling}, which are publicly available at \url{https://archive.ics.uci.edu/ml/datasets/Wine+Quality}. This data contains 11 physicochemical properties of $n=4898$ different \textit{vinho verde} wines. We treat these as our covariates $X$. The data also contain a quality score between 0 and 10, which was computed as the median of at least three blind taste tests. We treat this as the outcome $Y$. Hence, our goal is to predict the subjective rating of a wine based on its physicochemical properties.

We used LASSO, GAM, and MARS to rank and select among the 11 physicochemical properties in terms of their importance for predicting wine quality. We refer the reader to Section~\ref{sec:sim} for precise explanations of these ranking and selection algorithms. We evaluated these algorithms using the $R$-squared predictiveness metric. As in Section~\ref{sec:sim}, to estimate the regression function, we used SuperLearner with candidate library consisted of \texttt{xgboost}, \texttt{gam}, and \texttt{earth}. We used the estimators based on cross-fitting with $K = 5$ folds defined in Section~\ref{sec:cv}, and we constructed pointwise confidence intervals using the Wald method and equi-precision uniform confidence bands with influence function-based (co)variance estimator.

\begin{table}[!htbp]
    \centering
    \begin{tabular}{c|
    >{\centering}m{0.8cm}
    >{\centering}m{0.8cm}
    >{\centering}m{0.8cm}
    >{\centering}m{0.8cm}
    >{\centering}m{0.8cm}
    >{\centering}m{0.8cm}
    >{\centering}m{0.8cm}
    >{\centering}m{0.8cm}
    >{\centering}m{0.8cm}
    >{\centering}m{0.8cm}
    >{\centering\arraybackslash}m{0.8cm}
    }
        \toprule
         & \rotatebox[origin=c]{90}{\parbox[c]{2cm}{\centering volatile acidity}} & \rotatebox[origin=c]{90}{\parbox[c]{2cm}{\centering free sulfur dioxide}} & \rotatebox[origin=c]{90}{\parbox[c]{2cm}{\centering residual sugar}} & \rotatebox[origin=c]{90}{\parbox[c]{2cm}{\centering alcohol}} & \rotatebox[origin=c]{90}{\parbox[c]{2cm}{\centering citric acid}} & \rotatebox[origin=c]{90}{\parbox[c]{2cm}{\centering fixed acidity}}  & \rotatebox[origin=c]{90}{\parbox[c]{2cm}{\centering density}}  & \rotatebox[origin=c]{90}{\parbox[c]{2cm}{\centering pH}} & \rotatebox[origin=c]{90}{\parbox[c]{2cm}{\centering sulphates}} & \rotatebox[origin=c]{90}{\parbox[c]{2cm}{\centering total sulfur dioxide}}  & \rotatebox[origin=c]{90}{\parbox[c]{2cm}{\centering chlorides}} \\ 
        \midrule
        MARS & 1 & 2 & 3 & 4 & 5 & 6 & 7 & 8 & 9 & 10 & 11 \\ 
        GAM & 1 & 2 & 3 & 8 & 6 & 9 & 4 & 5 & 7 & 10 & 11 \\ 
        LASSO & 2 & 3 & 4 & 1 & 11 & 5 & 9 & 8 & 7 & 10 & 6\\
        \bottomrule
    \end{tabular}
    \caption{Rank of 11 physicochemical properties for predicting wine rating using the three different ranking algorithms: MARS, GAM, and LASSO. The variables were sorted by the rank of MARS.}
    \label{tab: all rank}
\end{table}

\Cref{tab: all rank} displays the ranking of the physicochemical properties returned by the three ranking algorithms. All three algorithms ranked volatile acidity, free sulfur dioxide, and residual sugar among the top four most important variables, and MARS and GAM agreed on the ordering of these three. MARS and LASSO agree that alcohol is also important, but the rankings diverge somewhat after these first four rankings. All three algorithms ranked total sulfur dioxide second to last, and MARS and GAM both ranked chlorides last.

\begin{figure}[!htbp]
    \centering
    \includegraphics[width=\linewidth]{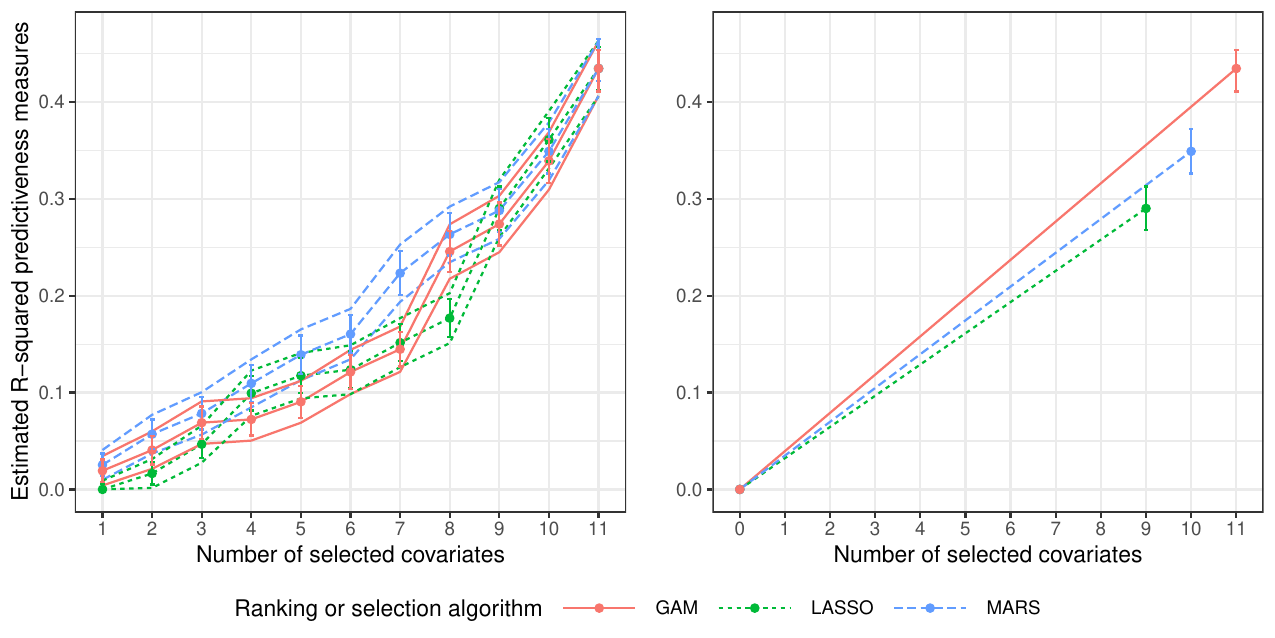}
    \caption{Left panel: Estimated VROC curve using $R$-squared predictiveness metric for prediction of wine rating by physicochemical properties with corresponding 95\% pointwise and uniform confidence intervals. Right panel: Estimated $R$-squared predictiveness measure of selected physicochemical properties for prediction of wine rating with corresponding 95\% pointwise confidence intervals.}
    \label{fig:real data}
\end{figure}

The left panel of \Cref{fig:real data} displays the estimated VROC curves for the $R$-squared predictiveness metric with corresponding 95\% pointwise and confidence sets. All three VROC curves ended at the point $(11, 0.44)$, implying that approximately 44\%  of the variance in wine ratings is accounted for by the 11 physicochemical predictors. None of the three VROC curves were concave, suggesting that the 11 variables contribute roughly equally to the prediction, rather than a single or small set of variables outweighing the importance of the others. In fact, the largest increment for all three curves came when including variables 9, 10, and 11. This suggests that while the last few physicochemical properties alone may have had limited ability to explain the variation in wine quality, they were able to account for a substantial amount of variation when combined with the other physicochemical properties. This indicates the existence of interactions between the physicochemical properties. Furthermore, the MARS ranking algorithm generally had the largest estimated $R$-squared predictiveness of the three algorithms, which suggests that MARS was best able to assess the relative importance of the variables in the presence of interactions. Pairwise tests rejected the null hypothesis that there is no difference in the VROC curves with $p < 10^{-8}$ for all three pairs. We estimate that the AUVROC for MARS was 1.90, (1.75--2.06), the AUVROC for GAM  was 1.62 (1.48--1.76), and the AUVROC for LASSO was 1.60 (1.46--1.74). Pairwise tests rejected the null hypothesis that there is no difference between the AUVROC of MARS and GAM ($p=2.3\times 10^{-7}$) and between MARS and LASSO ($p=1.0\times 10^{-6}$), but not that there was no difference between GAM and LASSO ($p=0.71$).

The right panel of \Cref{fig:real data} displays the $R$-squared predictiveness metric of the selected variables with corresponding 95\% confidence intervals. The GAM selection algorithm selected all variables and had an estimated PPSV of 0.039 (0.037--0.041). MARS selected ten variables and had an estimated PPSV of 0.035 (0.032--0.037). LASSO selected nine variables and had an estimated PPSV of 0.032 (0.029--0.035).

\section{Conclusion}\label{sec:conclusion}

In this paper, we proposed nonparametric, algorithm-agnostic measures of the quality of variable selection and ranking procedures. We proposed plug-in estimators of our measures, and provided conditions under which our estimators are asymptotically linear. Our theoretical results generalize those of \cite{williamson2022general} because our proposed measures are based on the variable importance framework introduced therein, but with random rather than fixed variable sets. We used our asymptotic results to construct large-sample confidence regions for our proposed measures. We also proposed a computationally efficient partial bootstrap procedure to account for finite-sample variability in the variable selection or ranking procedure not accounted for in the asymptotic results.

There are several natural extensions to our work. First, our results were focused on the low-dimensional regime, but the variable selection and ranking algorithms are frequently applied in high-dimensional settings.  An important area of future research is generalizing our results to high-dimensional covariates. Second, some variable selection and ranking procedures may not be asymptotically stable as defined in Section~\ref{sec:stable ranking algorithms}. In these cases, it is not clear how to even define a parameter of interest, or to achieve valid inference for the parameter. This is another important area of future research. Finally, variable selection and ranking is of interest outside of classical regression analysis, such as causal inference and survival analysis \citep{shortreed2017outcome, fan2010high}, and our methods could in principle be extended to these areas as well.

\bibliographystyle{apalike2}
\bibliography{reference.bib}

 \newpage
 \appendix
 
\begin{center}
\textbf{\huge Supplementary Material}
\end{center}

\section{Details on construction of Wald confidence intervals}

Here we provide details for the construction of Wald confidence regions for $\psi_{0,S}$ and $(\psi_{0, R_{[1]}}, \dotsc, \psi_{0, R_{[p]}})$. 

\Cref{thm:asym linearity} implies that $n^{1/2}(\psi_{n,S_n} - \psi_{0,S}) \leadsto N(0, \sigma_0^2)$, where $\sigma_0^2 :=  P_0\left[\dot{V}_0(f_0) - \dot{V}_0(f_{0, -S})\right]^2$. Hence, if $\sigma_n^2$ is a consistent estimator of $\sigma_0^2$, then the Wald confidence interval for $\psi_{0, S}$ given by $\psi_{n, S_n} \pm z_{1-\alpha/2}n^{-1/2}\sigma_n$ has asymptotic coverage level $1-\alpha$, where $z_{\alpha}$ is the lower $\alpha$ quantile of the standard normal distribution. We can estimate $\sigma_0^2$ using the \emph{influence function-based} estimator. We define
\[\dot{V}_n(f_n) := \dot{U}_{\zeta}(\zeta(f_n, \d{P}_n), \eta_n) \left[ \dot\zeta(f_n) - \d{P}_n \dot\zeta(f_n)\right] + \dot{U}_\eta (\zeta(f_n, \d{P}_n), \eta_n) \phi_n \]
as an estimator of $\dot{V}_0(f_0)$, where $\phi_n$ is an estimator of $\phi_0$. We analogously define $\dot{V}_n(f_{n,-S_n})$ and $\dot{V}_n(f_{n, -R_{n, [j]}})$. The influence function-based variance and estimator is then given by
\begin{align*}
    \sigma_n^2 &:=  \d{P}_n \left\{\left[\dot{V}_n(f_n) - \dot{V}_n(f_{n, -S_n})\right] \right\}^2.
\end{align*}

\Cref{thm:asym linearity} can also be used to construct asymptotically valid confidence intervals for each $\psi_{0, R_{[j]}}$ as well as uniformly valid confidence sets for  $(\psi_{0, R_{[1]}}, \dotsc, \psi_{0, R_{[p]}})$, where $R \in \s{R}_0$. By the delta method, \Cref{thm:asym linearity} implies that
\begin{align*}
    & n^{1/2} \left[\left(\psi_{n, R_{n, [1]}}, \dotsc, \psi_{n, R_{n, [p]}}\right)^{\tran} - \left(\psi_{0, R_{[1]}} - \psi_{0, R_{[p]}}\right)^{\tran} \right] \leadsto N_p \left(0, \Sigma_0\right),
\end{align*}
where  $\Sigma_0$ is a $p \times p$ asymptotic covariance matrix with $ij$th entry
\begin{align*}
    (\Sigma_0)_{ij} := & P_0\left\{\left[\dot{V}_0(f_0) - \dot{V}_0\left(f_{0, -R_{[i]}}\right)\right]\left[\dot{V}_0(f_0) - \dot{V}_0\left(f_{0, -R_{[j]}}\right)\right]\right\} 
\end{align*}
for $1 \leq i,j \leq p$ and $R \in \s{R}_0$. If $\Sigma_n$ is a consistent estimator of $\Sigma_0$, then pointwise and equi-precision uniform confidence regions for $(\psi_{0, R_{0, [1]}}, \dotsc, \psi_{0, R_{0, [p]}})$ are given by 
\begin{align*}
    \left(\psi_{n, R_{n, [1]}}, \dotsc, \psi_{n, R_{n, [p]}}\right) &\pm z_{1-\alpha/2}n^{-1/2}\left(\Sigma_{n,11}^{1/2},  \dotsc, \Sigma_{n,pp}^{1/2}\right) \text{ and} \\
    \left(\psi_{n, R_{n, [1]}}, \dotsc, \psi_{n, R_{n, [p]}}\right) &\pm Z_{1-\alpha, n}n^{-1/2}\left(\Sigma_{n,11}^{1/2},  \dotsc, \Sigma_{n,pp}^{1/2}\right),
\end{align*}
respectively, where $Z_{\alpha, n}$ is the lower $\alpha$ quantile of the distribution of $\max_{1\leq j\leq p}|U_j|$, where $U_j \sim N_p(0, D_n^{-1/2} \Sigma_n D_n^{-1/2})$ for $D_n := \mathrm{diag}(\Sigma_n)$. A confidence interval for the AUVROC can be obtained using the delta method. We can again estimate $\Sigma_0$ using an influence function-based estimator given by
\begin{align*}
    (\Sigma_n)_{ij} &:= \d{P}_n\left\{\left[\dot{V}_n(f_n) - \dot{V}_n(f_{n, R_{n, [i]}})\right]\left[\dot{V}_n(f_n) - \dot{V}_n(f_{n, -R_{n, [j]}})\right]\right\}
\end{align*}
for $1 \leq i,j \leq p$.

Similar methods can be used to construct confidence intervals and regions based on the cross-fitting estimators using \Cref{thm: cv asym linear} by replacing $\psi_{n, S_n}$ and $\psi_{n, R_{n, [j]}}$ above with their cross-fitting counterparts and replacing $\sigma_n$ and $\Sigma_n$ with cross-fitting variance and covariance estimators 
\begin{align*}
    \sigma_{n, \mathrm{cv}}^2 &:= \frac{1}{K}\sum_{k=1}^{K} \d{P}_{n, k} \left[\dot{V}_{n, k}(f_{n, k}) - \dot{V}_{n, k}(f_{n, k, -S_n})\right] ^2, \text{ and} \\
    (\Sigma_{n, \mathrm{cv}})_{ij} &:= \frac{1}{K}\sum_{k=1}^{K} \d{P}_{n, k}\left\{\left[\dot{V}_{n, k}(f_{n, k}) - \dot{V}_{n, k}(f_{n, k, -R_{n, [i]}})\right]\left[\dot{V}_{n, k}(f_{n, k}) - \dot{V}_{n, k}(f_{n, k, -R_{n, [j]}})\right]\right\} 
\end{align*}
for $1 \leq i,j \leq p$, where 
\[ \dot{V}_{n, k}(f_n) := \dot{U}_{\zeta}(\zeta(f_{n,k}, \d{P}_{n,k}), \eta_{n,k}) \left[ \dot\zeta(f_{n,k}) - \d{P}_{n,k} \dot\zeta(f_{n,k})\right] + \dot{U}_\eta (\zeta(f_{n,k}, \d{P}_{n,k}), \eta_{n,k}) \phi_{n,k}\]
and $\phi_{n, k}$ is an estimator of $\phi_0$ based on the training data for fold $k$. As with asymptotic linearity, consistency of the cross-fitting variance and covariance estimators does not require constraints on the complexity of the nuisance estimators.

\clearpage 
\section{Proof of theorems}\label{appendix:product_rule}

A one-dimensional path $\{ P_\varepsilon : \varepsilon \in [0, M)\}$ of probability distributions through $P_0$ is called \emph{differentiable in quadratic mean} (DQM) if 
\[ \int \left[ \frac{(dP_\varepsilon)^{1/2} - (dP_0)^{1/2}}{\varepsilon} - \tfrac{1}{2} \dot\ell_0 (dP_0)^{1/2}  \right]^2 = o(1)\]
for some \emph{score function} $\dot\ell_0$. The score function then necessarily satisfies $P_0 \dot\ell_0 = 0$ and $P_0 \dot\ell_0^2 < \infty$. Here and throughout, $dP$ is shorthand for $dP / d\lambda$ for an appropriate dominating measure $\lambda$, and the shorthand is only used when the dominating measure does not change the value of the expression (which is the case of the display above). 

The Hellinger distance between two probability measures is defined as 
\[
    H(Q,P) := \left\{ \int \left[ (dQ)^{1/2} - (dP)^{1/2} \right]^2 \right\}^{1/2}.
\]
The following simple lemmas provide some properties that are useful for justification of product rule in \Cref{thm:pathwise differentiable}.

\begin{lemma}\label{lemma:dqm_hellinger}
    If $\{P_\varepsilon : \varepsilon \in [0, \eta)\}$ is a DQM path through $P_0$, then 
    \[ 
        H(P_\varepsilon, P_0) = \left\{ \int \left[ (dP_\varepsilon)^{1/2} - (dP_0)^{1/2}\right]^2 \right\}^{1/2} = \boundeddet{\varepsilon}.
    \]
\end{lemma}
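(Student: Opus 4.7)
The plan is to work directly from the definition of differentiability in quadratic mean. The first equality in the display is just unpacking the definition of $H(P_\varepsilon, P_0)$, so only the $O(\varepsilon)$ bound needs justification. I would start by rewriting the DQM condition as
\[
    (dP_\varepsilon)^{1/2} - (dP_0)^{1/2} = \tfrac{1}{2}\varepsilon \dot\ell_0 (dP_0)^{1/2} + \varepsilon\, r_\varepsilon,
\]
where $r_\varepsilon$ is a remainder in $L^2(\lambda)$ with $\|r_\varepsilon\|_{L^2(\lambda)} = o(1)$ as $\varepsilon \downarrow 0$.

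Next I would take $L^2(\lambda)$-norms on both sides and apply the triangle inequality to obtain
\[
    \left\| (dP_\varepsilon)^{1/2} - (dP_0)^{1/2} \right\|_{L^2(\lambda)} \leq \tfrac{1}{2}\varepsilon \left\| \dot\ell_0 (dP_0)^{1/2} \right\|_{L^2(\lambda)} + \varepsilon \|r_\varepsilon\|_{L^2(\lambda)}.
\]
The first term on the right equals $\tfrac{1}{2}\varepsilon (P_0 \dot\ell_0^2)^{1/2}$, which is finite because $\dot\ell_0 \in L^2(P_0)$ by the standing property of DQM score functions. The second term is $\varepsilon \cdot o(1) = o(\varepsilon)$. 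Combining these gives the claimed $O(\varepsilon)$ bound.

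There is essentially no obstacle here; the proof is a one-line consequence of the triangle inequality in $L^2(\lambda)$ once the DQM definition is rearranged, plus the fact that the score has finite second moment. The only thing to be careful about is the implicit choice of dominating measure $\lambda$, but since both sides are invariant under the choice of $\lambda$ dominating $P_0$ and $P_\varepsilon$, this is harmless.
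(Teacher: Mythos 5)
Your proof is correct and is essentially the same as the paper's: both amount to the triangle inequality in $L^2(\lambda)$ after isolating the $\frac{1}{2}\varepsilon\dot\ell_0(dP_0)^{1/2}$ term from the DQM remainder, then using $P_0\dot\ell_0^2 < \infty$. The only cosmetic difference is that you introduce an explicit remainder $r_\varepsilon$ before taking norms, whereas the paper inserts and subtracts the score term inside the integral directly.
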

\begin{proof}[\bfseries{Proof of \Cref{lemma:dqm_hellinger}}]
    We let $\dot\ell_0$ be the score of the path at 0. By the triangle inequality, we have
    \begin{align*}
        &\left\{ \int \left[ (dP_\varepsilon)^{1/2} - (dP_0)^{1/2}\right]^2 \right\}^{1/2}  \\
        &\qquad= \varepsilon \left\{ \int \left[ \frac{(dP_\varepsilon)^{1/2} - (dP_0)^{1/2}}{\varepsilon} - \frac{1}{2}\dot\ell_0 (dP_0)^{1/2} + \frac{1}{2}\dot\ell_0 (dP_0)^{1/2} \right]^2 \right\}^{1/2}  \\
        &\qquad\leq \varepsilon \left\{ \int \left[ \frac{(dP_\varepsilon)^{1/2} - (dP_0)^{1/2}}{\varepsilon} - \frac{1}{2}\dot\ell_0 (dP_0)^{1/2}  \right]^2 \right\}^{1/2} + \frac{\varepsilon}{2}\left\{ \int \dot\ell_0^2 \, dP_0 \right\}^{1/2}.
    \end{align*}
    The first term is $\fasterthandet{\varepsilon}$ by definition of DQM. The second term is $\boundeddet{\varepsilon}$ since $P_0 \dot\ell_0^2 < \infty$.
\end{proof}

\begin{lemma}\label{lemma:mean_bound}
    For any measurable function $h$ and probability measures $P$ and $Q$,
    \[ 
        \left|(Q-P)h \right| \leq \left\{ \| h\|_{L_2(Q)} +  \| h\|_{L_2(P)}\right\} H(Q, P). 
    \]
\end{lemma}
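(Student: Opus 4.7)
The plan is to bound $(Q-P)h$ by writing the signed-measure density difference as a product that isolates the Hellinger-type factor, then applying Cauchy--Schwarz followed by Minkowski. Let $\lambda$ be any common dominating measure (e.g.\ $\lambda = Q + P$) and write $q := dQ/d\lambda$, $p := dP/d\lambda$. The key algebraic identity is
\[
q - p = \bigl(\sqrt{q} - \sqrt{p}\bigr)\bigl(\sqrt{q} + \sqrt{p}\bigr),
\]
which holds $\lambda$-a.e. This lets us express $(Q-P)h$ as the $\lambda$-integral of a product of two $L_2(\lambda)$ functions, setting up an application of Cauchy--Schwarz.

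First I would write
\[
(Q-P)h \;=\; \int h(q - p)\,d\lambda \;=\; \int \bigl[h\bigl(\sqrt{q}+\sqrt{p}\bigr)\bigr]\bigl[\sqrt{q}-\sqrt{p}\bigr]\,d\lambda.
\]
Cauchy--Schwarz in $L_2(\lambda)$ then yields
\[
|(Q-P)h| \;\le\; \left\{\int h^2\bigl(\sqrt{q}+\sqrt{p}\bigr)^2 d\lambda\right\}^{1/2}\left\{\int\bigl(\sqrt{q}-\sqrt{p}\bigr)^2 d\lambda\right\}^{1/2}.
\]
The second factor is exactly $H(Q,P)$ by definition.

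For the first factor, I would apply Minkowski's inequality in $L_2(\lambda)$ to the sum $h\sqrt{q} + h\sqrt{p}$:
\[
\left\{\int h^2\bigl(\sqrt{q}+\sqrt{p}\bigr)^2 d\lambda\right\}^{1/2} \;\le\; \left\{\int h^2 q\,d\lambda\right\}^{1/2} + \left\{\int h^2 p\,d\lambda\right\}^{1/2} \;=\; \|h\|_{L_2(Q)} + \|h\|_{L_2(P)}.
\]
Combining the two inequalities gives the stated bound. The argument is short and the only point requiring mild care is choosing the dominating measure and verifying that the factorization $q - p = (\sqrt{q}-\sqrt{p})(\sqrt{q}+\sqrt{p})$ is independent of this choice, which follows from the invariance of all the appearing $L_2$ norms and of $H(Q,P)$ under the change of dominating measure. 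There is no real obstacle here; the result is essentially a one-line consequence of Cauchy--Schwarz and Minkowski once the identity for $q-p$ is written down.
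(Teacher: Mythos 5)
Your proof is correct and uses the same central ideas as the paper's: the factorization $q-p = (\sqrt q - \sqrt p)(\sqrt q + \sqrt p)$ followed by Cauchy--Schwarz. The only difference is the order of the two elementary steps: the paper first splits $\sqrt q + \sqrt p$ via the triangle inequality for integrals and then applies Cauchy--Schwarz to each of the two resulting terms, while you apply Cauchy--Schwarz once to the whole integral and then split $\|h(\sqrt q + \sqrt p)\|_{L_2(\lambda)}$ via Minkowski; the two routes are entirely equivalent and neither buys anything substantive over the other.
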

\begin{proof}[\bfseries{Proof of \Cref{lemma:mean_bound}}]
    We have
    \begin{align*}
        \left|(Q-P)h \right|  &= \left|\int h \, d(Q-P)\right| \\
        &= \left|\int h \, \left[(dQ)^{1/2}-(dP)^{1/2} \right] \left[(dQ)^{1/2}+(dP)^{1/2} \right] \right| \\
        &\leq  \left|\int h(dQ)^{1/2} \, \left[(dQ)^{1/2}-(dP)^{1/2} \right] \right| + \left|\int h(dP)^{1/2} \, \left[(dQ)^{1/2}-(dP)^{1/2} \right] \right| \\
        &\leq \left\{ \int h^2 \, dQ \int \left[(dQ)^{1/2}-(dP)^{1/2} \right]^2 \right\}^{1/2} + \left\{ \int h^2 \, dP \int \left[(dQ)^{1/2}-(dP)^{1/2} \right]^2 \right\}^{1/2} \\
        &= \left\{ \| h\|_{L_2(Q)} + \| h \|_{L_2(P)} \right\} H(Q, P).
    \end{align*}
\end{proof}

\begin{lemma}\label{lemma:fixed_fun}
    For any $h \in L_2(P_0)$ and DQM path $\{P_\varepsilon\}$ through $P_0$ with score $\dot\ell_0$ such that there exists $C < \infty$ with $P_\varepsilon h^2 \leq C$ for all $\varepsilon$ small enough, $\left. \frac{\partial}{\partial\varepsilon} P_\varepsilon h \right|_{\varepsilon = 0} = P_0( h\dot\ell_0)$.
\end{lemma}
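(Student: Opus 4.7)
The plan is to apply the standard square-root density decomposition used to prove smoothness of maps along DQM paths. Let $\lambda$ be a common dominating measure for $P_0$ and $\{P_\varepsilon\}$, set $s_\varepsilon := (dP_\varepsilon/d\lambda)^{1/2}$ and $s_0 := (dP_0/d\lambda)^{1/2}$, and factor $s_\varepsilon^2 - s_0^2 = (s_\varepsilon + s_0)(s_\varepsilon - s_0)$. Splitting $s_\varepsilon + s_0 = 2s_0 + (s_\varepsilon - s_0)$, I would rewrite the difference quotient as
\begin{align*}
    \frac{P_\varepsilon h - P_0 h}{\varepsilon}
    &= \int h\,(s_\varepsilon + s_0)\cdot \frac{s_\varepsilon - s_0}{\varepsilon}\, d\lambda \\
    &= 2 \int h s_0 \cdot \frac{s_\varepsilon - s_0}{\varepsilon}\, d\lambda + \frac{1}{\varepsilon} \int h (s_\varepsilon - s_0)^2 \, d\lambda,
\end{align*}
and then show separately that the first summand converges to $P_0(h\dot\ell_0)$ while the second vanishes.

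The first summand is handled by $L_2(\lambda)$ inner-product continuity: since $h \in L_2(P_0)$ we have $hs_0 \in L_2(\lambda)$ with norm $\|h\|_{L_2(P_0)}$, and by the DQM hypothesis $(s_\varepsilon - s_0)/\varepsilon$ converges to $\tfrac{1}{2}\dot\ell_0 s_0$ in $L_2(\lambda)$. Hence the first summand converges to $2\int hs_0 \cdot \tfrac{1}{2}\dot\ell_0 s_0 \, d\lambda = P_0(h\dot\ell_0)$, which is finite by Cauchy--Schwarz in $L_2(P_0)$.

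For the second summand, Cauchy--Schwarz in $L_2(\lambda)$ combined with \Cref{lemma:dqm_hellinger} gives
\[
    \left| \int h(s_\varepsilon - s_0)^2\, d\lambda \right|
    \leq \left(\int h^2 (s_\varepsilon - s_0)^2 \, d\lambda\right)^{1/2}  \|s_\varepsilon - s_0\|_{L_2(\lambda)}
    = O(\varepsilon) \left(\int h^2 (s_\varepsilon - s_0)^2 \, d\lambda\right)^{1/2},
\]
so it suffices to establish $\int h^2 (s_\varepsilon - s_0)^2 \, d\lambda = o(1)$. I would prove this by truncation: for each fixed $M > 0$, splitting $h^2 = h^2 \mathbf{1}_{\{|h|\leq M\}} + h^2 \mathbf{1}_{\{|h| > M\}}$ and using $(s_\varepsilon - s_0)^2 \leq 2(s_\varepsilon^2 + s_0^2)$ on the tail yields
\[
    \int h^2 (s_\varepsilon - s_0)^2\, d\lambda \leq M^2\, H(P_\varepsilon, P_0)^2 + 2 P_\varepsilon (h^2 \mathbf{1}_{\{|h|>M\}}) + 2 P_0(h^2 \mathbf{1}_{\{|h|>M\}}).
\]
For fixed $M$ the first piece is $O(M^2 \varepsilon^2)$; the $P_0$-tail vanishes as $M \to \infty$ by dominated convergence; and the $P_\varepsilon$-tail is controlled via the identity $P_\varepsilon(h^2 \mathbf{1}_{\{|h|>M\}}) = P_\varepsilon h^2 - P_\varepsilon(h^2 \mathbf{1}_{\{|h|\leq M\}})$ by combining the uniform bound $P_\varepsilon h^2 \leq C$ with the fact, which follows directly from \Cref{lemma:mean_bound} and \Cref{lemma:dqm_hellinger}, that $P_\varepsilon g \to P_0 g$ at rate $O(\varepsilon)$ for every bounded measurable $g$. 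Taking $\varepsilon \to 0$ first for fixed $M$ and then $M \to \infty$ closes the argument.

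I expect the main obstacle to be precisely this residual-term analysis: a single application of Cauchy--Schwarz to $\int h (s_\varepsilon - s_0)^2 \, d\lambda$ only yields $O(\varepsilon)$, which is insufficient once we divide by $\varepsilon$. The truncation together with the iterated-limit argument is what actually leverages the boundedness hypothesis $P_\varepsilon h^2 \leq C$ to extract the extra $o(1)$ factor, using that bounded test functions integrate at the Hellinger rate $O(\varepsilon)$ while the tail mass of $h$ outside a large ball is simultaneously controlled by $h \in L_2(P_0)$ and by uniform control of $P_\varepsilon h^2$.
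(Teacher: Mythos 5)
Your first-order term is handled correctly: since $hs_0\in L_2(\lambda)$ with $\|hs_0\|_{L_2(\lambda)}=\|h\|_{L_2(P_0)}$ and $(s_\varepsilon-s_0)/\varepsilon\to\tfrac12\dot\ell_0 s_0$ in $L_2(\lambda)$ by DQM, the inner product converges to $P_0(h\dot\ell_0)$ (which is finite by Cauchy--Schwarz in $L_2(P_0)$). But your treatment of the residual $\tfrac1\varepsilon\int h(s_\varepsilon-s_0)^2\,d\lambda$ has a genuine gap, specifically in the $P_\varepsilon$-tail of the truncation. From $P_\varepsilon(h^2\mathbf{1}_{|h|>M})=P_\varepsilon h^2-P_\varepsilon(h^2\mathbf{1}_{|h|\le M})$, the uniform bound $P_\varepsilon h^2\le C$ and the Hellinger-rate convergence of the bounded function $h^2\mathbf{1}_{|h|\le M}$ only give $\limsup_{\varepsilon\to 0}P_\varepsilon(h^2\mathbf{1}_{|h|>M})\le C-P_0(h^2\mathbf{1}_{|h|\le M})$, and then $\limsup_{M\to\infty}\limsup_{\varepsilon\to 0}P_\varepsilon(h^2\mathbf{1}_{|h|>M})\le C-P_0h^2$. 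The hypothesis gives only an \emph{upper bound} $C$, not $P_\varepsilon h^2\to P_0h^2$, so $C-P_0h^2$ need not be zero and your iterated-limit argument does not close. In fact, writing $\int h^2(s_\varepsilon-s_0)^2\,d\lambda=P_\varepsilon h^2-2\int h^2s_\varepsilon s_0\,d\lambda+P_0h^2\ge\bigl([P_\varepsilon h^2]^{1/2}-[P_0h^2]^{1/2}\bigr)^2$ shows that your intermediate target $\int h^2(s_\varepsilon-s_0)^2\,d\lambda=o(1)$ is \emph{equivalent} to $P_\varepsilon h^2\to P_0h^2$, which the lemma does not assume. So the problem is structural, not just a missing detail.

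The source of the trouble is your choice of decomposition $s_\varepsilon+s_0=2s_0+(s_\varepsilon-s_0)$, which leaves a residual $h(s_\varepsilon-s_0)^2/\varepsilon$ with no surviving $s_0$ factor. The paper instead subtracts $P_0(h\dot\ell_0)$ inside the integral and factors $p_\varepsilon-p_0=(p_\varepsilon^{1/2}+p_0^{1/2})(p_\varepsilon^{1/2}-p_0^{1/2})$, splitting as
\[
\int h(p_\varepsilon^{1/2}+p_0^{1/2})\Bigl[\tfrac{p_\varepsilon^{1/2}-p_0^{1/2}}{\varepsilon}-\tfrac12\dot\ell_0 p_0^{1/2}\Bigr]d\lambda
+\tfrac12\int h\,\dot\ell_0\, p_0^{1/2}\bigl(p_\varepsilon^{1/2}-p_0^{1/2}\bigr)d\lambda,
\]
so the residual retains a $p_0^{1/2}$ alongside $\dot\ell_0$. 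Truncating $|h|$ at a slowly diverging $K_\varepsilon$ then bounds the tail by $\bigl[\int_{|h|>K_\varepsilon}\dot\ell_0^2\,dP_0\bigr]^{1/2}\bigl([P_\varepsilon h^2]^{1/2}+[P_0h^2]^{1/2}\bigr)$: the tail integral is over the \emph{fixed} finite measure $\dot\ell_0^2\,dP_0$, so it vanishes by dominated convergence irrespective of whether $P_\varepsilon h^2\to P_0h^2$, while the bounded-truncation piece is $K_\varepsilon\|\dot\ell_0\|_{L_2(P_0)}H(P_\varepsilon,P_0)=K_\varepsilon\,O(\varepsilon)$. If you want to salvage your route you would need to either strengthen the hypothesis to $P_\varepsilon h^2\to P_0h^2$ (uniform integrability of $h^2$ under $\{P_\varepsilon\}$), or abandon the Cauchy--Schwarz bound on the residual and redo the decomposition so that the tail lands on a measure that does not move with $\varepsilon$, as the paper does.
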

\begin{proof}[\bfseries{Proof of \Cref{lemma:fixed_fun}}]
    The claim follows if
    \[ 
        \lim_{\varepsilon \to 0} \left[ \frac{ P_\varepsilon h - P_0 h}{\varepsilon} - P_0 (h \dot\ell_0) \right] = 0.
    \]
    Letting $\lambda_{\varepsilon}$ be a common dominating measure of $P_\varepsilon$ and $P_0$ and $p_\varepsilon = dP_\varepsilon / d\lambda_\varepsilon$, $p_{0,\varepsilon} = dP_0 / d\lambda_\varepsilon$, we have
    \begin{align*}
        \frac{ P_\varepsilon h - P_0 h}{\varepsilon} - P_0 (h \dot\ell_0) &= \int h \left[ \frac{p_\varepsilon - p_{0, \varepsilon}}{\varepsilon} - \dot\ell_0 p_{0,\varepsilon} \right] \, d\lambda_\varepsilon \\
         &= \int h \left( p_\varepsilon^{1/2} + p_{0,\varepsilon}^{1/2}\right) \left[\frac{p_\varepsilon^{1/2} - p_{0,\varepsilon}^{1/2}}{\varepsilon} - \tfrac{1}{2}\dot\ell_0 p_{0,\varepsilon}^{1/2} \right] \, d\lambda_\varepsilon + \frac{1}{2} \int h \dot\ell_0 p_{0,\varepsilon}^{1/2} \left( p_\varepsilon^{1/2} - p_{0,\varepsilon}^{1/2}\right)\, d\lambda_\varepsilon.
    \end{align*}
    For the first term, by the Cauchy-Schwarz inequality, 
    \begin{align*}
        &\left|  \int h \left( p_\varepsilon^{1/2} + p_{0,\varepsilon}^{1/2}\right) \left[\frac{p_\varepsilon^{1/2} - p_{0,\varepsilon}^{1/2}}{\varepsilon} - \tfrac{1}{2}\dot\ell_0 p_{0,\varepsilon}^{1/2} \right] \, d\lambda_\varepsilon \right| \\
        &\qquad\leq \left|\int h  p_\varepsilon^{1/2} \left[\frac{p_\varepsilon^{1/2} - p_{0,\varepsilon}^{1/2}}{\varepsilon} - \tfrac{1}{2}\dot\ell_0 p_{0,\varepsilon}^{1/2} \right] \, d\lambda_\varepsilon \right| +\left| \int h  p_{0,\varepsilon}^{1/2} \left[\frac{p_\varepsilon^{1/2} - p_{0,\varepsilon}^{1/2}}{\varepsilon} - \tfrac{1}{2}\dot\ell_0 p_{0,\varepsilon}^{1/2} \right] \, d\lambda_\varepsilon \right| \\
        &\qquad\leq \left( \left[ P_\varepsilon h^2 \right]^{1/2} + \left[P_0 h^2 \right]^{1/2} \right)  \left\{ \int  \left[\frac{p_\varepsilon^{1/2} - p_{0,\varepsilon}^{1/2}}{\varepsilon} - \tfrac{1}{2}\dot\ell_0 p_{0,\varepsilon}^{1/2} \right]^2 \, d\lambda_\varepsilon\right\}^{1/2},
    \end{align*}
    which goes to zero because the first term is bounded as $\varepsilon \to 0$ by assumption and the second term is $o(1)$ because the path is DQM.
    
    For the second term, we have for any sequence of positive numbers $K_\varepsilon$
    \begin{align*}
        \left| \int h \dot\ell_0 p_0^{1/2} \left( p_\varepsilon^{1/2} - p_0^{1/2}\right)\, d\lambda_\varepsilon\right| & \leq  \left| \int_{|h| \leq K_\varepsilon}h \dot\ell_0 p_0^{1/2} \left( p_\varepsilon^{1/2} - p_0^{1/2}\right)\, d\lambda_\varepsilon \right| +  \left| \int_{|h| > K_\varepsilon}h \dot\ell_0 p_0^{1/2} \left( p_\varepsilon^{1/2} - p_0^{1/2}\right)\, d\lambda_\varepsilon \right| \\
         & \leq K_\varepsilon \left| \int \dot\ell_0 p_0^{1/2} \left( p_\varepsilon^{1/2} - p_0^{1/2}\right)\, d\lambda_\varepsilon \right| +  \left| \int\left[ I(|h| > K_\varepsilon)  \dot\ell_0 p_0^{1/2}\right] \left[h p_\varepsilon^{1/2}\right]\, d\lambda_\varepsilon \right|  \\
        &\qquad +   \left| \int\left[ I(|h| > K_\varepsilon)  \dot\ell_0 p_0^{1/2}\right] \left[h p_0^{1/2}\right]\, d\lambda_\varepsilon \right| \\
        & \leq K_\varepsilon \left[  P_0 \dot\ell_0^2  \int \left( p_\varepsilon^{1/2} - p_0^{1/2}\right)^2\, d\lambda_\varepsilon \right]^{1/2} \\
        & \qquad +  \left[ \int_{|h| > K_\varepsilon}  \dot\ell_0^2 \, dP_0\right]^{1/2} \left(  \left[ P_\varepsilon h^2  \right]^{1/2}  + \left[ P_0 h^2  \right]^{1/2} \right).
     \end{align*}
     Since $\dot\ell_0 \in L_2(P_0)$, $P_\varepsilon h^2  \leq C$ for all $\varepsilon$ small enough, and $P_0 h^2 <\infty$, the second term goes to zero as $\varepsilon \to 0$ for any sequence $K_\varepsilon \to \infty$. For the first term, since the path is DQM, $\left[  P_0 \dot\ell_0^2  \int \left( p_\varepsilon^{1/2} - p_0^{1/2}\right)^2\, d\lambda_\varepsilon \right]^{1/2} = \boundeddet{\varepsilon}$ as $\varepsilon \to 0$ by \Cref{lemma:dqm_hellinger}. Therefore, for any path we can choose $K_\varepsilon$ diverging to $\infty$ slowly enough so that the first term is going to zero. For example, set $K_\varepsilon = \left[ \int \left( p_\varepsilon^{1/2} - p_0^{1/2}\right)^2\, d\lambda_\varepsilon \right]^{-1/4}$.
\end{proof}

\thmpathdiff*
\begin{proof}[\bfseries{Proof of \Cref{thm:pathwise differentiable}}]
    We only provide the proof for pathwise differentiability of $P \mapsto V(f_{P, -S_P}, P)$. The proof for $P \mapsto V(f_{P, -R_{P, [j]}}, P), \ j \in \{1, \dotsc, p\}$ follows the same idea, and will be omitted for brevity.
    
    We first show that the map $P \mapsto \zeta(f_{P, -S_P}, P)$ is pathwise differentiable at $P_0$ relative to a nonparametric model with nonparametric efficient influence function at $P_0$ given by $\dot{\zeta}(f) - P_0 \dot\zeta(f)$ under condition~\ref{cond: local continuity}. Let $\{ P_\varepsilon: \varepsilon \in [0, \eta)\}$ be a DQM path in the model with $P_{\varepsilon = 0} = P_0$ and score function $\dot\ell_0$ at $\varepsilon = 0$. We index by $\varepsilon$ to denote $P_\varepsilon$ for simplicity; for example, we let $S_{\varepsilon} := S_{P_\varepsilon}$. We make the following restrictions on the path $\{ P_\varepsilon: \varepsilon \in [0, \eta)\}$. We require that $\| f_{\varepsilon, -S_0} - f_{0, -S_0} \|_{\s{F}}^2 = o(\varepsilon)$, $\dot\zeta(f_{\varepsilon, -S_{\varepsilon}}) = \dot\zeta(f_{\varepsilon, -S_0})$ for small enough $\varepsilon$, $\| \dot{\zeta}(f_{0, -S_0})\|_{L_2(P_\varepsilon)} = O(1)$, and $\| \dot{\zeta}(f_{\varepsilon, -S_0}) - \dot{\zeta}(f_{0, -S_0})\|_{L_2(P_\varepsilon)} = o(1)$.

    We write:
    \begin{align}\label{eq: pathwise decom}
        \begin{split}
         \left. \frac{\partial}{\partial \varepsilon} \zeta(f_{\varepsilon, -S_{\varepsilon}}, P_\varepsilon ) \right|_{\varepsilon=0} & = \lim_{\varepsilon \to 0} \frac{\int \dot{\zeta}(f_{\varepsilon, -S_{\varepsilon}}) \sd P_\varepsilon - \int \dot{\zeta}(f_{0, -S_{0}}) \sd P_0}{\varepsilon}\\
         & = \lim_{\varepsilon \to 0} \frac{\int \dot{\zeta}(f_{\varepsilon, -S_{\varepsilon}}) \sd P_\varepsilon - \int \dot{\zeta}(f_{\varepsilon, -S_{0}}) \sd P_\varepsilon}{\varepsilon} + \lim_{\varepsilon \to 0} \frac{\int \left[\dot{\zeta}(f_{\varepsilon, -S_0}) - \dot{\zeta}(f_{0, -S_0}) \right]\sd P_0 }{\varepsilon}\\
         & \qquad + \lim_{\varepsilon \to 0}  \int \left[\dot{\zeta}(f_{\varepsilon, -S_0}) - \dot{\zeta}(f_{0, -S_0}) \right]\sd (P_\varepsilon - P_0)/\varepsilon \\
         & \qquad + \lim_{\varepsilon \to 0}  \int \left[\dot{\zeta}(f_{0, -S_0})\right] \sd (P_\varepsilon - P_0)/\varepsilon. 
        \end{split}
    \end{align}
    By the assumption that $\dot\zeta(f_{\varepsilon, -S_{\varepsilon}}) = \dot\zeta(f_{\varepsilon, -S_0})$ for small enough $\varepsilon$, the first term on the right-hand side of \eqref{eq: pathwise decom} equals 0. By the assumption that $\| f_{\varepsilon, -S_0} - f_{0, -S_0} \|_{\s{F}}^2 = o(\varepsilon)$, $\norm{f_{\varepsilon, -S_0} - f_{0, -S_0}}_{\s{F}} < \delta$ for $\varepsilon$ small enough for $\delta$ defined in condition~\ref{cond: local continuity}, so by  condition~\ref{cond: local continuity}, 
    \[
       \abs{\int \left[\dot{\zeta}(f_{\varepsilon, -S_0}) - \dot{\zeta}(f_{0, -S_0}) \right]\sd P_0} \leq C \norm{f_{\varepsilon, -S_0} - f_{0, -S_0}}_{\s{F}}^2   = o(\varepsilon).
    \]
    Therefore, the second term on the right-hand side of \eqref{eq: pathwise decom} equals 0. 
    
    Consider the fourth term on the right-hand side of \eqref{eq: pathwise decom}. Denote $h_{\varepsilon, S_0} := \dot{\zeta}(f_{\varepsilon, -S_0}) - \dot{\zeta}(f_{0, -S_0})$.  By Lemmas~\ref{lemma:dqm_hellinger} and~\ref{lemma:mean_bound}, we have 
    \begin{align*}
        \abs{\int \left[\dot{\zeta}(f_{\varepsilon, -S_0}) - \dot{\zeta}(f_{0, -S_0}) \right] \sd (P_\varepsilon - P_0)} & \leq \{ \| h_{\varepsilon, S_0} \|_{L_2(P_\varepsilon)} + \| h_{\varepsilon, S_0} \|_{L_2(P_0)} \} H(P_\varepsilon, P_0) \\
        &= \{ \| h_{\varepsilon, S_0} \|_{L_2(P_\varepsilon)} + \| h_{\varepsilon, S_0} \|_{L_2(P_0)} \} \boundeddet{\varepsilon},
    \end{align*}
     By the continuity of $f \mapsto \dot\zeta(f)$ at $f = f_{0,-S_0}$ assumed in condition~\ref{cond: local continuity} and  and the assumption that $\| f_{\varepsilon, -S_0} - f_{0, -S_0} \|_{\s{F}}^2 = o(\varepsilon)$,  $\|h_{\varepsilon, S_0}\|_{L_2(P_0)} = o(1)$. Also, $\|h_{\varepsilon, S_0}\|_{L_2(P_\varepsilon)} = o(1)$ by assumption. Thus,  which implies that the fourth term on the right-hand side of~\eqref{eq: pathwise decom} equals 0. 
    
    For the last term on the right-hand side of \eqref{eq: pathwise decom}, \Cref{lemma:fixed_fun} and the assumption that $\| \dot{\zeta}(f_{0, -S_0})\|_{L_2(P_\varepsilon)} = O(1)$ implies that 
    \[
        \lim_{\varepsilon \to 0}  \int \dot{\zeta}(f_{0, -S_0}) \sd (P_\varepsilon - P_0)/\varepsilon  = \int \dot{\zeta}(f_{0, -S_0})  \dot\ell_0(x,y) \, dP_0(x,y).
    \]
    Thus, we find that
    \[
        \left. \frac{\partial}{\partial \varepsilon} \zeta\left(f_{\varepsilon, -S_{\varepsilon}}, P_\varepsilon \right) \right|_{\varepsilon=0} = \int \dot{\zeta}\left(f_{0, -S_0}\right) \dot{\ell}(x, y) \sd P_0(x, y),
    \]
     which implies that $P \mapsto \zeta(f_{P, -S_P}, P)$ is pathwise differentiable relative to a nonparametric model with nonparametric efficient influence function given by $\dot{\zeta}\left(f_{0, -S_0}\right)  - P_0 \dot\zeta\left(f_{0, -S_0}\right)$. 

     By conditions~\ref{cond: nuisance}--\ref{cond: derivative} and the chain rule, the map $P \mapsto V(f_{P, S_P}, P)$ is pathwise differentiable relative to a nonparametric model with nonparametric efficient influence function given by 
     \[
        \dot{U}_{\zeta}(\zeta(f_{0, -S}, P_0), \eta_0) \left[ \dot{\zeta}(f_{0, -S_0}) - P_0 \dot{\zeta}(f_{0, -S_0})\right] + \dot{U}_{\eta}(\zeta(f_{0, -S}, P_0), \eta_0) \phi_0.
     \]
\end{proof}

\thmasymlinear*
\begin{proof}[\bfseries{Proof of \Cref{thm:asym linearity}}]
    We only provide the proof of asymptotic linearity of $V_{n, -S_n}$ because the proof of asymptotic linearity of $V_{n, -R_{n, [j]}}$ follows the exact same logic.

    We first show that condition~\ref{cond: convergence rate} and asymptotic stability of $S_n$ imply that $\norm{f_{n, -S_n} - f_{0, -S}}_{\s{F}} = o_{\prob_0}(n^{-1/4})$. For any $\varepsilon >0$ and $S \in \s{S}_0$, we have 
    \begin{align*}
        P_0 \left(n^{1/4}\norm{f_{n, -S_n} - f_{0, -S}}_{\s{F}} \geq \varepsilon \right) \leq P_0 \left(n^{1/4}\norm{f_{n, -S_n} - f_{0, -S}}_{\s{F}} \geq \varepsilon, S_n \in \s{S}_0 \right) + P_0 \left(S_n \notin \s{S}_0 \right).
    \end{align*} 
    The second term goes to zero by asymptotic stability. For the first term, $S_n \in \s{S}_0$ implies that $f_{n, -S_n}= f_{n,-S}$ by the definition of asymptotic stability since $S \in \s{S}_0$. Hence, 
    \[ P_0 \left(n^{1/4}\norm{f_{n, -S_n} - f_{0, -S}}_{\s{F}} \geq \varepsilon, S_n \in \s{S}_0 \right) \leq P_0 \left(n^{1/4}\norm{f_{n, -S} - f_{0, -S}}_{\s{F}} \geq \varepsilon \right),\]
    which goes to zero by~\ref{cond: convergence rate}. 

    We now show that if conditions~\ref{cond: local continuity} and \ref{cond: convergence rate}--\ref{cond: limited complexity} hold for all $S \in \s{S}_0$, then
    \[
        \zeta(f_{n, -S_n}, \d{P}_n)  = \zeta(f_{0, -S}, P_0) + \d{P}_n \left[ \dot{\zeta}(f_{0, -S}) - P_0 \dot{\zeta}(f_{0, -S})\right] + o_{\prob_0}(n^{-1/2}).
    \]
    We consider the decomposition
    \begin{align}
        \begin{split}\label{eq: decomposition}
            &\zeta\left(f_{n, -S_n}, \d{P}_n\right) - \zeta\left(f_{0, -S}, P_0 \right) - \d{P}_n \left[ \dot{\zeta}(f_{0, -S}) - P_0 \dot{\zeta}(f_{0, -S})\right] \\
            &\qquad =\left[\zeta(f_{n, -S_n}, \d{P}_n) - \zeta(f_{n, -S_n}, P_0) - (\d{P}_n - P_0) \dot{\zeta}(f_{n, -S_n})\right] \\
            &\qquad\qquad + \left[\zeta\left(f_{n, -S_n}, P_0\right) - \zeta\left(f_{0, -S}, P_0\right) \right] + (\d{P}_n - P_0) \left[\dot{\zeta}\left(f_{n,-S_n}\right) - \dot{\zeta}\left(f_{0,-S}\right) \right]
        \end{split}
    \end{align}
    We will show that each term in square braces on the right-hand side of \eqref{eq: decomposition} is $o_{\prob_0}(n^{-1/2})$. 

    For the first term on the right-hand side of \eqref{eq: decomposition}, we have 
    \begin{align*}
        &\zeta(f_{n, -S_n}, \d{P}_n) - \zeta(f_{n, -S_n}, P_0) - (\d{P}_n - P_0) \dot{\zeta}(f_{n, -S_n}) \\
        & \qquad = \int \dot{\zeta}(f_{n, -S_n}) \sd \d{P}_n - \int \dot{\zeta}(f_{n, -S_n}) \sd P_0 - (\d{P}_n - P_0) \dot{\zeta}(f_{n, -S_n})\\
        &\qquad =  0.
    \end{align*}
    Consider the second term on the right-hand side of \eqref{eq: decomposition}. For $\delta$ and $C$ defined in condition~\ref{cond: local continuity} and any $\varepsilon >0$, by condition~\ref{cond: local continuity} we have 
    \begin{align*}
        & P_0\left(n^{1/2}\abs{\zeta(f_{n, -S_n}, P_0) - \zeta(f_{0, -S}, P_0)} \geq \varepsilon \right)\\
        & \qquad \leq P_0\left(n^{1/2}\left|\zeta(f_{n, -S_n}, P_0) - \zeta(f_{0, -S}, P_0) \right| \geq \varepsilon, \ \norm{f_{n, -S_n} - f_{0, -S}}_{\s{F}} < \delta \right)\\
        & \qquad\qquad + P_0\left(n^{1/2}\left|\zeta(f_{n, -S_n}, P_0) - \zeta(f_{0, -S}, P_0) \right| \geq \varepsilon, \ \norm{f_{n, -S_n} - f_{0, -S}}_{\s{F}} \geq \delta \right)\\
        & \qquad \leq P_0\left(n^{1/2} \abs{ \int \left[\dot{\zeta}(f_{n, -S_n}) - \dot{\zeta}(f_{0, -S})\right] \sd P_0} \geq \varepsilon, \ \norm{f_{n, -S_n} - f_{0, -S}}_{\s{F}} < \delta \right)\\
        & \qquad \qquad + P_0\left(\norm{f_{n, -S_n} - f_{0, -S}}_{\s{F}} \geq \delta \right)\\
        & \qquad \leq P_0\left(n^{1/2}C \norm{f_{n, -S_n} - f_{0, -S}}_{\s{F}}^2   \geq \varepsilon \right) + P_0\left(\norm{f_{n, -S_n} - f_{0, -S}}_{\s{F}} \geq \delta \right),
    \end{align*}
    which converges to 0 because $\norm{f_{n, -S_n} - f_{0, -S}}_{\s{F}} = o_{\prob_0}(n^{-1/4})$ as shown above. Therefore, the second term on the right-hand side is $o_{\prob_0}(n^{-1/2})$.

    Consider the third term on the right-hand side of \eqref{eq: decomposition}. We have 
    \begin{align*}
        n^{1/2}(\d{P}_n - P_0) \left[\dot{\zeta}(f_{n,-S_n}) - \dot{\zeta}(f_{0,-S})\right]  =  \d{G}_n \left[\dot{\zeta}(f_{n,-S_n}) - \dot{\zeta}(f_{0,-S})\right].
    \end{align*}
    By condition~\ref{cond: local continuity}, $\dot{\zeta}$ is continuous at $f_{0,-S}$, and as shown above, $\|f_{n,-S_n} - f_{0, -S}\|_{\s{F}} = o_{\prob_0}(1)$. This implies that $\|\dot{\zeta}(f_{n,-S_n}) - \dot{\zeta}(f_{0,-S})\|_{L_2(P_0)} = o_{\prob_0}(1)$. Hence, by condition~\ref{cond: limited complexity} and Lemma~19.24 of \cite{van2000asymptotic}, $\d{G}_n \left[\dot{\zeta}(f_{n,-S_n}) - \dot{\zeta}(f_{0,-S})\right] = o_{\prob_0}(n^{-1/2})$.  Therefore, the third term on the right-hand side is $o_{\prob_0}(n^{-1/2})$. 
    
    We have now shown that $\zeta(f_{n, -S_n}, \d{P}_n)  = \zeta(f_{0, -S}, P_0) + (\d{P}_n - P_0) \dot{\zeta}(f_{0, -S}) + o_{\prob_0}(n^{-1/2})$. By conditions~\ref{cond: nuisance}--\ref{cond: derivative} and the delta method, we therefore conclude that 
    \[
        V_{n, -S_n}  = V(f_{0, -S}, P_0) + \d{P}_n \dot{V}_0(f_{0, -S}) + o_{\prob_0}(n^{-1/2}),
    \]
    where $\dot{V}_0(f_{0, -S}) = \dot{U}_{\zeta}(\zeta(f_{0, -S}, P_0), \eta_0) \left[ \dot{\zeta}(f_{0, -S_0}) - P_0\dot{\zeta}(f_{0, -S_0})\right] + \dot{U}_{\eta}(\zeta(f_{0, -S}, P_0), \eta_0)\phi_0$.
\end{proof}

We next introduce lemmas that we use to establish Theorem~\ref{thm: cv asym linear}. 

\begin{lemma} \label{lemma: cv tool 2}
    Suppose that $Z_1, \dotsc, Z_n$ is a sequence of i.i.d. random elements defined on a measurable space $\s{X}\times \s{Y}$ with Borel $\sigma$-field $\s{B}$. Let $h$ be a real-valued function defined on the measurable space $(\s{X}\times \s{Y}, \s{B})$. Then $\E_0 \abs{(\d{P}_n - P_0) h} \leq c n^{-1/2} \| h \|_{L_2(P_0)}$, where $c := (24\log 2)^{1/2}$.
\end{lemma}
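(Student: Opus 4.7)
The plan is to reduce the claim to a direct variance bound. First, I would rewrite $(\d{P}_n - P_0) h$ as $n^{-1}\sum_{i=1}^n [h(Z_i) - P_0 h]$, which is the average of $n$ i.i.d.\ mean-zero random variables each having variance $\mathrm{Var}_0(h) \leq P_0 h^2 = \|h\|_{L_2(P_0)}^2$. Next, by Jensen's inequality applied to the concave map $t \mapsto t^{1/2}$,
\[
    \E_0 \abs{(\d{P}_n - P_0) h} \;\leq\; \left\{ \E_0 \left[(\d{P}_n - P_0) h\right]^2 \right\}^{1/2} \;=\; \sqrt{\mathrm{Var}_0(h)/n} \;\leq\; n^{-1/2} \|h\|_{L_2(P_0)}.
\]
Since $1 \leq (24\log 2)^{1/2}$, the stated bound with $c = (24\log 2)^{1/2}$ follows immediately.

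The main ``obstacle'' is essentially nonexistent: because $h$ is a fixed (non-random, non-supremized) function, the second-moment bound is tight up to the ratio $P_0 h^2 / \mathrm{Var}_0(h)$, and the sharp constant is in fact $1$. The stated constant $(24 \log 2)^{1/2} \approx 4.08$ is therefore loose, and is presumably chosen so that the lemma can be invoked as a special case of a more general maximal inequality (for instance, the Orlicz-norm based bounds of van der Vaart and Wellner, 1996) whose natural constant for a singleton function class is $(24 \log 2)^{1/2}$. An alternative proof route that would recover precisely this constant is to apply Rademacher symmetrization
\[
    \E_0 \Big|\sum_{i=1}^n [h(Z_i) - P_0 h] \Big| \;\leq\; 2\,\E_0 \Big| \sum_{i=1}^n \varepsilon_i h(Z_i)\Big|
\]
together with a Hoeffding-type sub-Gaussian bound on the $\psi_2$-Orlicz norm of $\sum_i \varepsilon_i h(Z_i)$ conditional on $Z_{1:n}$, followed by the standard inequality $\E|X| \leq (\log 2)^{1/2} \|X\|_{\psi_2}$. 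Either route suffices, but I would present the Jensen/variance argument since it is the shortest and already yields the desired inequality.
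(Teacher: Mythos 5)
Your proof is correct. The Jensen/Cauchy--Schwarz step $\E_0\abs{(\d{P}_n - P_0)h} \leq \{\E_0[(\d{P}_n - P_0)h]^2\}^{1/2} = \sqrt{\mathrm{Var}_0(h)/n} \leq n^{-1/2}\norm{h}_{L_2(P_0)}$ is valid, and since the sharp constant $1$ is dominated by $(24\log 2)^{1/2}$, the claimed bound follows.

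The paper itself does not prove the lemma directly; it simply defers to Lemma~1 of \cite{williamson2022general}, and the appearance of the constant $(24\log 2)^{1/2}$ is a telltale sign that the cited proof goes through a symmetrization and $\psi_2$-Orlicz-norm maximal inequality (of the kind in van der Vaart and Wellner, 1996), as you correctly surmise. So your route is genuinely different: it is a direct second-moment bound that is self-contained, shorter, and yields a strictly better constant for a single fixed function $h$, at the cost of not generalizing to suprema over classes of functions. For the purpose this lemma serves in the paper --- controlling the cross-fitted empirical process term at a single, training-set-measurable function, via conditioning and the tower property --- the fixed-function version is exactly what is needed, so your elementary argument suffices and would in fact be a cleaner thing to state. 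The only nit is that the lemma implicitly requires $h \in L_2(P_0)$ (otherwise the right-hand side is infinite and the inequality is vacuous); your argument uses this implicitly and it is worth saying so.
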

\begin{proof}[\bfseries{Proof of Lemma~\ref{lemma: cv tool 2}}]
    The proof can be found in the proof of Lemma~1 of \cite{williamson2022general}.
\end{proof}

\thmcvclassical*
\begin{proof}[\bfseries{Proof of Theorem~\ref{thm: cv asym linear}}]
    As above, we only provide the proof for asymptotic linearity of $\frac{1}{K} \sum_{k=1}^{K} V_{n, k, -S_n}$ because the proof for $\frac{1}{K} \sum_{k=1}^{K} V_{n, k, -R_{n, [j]}}, \ j \in \{1, \dotsc, p\}$ follows the same logic.
   
    By the same argument used in the proof of Theorem~\ref{thm:asym linearity}, condition~\ref{cond: cv convergence rate} and asymptotic stability of $S_n$ imply that $\max_k \norm{f_{n, k, -S_n} - f_{0, -S}}_{\s{F}} = o_{\prob_0}(n^{-1/4})$. We first show that 
    \[
        \zeta(f_{n, k, -S_n}, \d{P}_{n, k})  = \zeta(f_{0, -S}, P_0) + (\d{P}_{n, k} - P_0) \dot{\zeta}(f_{0, -S}) + o_{\prob_0}(n_k^{-1/2}),
    \] 
    for each $k \in \{1, \dotsc, K\}$. We consider the decomposition
    \begin{align}\label{eq: cv decomposition}
        \begin{split}
            & \zeta(f_{n, k, -S_n}, \d{P}_{n, k}) - \zeta(f_{0, -S}, P_0) - (\d{P}_{n, k} - P_0) \dot{\zeta}(f_{0, -S})   \\
            &\qquad = \left[\zeta(f_{n, k, -S_n}, \d{P}_{n, k}) - \zeta(f_{n, k, -S_n}, P_0) - (\d{P}_{n, k} - P_0) \dot{\zeta}(f_{n, k, -S_n})\right] \\
            & \qquad \qquad + \left[\zeta(f_{n, k, -S_n}, P_0) - \zeta(f_{0, -S}, P_0)\right] + (\d{P}_{n, k} - P_0) \left[\dot{\zeta}(f_{n,k,-S_n}) - \dot{\zeta}(f_{0,-S}) \right].
        \end{split}
    \end{align}
    We now show that each term in square braces on the right-hand side of \eqref{eq: cv decomposition} is $o_{\prob_0}(n_k^{-1/2})$.
    
    Consider the first term on the right-hand side of \eqref{eq: cv decomposition}. We have 
    \begin{align*}
        & \zeta(f_{n, k, -S_n}, \d{P}_{n, k}) - \zeta(f_{n, k, -S_n}, P_0) - (\d{P}_{n, k} - P_0) \dot{\zeta}(f_{n, k, -S_n})\\
        & \qquad = \int \dot{\zeta}(f_{n, k, -S_n}) \sd \d{P}_{n, k} - \int \dot{\zeta}(f_{n, k, -S_n}) \sd P_0 - (\d{P}_{n, k} - P_0) \dot{\zeta}(f_{n, k, -S_n})\\
        & \qquad = 0.
    \end{align*}
    Hence the first term on the right-hand side of \eqref{eq: decomposition} is $o_{\prob_0}(n_k^{-1/2})$.

    Consider the second term on the right-hand side of \eqref{eq: cv decomposition}. For $\delta$ and $C$ defined in condition~\ref{cond: local continuity} and any $\varepsilon >0$, by condition~\ref{cond: local continuity} we have 
    \begin{align*}
        & P_0\left(n_k^{1/2} \left|\zeta(f_{n, k, -S_n}, P_0) - \zeta(f_{0, -S}, P_0) \right| \geq \varepsilon \right)\\
        & \qquad \leq P_0\left(n_k^{1/2}\left| \zeta(f_{n, k, -S_n}, P_0) - \zeta(f_{0, -S}, P_0)\right| \geq \varepsilon, \ \norm{f_{n, k, -S_n} - f_{0, -S}}_{\s{F}} < \delta \right)\\
        & \qquad\qquad + P_0\left(n_k^{1/2}\left|\zeta(f_{n, k, -S_n}, P_0) - \zeta(f_{0, -S}, P_0)\right| \geq \varepsilon, \ \norm{f_{n, k, -S_n} - f_{0, -S}}_{\s{F}} \geq \delta \right)\\
        & \qquad \leq P_0\left(n_k^{1/2} \left| \int\left[  \dot{\zeta}(f_{n, k, -S_n}) - \dot{\zeta}(f_{0, -S})\right] \sd P_0\right| \geq \varepsilon, \ \norm{f_{n, k, -S_n} - f_{0, -S}}_{\s{F}} < \delta \right) \\
        & \qquad \qquad + P_0\left(\norm{f_{n, k, -S_n} - f_{0, -S}}_{\s{F}} \geq \delta \right)\\
        & \qquad \leq P_0\left(n_k^{1/2}C \norm{f_{n, k, -S_n} - f_{0, -S}}_{\s{F}}^2  \geq \varepsilon \right) + P_0\left(\norm{f_{n, k, -S_n} - f_{0, -S}}_{\s{F}} \geq \delta \right),
    \end{align*}
    which converge to 0 because $\norm{f_{n, k, -S_n} - f_{0, -S}}_{\s{F}} = o_{\prob_0}(n_k^{-1/4})$ as shown above. Therefore, the second term on the right-hand side is $o_{\prob_0}(n_k^{-1/2})$.

   For the third term on the right-hand side of \eqref{eq: cv decomposition}, we have 
    \begin{align*}
        &P_0\left( n_k^{1/2}\abs{(\d{P}_{n, k} - P_0) \left[\dot{\zeta}(f_{n, k, -S_n}) - \dot{\zeta}(f_{0,-S})\right]} \geq \varepsilon\right) \\
        & \qquad =  P_0\left( \abs{\d{G}_{n, k} \left[\dot{\zeta}(f_{n, k, -S_n}) - \dot{\zeta}(f_{0,-S})\right]} \geq \varepsilon\right)\\
        & \qquad \leq P_0\left( \abs{\d{G}_{n, k} \left[\dot{\zeta}(f_{n, k, -S_n}) - \dot{\zeta}(f_{0,-S})\right]} \geq \varepsilon, \ S_n \in \s{S}_0 \right) + P_0\left( S_n \not \in \s{S}_0 \right)\\
        & \qquad = P_0\left( \abs{\d{G}_{n, k} \left[\dot{\zeta}(f_{n, k, -S}) - \dot{\zeta}(f_{0,-S})\right]} \geq \varepsilon \right) + P_0\left( S_n \not \in \s{S}_0 \right).
    \end{align*}
    The second term on the right-hand side of previous display converges to 0 by assumption of asymptotic stability. Consider the first term on the right-hand side of previous display. Denote the training set for fold $k$ as $\s{D}_{n, k}$. By Chebyshev's inequality and the tower property, for each $S \in \s{S}_0$ we have 
    \begin{align*}
        P_0 \left( \abs{\d{G}_{n, k} [\dot{\zeta}(f_{n, k, -S}) - \dot{\zeta}(f_{0, -S})]} \geq \varepsilon \right) &\leq \frac{E_0 \abs{\d{G}_{n, k} [\dot{\zeta}(f_{n, k, -S}) - \dot{\zeta}(f_{0, -S})]} }{\varepsilon} \\
        &=  \frac{E_0 \left\{ E_0 \left[\abs{\d{G}_{n, k} \left[\dot{\zeta}(f_{n, k, -S}) - \dot{\zeta}(f_{0, -S})\right]} \mid \s{D}_{n, k} \right]\right\} }{\varepsilon}
    \end{align*}
    Conditional on the training data, $\dot{\zeta}(f_{n, k, -S}) - \dot{\zeta}(f_{0, -S})$ is a fixed function, so by Lemma~\ref{lemma: cv tool 2},
    \begin{align*}
       \frac{E_0 \left\{ E_0 \left[\abs{\d{G}_{n, k} \left[\dot{\zeta}(f_{n, k, -S}) - \dot{\zeta}(f_{0, -S})\right]} \mid \s{D}_{n, k} \right]\right\} }{\varepsilon}  &\leq \frac{c_2 E_0\left[ \| \dot{\zeta}(f_{n, k, -S}) - \dot{\zeta}(f_{0, -S}) \|_{L_2(P_0)} \right]}{\varepsilon},
    \end{align*}
    which is $o(1)$ for each $\varepsilon > 0$ by condition~\ref{cond: cv weak consistency}. Therefore, the third term on the right-hand side of \eqref{eq: cv decomposition} is $o_{\prob_0}(n_k^{-1/2})$.
    
    Therefore, $\zeta(f_{n, k, -S_n}, \d{P}_{n, k})  = \zeta(f_{0, -S}, P_0) + (\d{P}_{n, k} - P_0) \dot{\zeta}(f_{0, -S}) + o_{\prob_0}(n_k^{-1/2})$ for each $k \in \{1, \cdots, K\}$. By conditions~\ref{cond: nuisance}--\ref{cond: derivative} and delta method, we therefore have $V_{n, k, -S_n} - V(f_{0, -S}, P_0) =  \d{P}_{n,k} \dot{V}_0(f_{0, -S}) +  o_{\prob}({n_k}^{-1/2})$. We can now write
    \[
        \frac{1}{K} \sum_{k=1}^{K} V_{n, k, -S_n} - V(f_{0, -S}, P_0) = \frac{1}{K} \sum_{k=1}^{K} \d{P}_{n,k} \dot{V}_0(f_{0, -S}) + \frac{1}{K} \sum_{k=1}^{K} o_{\prob}({n_k}^{-1/2}).
    \]
    Since $\max_k \abs{\frac{n}{K n_k} - 1} = o(1)$ by assumption, we can choose $n$ large enough that $\max_k \abs{\frac{n}{K n_k} - 1} \leq 1$, which implies that $\min_k n_k \geq n / (2K)$. Hence, $\max_k n / n_k = O(1)$, so $o_{\prob}({n_k}^{-1/2}) = o_{\prob}({n}^{-1/2})$, which implies that $\frac{1}{K} \sum_{k=1}^{K} o_{\prob}({n_k}^{-1/2}) =  o_{\prob}({n}^{-1/2})$. Note also that $ \abs{\frac{1}{K} \sum_{k=1}^{K} \d{P}_{n,k} \dot{V}_0(f_{0, -S}) - \d{P}_n \dot{V}_0(f_{0, -S}) } \leq \max_k \abs{\frac{n}{K n_k} - 1} \d{P}_n \dot{V}_0(f_{0, -S}) = o(1) O_{\prob_0}(n^{-1/2}) = o_{\prob_0}(n^{-1/2})$.

    We therefore have 
    \[
        \frac{1}{K} \sum_{k=1}^{K} V(f_{n, k, -S_n}, \d{P}_{n,k}) = V(f_{0, -S}, P_0) + \d{P}_{n} \dot{V}_0(f_{0, -S}) + o_{\prob_0}(n^{-1/2}).
    \]
\end{proof}

\clearpage

\section{Verification of condition~\ref{cond: local continuity}}

Here, we present proofs verifying that the conditions stated in Section~\ref{sec:efficiency} imply that condition~\ref{cond: local continuity} holds for the three example predictiveness metrics.

\subsection{R-squared}
    For simplicity, we denote $f_{0, -S}$ and $f_{-S}$ as $f_0$ and $f$, where $f_{-S}$ is an arbitrary function in $\s{F}_{-S}$. Since $\zeta(f, P) := E_P[Y - f(X)]^2$ is linear in $P$, we have $\dot{\zeta}(f)(x, y) = [y - f(x)]^2$. We next show that $f \mapsto \zeta(f)$ is continuous at $f_0$. By H\"{o}lder's inequality, we have 
    \begin{align*}
        \norm{\dot{\zeta}(f) - \dot{\zeta}(f_0)}_{L_2(P_0)} & =  \norm{[y-f]^2 - [y-f_0]^2}_{L_2(P_0)} \\
        & =  \norm{[f_0 - f][2y - f_0 - f]}_{L_2(P_0)} \\
        &\leq  \norm{f_0 - f}_{L_{p}(P_0)}\norm{2y - f_0 - f}_{L_{q}(P_0)}\\
        &\leq \norm{f_0 - f}_{L_{p}(P_0)}\left[ \norm{y - f_0}_{L_{q}(P_0)} + \norm{y - f}_{L_{q}(P_0)} \right],
    \end{align*}
    where $q \in (2, \infty]$ is such that $\int |y - f(x)|^q \sd P_0(y,x) < \infty$ for all $f$ in a neighborhood of $f_0$, and $p \in [2, \infty)$ is such that $2/p + 2/q = 1$.  Hence, the first statement in condition~\ref{cond: local continuity} holds with $\|\cdot\|_{\s{F}} = \|\cdot \|_{L_{p}(P_0)}$. As a special case where $p = 2$ and $q = \infty$, the map $f \mapsto \dot{\zeta}(f)$ is continuous with respect to $L_2(P_0)$-norm provided $|Y - f(X)|$ is bounded $P_0$-almost surely for all $f$ in a neighborhood of $f_0$.

    We next verify that the second statement in condition~\ref{cond: local continuity} holds. We recall that $f_{0, -S}(x) = E_0(Y \mid X_{-S} = x_{-S})$.  By the triangle inequality and law of total expectation,  
    \begin{align*}
        \abs{P_0 \left[ \dot{\zeta}(f) - \dot{\zeta}(f_0) \right]} &= \abs{E_0 \left[ (f_0 - f) (2Y - 2f_0)\right] + E_0 ( f - f_0)^2} \\
        &\leq \abs{ E_0 \left\{ E_0 \left[ (f_0 - f) (2Y - 2f_0) \mid X_{-S} \right]  \right\} } + E_0 ( f - f_0)^2\\
        & = \abs{ E_0 \left\{  (f_0 - f) (2E_0[Y \mid X_{-S} ]  - 2f_0) \right\} } + E_0 ( f - f_0)^2\\
        & = \norm{f - f_0}_{L_2(P_0)}^2\\
        & \leq \norm{f - f_0}_{L_{p}(P_0)}^2, 
    \end{align*}
    for any $p \in [2, \infty]$. This implies the second statement in condition~\ref{cond: local continuity}.

\subsection{Deviance}
    For simplicity, we denote $f_{0, -S}$ and $f_{-S}$ as $f_0$ and $f$, where $f_{-S}$ is an arbitrary function falls in $\s{F}_{-S}$. Since $\zeta(f, P) := E_P[\nu(Y, f(X))]$ is linear in $P$, we have $\dot{\zeta}(f)(x, y) = \nu(y, f(x))$.  We next show that $f \mapsto \dot\zeta(f)$ is continuous at $f_0$. We have
    \begin{align*}
        \dot{\zeta}(f) - \dot{\zeta}(f_0) &= \nu(y, f(x)) - \nu(y, f_0(x))  \\
        & = y \log\frac{f(x)}{f_0(x)} + (1-y)\log \frac{1-f(x)}{1-f_0(x)}.
    \end{align*}
    We now consider the Taylor expansion of the function $u \mapsto g_{x, y}(u) := y \log\frac{u}{f_0(x)} + (1-y)\log \frac{1-u}{1-f_0(x)}$ at $u = f_0(x)$. By Taylor's Theorem with the Lagrange form of the remainder, we have
    \begin{align*}
        g_{x, y}(f(x)) &= g_{x, y}(f_0(x)) + [f(x) - f_0(x)]  \left.\frac{\sd g_{x, y}(u)}{\sd u} \right|_{u = \eta_{x, y}} =  [f(x) - f_0(x)] \left[ \frac{y}{\eta_{x, y}} - \frac{1-y}{1-\eta_{x, y}}\right]
    \end{align*}
    where $\eta_{x, y}$ lies between $f(x)$ and $f_0(x)$. We therefore have 
    \begin{align*}
        \norm{\dot{\zeta}(f) - \dot{\zeta}(f_0)}_{L_2(P_0)} = &\norm{[f - f_0] \left[ \frac{y}{\eta_{x, y}} - \frac{1-y}{1-\eta_{x, y}}\right]}_{L_2(P_0)}\\
        \leq & \sup_{x, y} \left| \frac{y}{\eta_{x, y}} - \frac{1-y}{1-\eta_{x, y}}\right| \norm{f - f_0}_{L_2(P_0)}.
    \end{align*}
    By assumption, there exists $\alpha \in (0, 1/2)$ such that $f_0(X), f(X) \in (\alpha, 1-\alpha)$ almost surely, which implies that $\eta_{X, Y} \in (\alpha, 1-\alpha)$ almost surely as well. Hence, 
    \[
        \sup_{x, y} \left| \frac{y}{\eta_{x, y}} - \frac{1-y}{1-\eta_{x, y}}\right| \leq \sup_{x, y} \left| \frac{y}{\eta_{x, y}(1-\eta_{x, y})}\right| + \sup_{x, y}\left| \frac{\eta_{x, y}}{\eta_{x, y}(1-\eta_{x, y})}\right| \leq \frac{1}{\alpha(1-\alpha)} + \frac{1}{\alpha} = \frac{2-\alpha}{\alpha(1-\alpha)},
    \]
    which implies that $f \mapsto \dot\zeta(f)$ is continuous at $f_0$.
    
    We next verify that the second statement in condition~\ref{cond: local continuity} holds. We recall that $f_{0, -S}(x) = E_0(Y \mid X_{-S} = x_{-S})$.  By definition of $\dot{\zeta}$, $\nu$ and the law of total expectation,  
    \begin{align*}
        \abs{P_0 \left[ \dot{\zeta}(f) - \dot{\zeta}(f_0) \right]} &= \abs{ E_0 \left\{Y \log\frac{f(X)}{f_0(X)} + (1-Y)\log \frac{1-f(X)}{1-f_0(X)} \right\} }\\
        & = \abs{ E_0 \left\{f_0(X) \log\frac{f(X)}{f_0(X)} + [1-f_0(X)]\log \frac{1-f(X)}{1-f_0(X)} \right\} }.
    \end{align*}
    We now consider the Taylor expansion of the function $u \mapsto h_x(u) := f_0(x) \log\frac{u}{f_0(x)} + [1-f_0(x)]\log \frac{1-u}{1-f_0(x)}$ at $u = f_0(x)$. By Taylor's Theorem with the Lagrange form of the remainder, we have
    \begin{align*}
        h_x(f(x)) &= h_x(f_0(x)) + [f(x) - f_0(x)]  \left.\frac{\sd h_x(u)}{\sd u} \right|_{u = f_0(x)} + \frac{1}{2} [f(x) - f_0(x)]^2 \left.\frac{\sd^2 h_x(u)}{\sd u^2} \right|_{u = \xi_x}\\
        & = 0 + [f(x) - f_0(x)] \times 0 - \frac{1}{2} [f(x) - f_0(x)]^2 \left[ \frac{f_0(x)}{\xi_x^2} + \frac{1-f_0(x)}{(1-\xi_x)^2}\right]\\
        & = - \frac{1}{2} [f(x) - f_0(x)]^2 \left[ \frac{f_0(x) - \xi_x}{\xi_x(1-\xi_x)}\right]^2,
    \end{align*}
    where $\xi_x$ lies between $f(x)$ and $f_0(x)$. We therefore have 
    \begin{align*}
        \abs{P_0 \left[ \dot{\zeta}(f) - \dot{\zeta}(f_0) \right]} &= \abs{ E_0 \left\{ - \frac{1}{2} [f(x) - f_0(x)]^2 \left[ \frac{f_0(x) - \xi_x}{\xi_x(1-\xi_x)}\right]^2 \right\} }\\
        & \leq \frac{1}{2} \sup_x \left| \frac{f_0(x) - \xi_x}{\xi_x(1-\xi_x)}\right|^2 \norm{f - f_0}_{L_2(P_0)}^2.
    \end{align*}
    By assumption, there exists $\alpha \in (0, 1/2)$ such that $f_0(X), f(X) \in (\alpha, 1-\alpha)$ almost surely, which implies that $\xi_X \in (\alpha, 1-\alpha)$ almost surely as well. Hence,
    \[
        \sup_x \left| \frac{f_0 - \xi}{\xi_x(1-\xi_x)} \right| \leq \sup_x \left| \frac{f_0(x)}{\xi_x(1-\xi_x)} \right| + \sup_x \left| \frac{\xi_x}{\xi_x(1-\xi_x)}\right|  \leq 2 \frac{1-\alpha}{\alpha(1-\alpha)} = \frac{2}{\alpha},
    \]
    which shows the second statement in condition~\ref{cond: local continuity}.

\subsection{Classification accuracy}
     For simplicity, we denote $f_{0, -S}$ and $f_{-S}$ as $f_0$ and $f$, where $f_{-S}$ is an arbitrary function falls in $\s{F}_{-S}$. Since $\zeta(f, P) := E_PI\{Y = f(X)\}$ is linear in $P$, we have $\dot{\zeta}(f)(x, y) = I\{y = f(x)\}$. We next show that $f \mapsto \zeta(f)$ is continuous at $f_0$. We note that 
    \begin{align*}
        &\norm{\dot{\zeta}(f) - \dot{\zeta}(f_0)}_{L_2(P_0)}^2 \\
        & \qquad = \norm{I\{y = f(x)\} - I\{y = f_0(x)\}}_{L_2(P_0)}^2 \\
        & \qquad = \int I\{y = f(x)\} + I\{y = f_0(x)\} - 2 I\{y = f(x), y=f_0(x)\}  \sd P_0\\
        & \qquad = P_0(Y=f(X), Y=f_0(X)) + P_0(Y=f(X), Y\neq f_0(X))\\
        & \qquad \qquad + P_0(Y=f_0(X), Y=f(X)) + P_0(Y=f_0(X), Y\neq f(X))\\
        & \qquad \qquad - 2P_0(Y=f(X), Y=f_0(X))\\
        & \qquad = P_0(Y=f(X), Y\neq f_0(X)) + P_0(Y=f_0(X), Y\neq f(X))\\
        & \qquad = P_0(Y=1, f(X)=1, f_0(X)=0) + P_0(Y=0, f(X)=0, f_0(X)=1)\\
        & \qquad \qquad + P_0(Y=0, f(X)=1, f_0(X)=0) + P_0(Y=1, f(X)=0, f_0(X)=1)\\
        & \qquad = \left\{ P_0\left(Y=1 \mid f(X)=1, f_0(X)=0\right) + P_0\left(Y=0 \mid f(X)=1, f_0(X)=0\right) \right\} p_1\\
        & \qquad \qquad + \left\{ P_0\left(Y=1 \mid f(X)=0, f_0(X)=1\right) + P_0\left(Y=0 \mid f(X)=0, f_0(X)=1\right) \right\} p_2\\
        & \qquad = p_1 + p_2,
    \end{align*}
    where $p_1 := P_0(f(X)=1, f_0(X)=0)$ and $p_2 := P_0(f(X)=0, f_0(X)=1)$. Recall the definition of $f(x) = I\{\mu(x) > 0.5\}$, we therefore have 
    \begin{align*}
        \norm{\dot{\zeta}(f) - \dot{\zeta}(f_0)}_{L_2(P_0)}^2 = & P_0(\mu(X) > 1/2 \geq \mu_0(X)) + P_0(\mu_0(X) > 1/2 \geq \mu(X))\\
        \leq & 2 P_0( |\mu_0(X) - 1/2| \leq |\mu(X) - \mu_0(X)| )\\
        \leq & 2 \int \frac{|\mu(X) - \mu_0(X)|}{|\mu_0(X) - 1/2|} \sd P_0\\
        \leq & 2 \norm{\mu - \mu_0}_\infty \int \frac{1}{|\mu_0(X) - 1/2|} \sd P_0, 
    \end{align*}
    which implies the first statement in condition~\ref{cond: local continuity} provided $P_0|\mu_0 - 1/2|^{-1}  < \infty$.
     
     We next verify that the second statement in condition~\ref{cond: local continuity} holds. Recall that $f_{0, -S} = I\{\mu_{0, -S} > 0.5\}$, where $\mu_{0, -S}(x_{-S}) = E_0(Y \mid X_{-S} = x_{-S})$. We have 
    \begin{align*}
        & \abs{P_0 \left[ \dot{\zeta}(f) - \dot{\zeta}(f_0) \right]} \\
        & \qquad = \abs{P_0(Y = f(X)) - P_0(Y = f_0(X))}\\
        & \qquad = \abs{P_0(Y = f(X), Y \neq f_0(X)) - P_0(Y = f_0(X), Y \neq f(X))}\\
        & \qquad = \left|P_0(Y=1, f(X)=1, f_0(X)=0) + P_0(Y=0, f(X)=0, f_0(X)=1) \right.\\
        & \qquad  \qquad \left. - P_0(Y=0, f(X)=1, f_0(X)=0) - P_0(Y=1, f(X)=0, f_0(X)=1)\right|\\
        & \qquad  = \left|\left[ P_0 \left(Y=1 \, \Big| \, \mu(X) > \tfrac{1}{2} \geq \mu_0(X)\right) - P_0\left(Y=0 \, \Big| \, \mu(X) > \tfrac{1}{2} \geq \mu_0(X)\right) \right] P_0\left(\mu(X) > \tfrac{1}{2} \geq \mu_0(X)\right)  \right.\\
        & \qquad \qquad + \left.  \left[ P_0\left(Y=0 \, \Big| \, \mu_0(X) > \tfrac{1}{2} \geq \mu(X)\right) - P_0 \left(Y=1 \, \Big| \, \mu_0(X) > \tfrac{1}{2} \geq \mu(X)\right) \right] P_0\left(\mu_0(X) > \tfrac{1}{2} \geq \mu(X)\right)\right|\\
        & \qquad  = \left|\left[ 2P_0\left(Y=1 \, \Big| \, \mu(X) > \tfrac{1}{2} \geq \mu_0(X)\right) - 1 \right] P_0\left(\mu(X) > \tfrac{1}{2} \geq \mu_0(X)\right)  \right.\\
        & \qquad \qquad + \left.  \left[ 2P_0\left(Y=0 \, \Big| \, \mu_0(X) > \tfrac{1}{2} \geq \mu(X)\right) - 1 \right] P_0\left(\mu_0(X) > \tfrac{1}{2} \geq \mu(X)\right)\right|.
    \end{align*}
    We also have
    \begin{align*}
        \left| P_0\left(Y=1 \, \Big| \, \mu(X) > \tfrac{1}{2} \geq \mu_0(X)\right) - \tfrac{1}{2}\right| & = \left| E_0 \left\{\mu_0(X) - \tfrac{1}{2} \, \Big| \, \mu(X) > \tfrac{1}{2} \geq \mu_0(X)\right\} \right| \\
        & \leq \norm{\mu - \mu_0 }_{\infty}.
    \end{align*}
    
    Similarly, we have
    \begin{align*}
       \left| P_0\left(Y=0 \mid \mu_0(X) > \tfrac{1}{2} \geq \mu(X)\right) - \tfrac{1}{2} \right| &\leq \norm{\mu - \mu_0 }_{\infty}.
    \end{align*}
    In addition, by Chebyshev's inequality, 
    \begin{align*}
        P_0\left(\mu_0(X) > \tfrac{1}{2} \geq \mu(X)\right) & \leq P_0\left(\abs{\mu_0 - \tfrac{1}{2}} < \abs{\mu - \mu_0}\right) \leq  P_0 \left[\frac{|\mu - \mu_0|}{|\mu_0 - 1/2|} \right]\\
        & \leq \norm{\mu - \mu_0}_\infty P_0\left[ \frac{1}{|\mu_0 - 1/2|}\right] ,
    \end{align*}
    and similarly, $P_0\left(\mu(X) > \frac{1}{2} \geq \mu_0(X)\right) \leq \norm{\mu - \mu_0}_\infty P_0 \frac{1}{|\mu_0 - 1/2|}$. We therefore have 
    \begin{align*}
        \abs{P_0 \left[ \dot{\zeta}(f) - \dot{\zeta}(f_0) \right]} \leq \norm{\mu - \mu_0}_\infty^2 \int \frac{1}{|\mu_0(X) - 1/2|} \sd P_0.
    \end{align*}
    Hence, the second statement in condition~\ref{cond: local continuity} holds provided $P_0|\mu_0 - 1/2|^{-1}  < \infty$.

\newpage
\section{Additional results from the numerical studies}

\begin{figure}[!htbp]
    \centering
    \includegraphics[width=\linewidth]{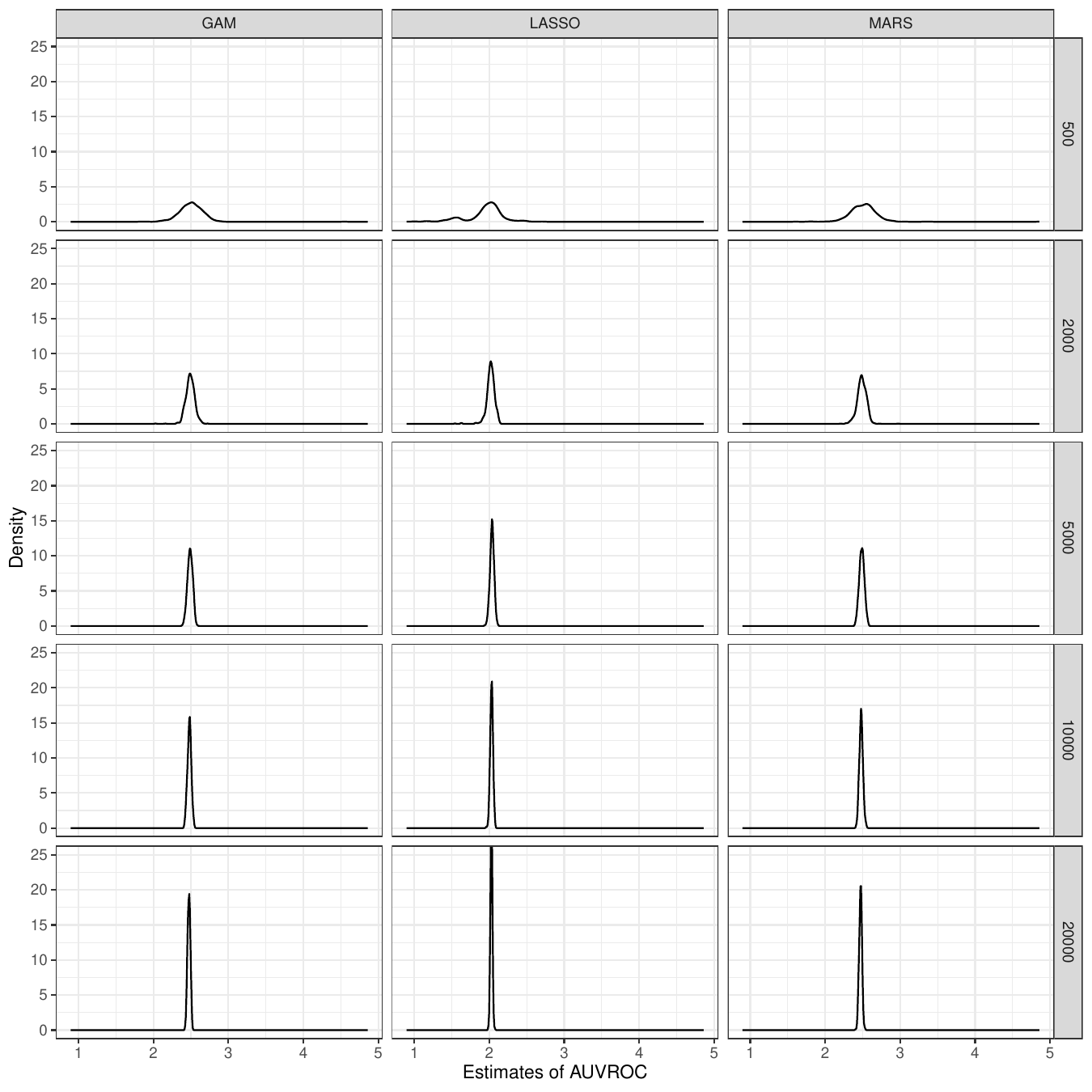}
    \caption{Estimated densities of the sampling distribution of the AUVROC estimators under different ranking algorithms and sample sizes from the numerical studies.}
    \label{fig:hist auc}
\end{figure}

\begin{figure}[!htbp]
    \centering
    \includegraphics[width=\linewidth]{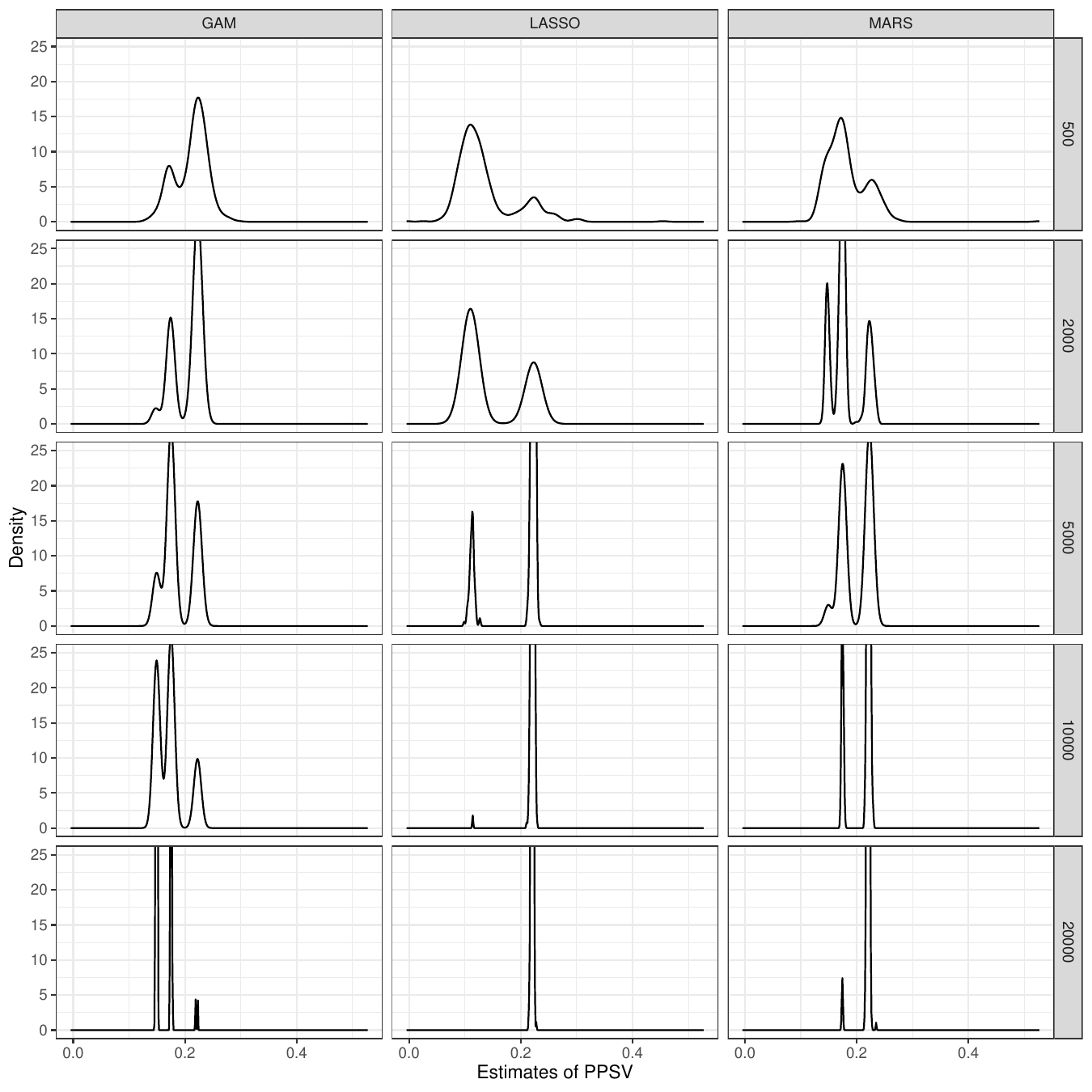}
    \caption{Estimated densities of the sampling distribution of the PPSV estimators under different selection algorithms and sample sizes from the numerical studies.}
    \label{fig:hist ave}
\end{figure}

\end{document}